\newcolumntype{L}[1]{>{\raggedright\arraybackslash}p{#1}}
\newcolumntype{C}[1]{>{\centering\arraybackslash}p{#1}}
\newcolumntype{R}[1]{>{\raggedleft\arraybackslash}p{#1}}
\long\def\comment#1{}
\newcommand{\nop}[1]{}
\newtheorem{theorem}{\bf Theorem}[section]
\newtheorem{lemma}{\bf Lemma}[section]
\newtheorem{example}{\bf Example}
\theoremstyle{remark}
\theoremstyle{definition}
\newtheorem{definition}{\bf Definition}
\newcommand{\revision}[1]{\color{black}{#1} \color{black}}
\begin{document}

\title{GridTuner: Reinvestigate Grid Size Selection for Spatiotemporal Prediction Models
}

\author{
	{Jiabao Jin{\small $~^{*}$}, Peng Cheng{\small $~^{*}$}, Lei Chen{\small $~^{\dagger}$}, Xuemin Lin{\small $~^{\#, *}$}, Wenjie Zhang{\small $~^{\#}$} }\\
	\fontsize{10}{10}\selectfont\itshape
	$~^{*}$East China Normal University, Shanghai, China\\
	jiabaojin@163.com, pcheng@sei.ecnu.edu.cn\\
	\fontsize{10}{10}\selectfont\itshape
	$~^{\dagger}$The Hong Kong University of Science and Technology, Hong Kong, China\\
	leichen@cse.ust.hk\\
	\fontsize{10}{10}\selectfont\itshape
	$~^{\#}$The University of New South Wales, Australia\\
	\fontsize{9}{9}\selectfont\itshape
	lxue@cse.unsw.edu.au, wenjie.zhang@unsw.edu.au
}

\maketitle

\begin{abstract}
With the development of traffic prediction technology, spatiotemporal prediction models have attracted more and more attention from academia communities and industry.  However, most existing researches focus on reducing model's prediction error but ignore the error caused by the uneven distribution of spatial events within a region. In this paper, we study a region partitioning problem, namely optimal grid size selection problem (OGSS), which aims to minimize the real error of spatiotemporal prediction models by selecting the optimal grid size. In order to solve OGSS, we analyze the upper bound of real error of spatiotemporal prediction models and minimize the real error by minimizing its upper bound. Through in-depth analysis, we find that the upper bound of real error will decrease then increase when the number of model grids increase from 1 to the maximum allowed value. Then, we propose two algorithms, namely Ternary Search and Iterative Method, \revision{to automatically find} the optimal grid size. Finally, \revision{the experiments verify} that the error of prediction has the same trend as its upper bound, and the change trend of the upper bound of real error with respect to the increase of the number of model grids will decrease then increase. Meanwhile, in a case study, by selecting the  optimal grid size, the order dispatching results of a state-of-the-art prediction-based algorithm can be improved up to 13.6\%, which shows the effectiveness of our methods on tuning the region partition for spatiotemporal prediction models.
\end{abstract}

\section{Introduction}

Recently, many spatiotemporal prediction models are proposed  to predict the number of events (e.g., online car-hailing requests \cite{tong2017flexible, wang2020demand, cheng2019queueing}, or street crimes \cite{mohler2011, deepcrime}) within a region (e.g., a gird of 1km$\times$1km) in a period (e.g., next 30 minutes) \cite{zhang2017deep, li2015bike, zhao2016predict}. With the help of predicted information of events, we can improve the platform revenue of online car-hailing systems (e.g., Uber \cite{uber-web}), or reduce crimes effectively through optimizing patrol route of police \cite{deepcrime}.

One common assumption  of spatiotemporal prediction models is that the distribution of spatial events within a region is uniform \cite{tong2017flexible, zhang2017deep, wang2020demand}, which is in fact almost never true in real scenarios. In addition, the selection of region size is mostly decided by experts' experience or simple experimental tests without detailed analysis in many existing research studies:

\begin{itemize}[leftmargin=*]
	\item ``We use 20$\times$30 = 600 grids to cover the cities and one grid represents a 0.01 (longitude)$\times$0.01 (latitude) square'' \cite{tong2017flexible}
	\item The authors use 32$\times$32 grids to cover Beijing area and 16$\times$8 grids for New York City area~\cite{zhang2017deep}. 
	\item ``For the prediction model, DeepST, we set the default grid size as 2km$\times$2km ...'' \cite{wang2020demand}
\end{itemize}

Under the uniform region assumption, the spatiotemporal prediction models are optimized to minimize the model error in model grids (i.e., the difference between the predicted and actual number of spatial events in grids used in the prediction model), which may lead to dramatic real errors in some smaller regions (i.e., the difference between the predicted and the actual number of spatial events in smaller and homogeneous grids). Then, the overall performance of frameworks utilizing spatiotemporal prediction models will not be optimized for real applications. For example, with careful grid size selection, the overall performance of a state-of-the-art prediction based online spatial crowdsourcing framework \cite{tong2017flexible} can be increased up to 13.6\% (shown in a case study in Section \ref{sec:experimental}).

 We illustrate this challenge with the following example:

\begin{figure}\centering
	\scalebox{0.14}[0.14]{\includegraphics{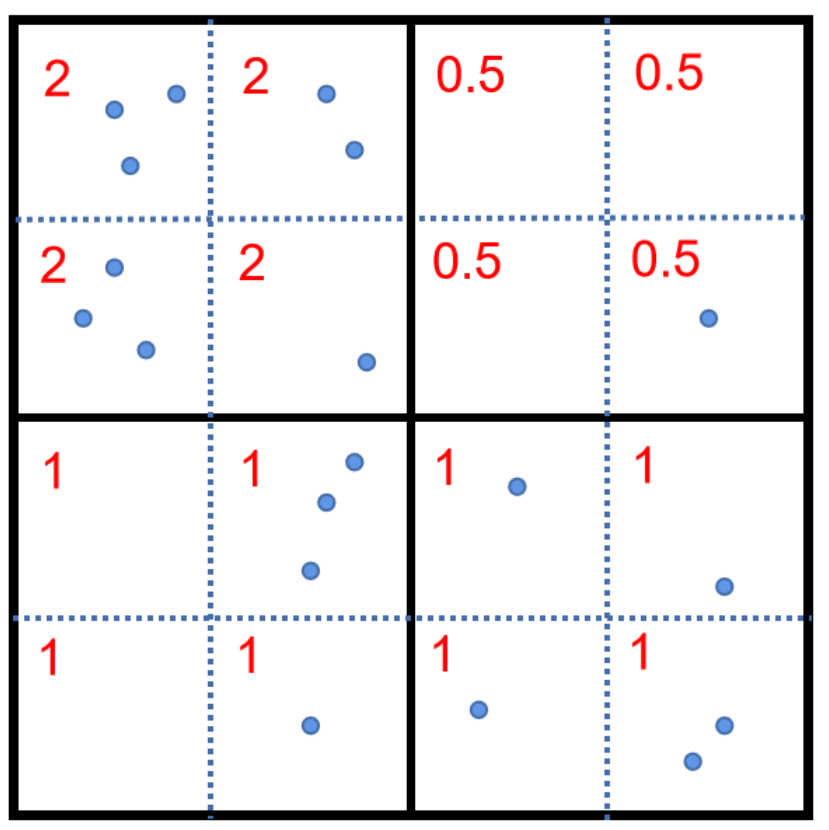}}
	\caption{\small Forecast and Actual Distribution of Orders in  Grids}
	\label{fig:bbq_example}\vspace{-3ex}
\end{figure}

\begin{example}
	\label{exp:expression_error}
	As shown in Figure \ref*{fig:bbq_example}, the solid black lines divide the space into four model grids to be predicted. The blue dot lines further divide each model grid into four smaller grids. We can use spatiotemporal prediction models to predict the number of events in each model grid. In the absence of prior knowledge of the distribution of events within a model grid, the models generally assume that the distribution of events within a model grid is uniform, which means that the number of events in each smaller grid within a same model grid is equal to each other. \revision{Thus, we can estimate the predicted number of each smaller grid through averaging the predicted result of the corresponding model grid. The red number shown in Figure \ref*{fig:bbq_example} denotes the predicted number of events for each smaller grid. The predicted result for each model grid is the summation of the number of events for all its smaller grids.} We can directly calculate the model error of the prediction model on large grids is $3$ (= $|8-9|+|2-1|+|4-4|+|4-5|$). Nevertheless, if the model error is calculated based on smaller grids, it will increase to $10$ \revision{(= $|2-3|+|2-2|+|0.5-0|+|0.5-0|+|2-3|+|2-1|+|0.5-0|+|0.5-1|+|1-0|+|1-3|+|1-1|+|1-1|+|1-0|+|1-1|+|1-1|+|1-2|$).} The reason is that the distribution of events in each large grid is uneven, which is ignored in almost all existing studies. 
\end{example}

\textit{Why not directly predict the spatial events for each smaller grid?} The reason is twofold.
Firstly, due to the uncertainty of spatial events, it is too hard to accurately predict the number of spatial events in a very small grid (e.g., 100m$\times$100m).
\revision{ When the size of grid is too small, there will be no enough historical data for prediction model to learn the distribution of the spatial events in each small area. In addition, the number of spatial events in a small grid is also small, then the accuracy (relative error) of prediction models will be dramatically affected by the randomness of the spatial events, since the uncertainty of spatial events is inevitable \cite{yao2018deep, tong2017flexible, chen2018pcnn}.}
 Secondly, the computation complexity of prediction models will increase remarkably when the number of grids increases \cite{yao2018deep, yu2017spatiotemporal}. Thus, almost all spatiotemporal prediction models still use relatively large grids (e.g., grids of 2km$\times$2km) as the prediction units.

\textit{Can we have an automatic and theoretic-guaranteed optimal grid size selection method to minimize the overall real error of spatiotemporal prediction models?} To overcome this long-standing challenge, in this paper we study the \textit{optimal grid size selection} (OGSS) problem to guide the configuration of grid size such that the real errors are minimized for spatiotemporal prediction models in real applications.

In this paper, we reinvestigate the grid size selection problem in spatiotemporal prediction models in detail. We assume that the distribution of the spatial events in a small enough grid (e.g., 100m$\times$100m) can be considered homogeneous. Then, the real error of a spatiotemporal prediction model is evaluated through the total difference between the predicted number and real number of spatial events among all small grids. Specifically, we decompose the real error of the prediction models into the model error and the expression error. Here, the model error indicates the inherent error of the prediction models, and the expression error stands for the error of using the predicted number of events in a large grid to express the future events in its inner smaller grids (as illustrated in Example \ref{exp:expression_error}). We prove that the summation of model error and expression error of a spatiotemporal prediction model is an upper bound of its real error. We also verify that for any spatiotemporal prediction model, with the increase of the size of grids, the upper bound of its real error will first decrease then increase. Based on our theoretical analysis, we propose two algorithms, namely Ternary Search algorithm and Iterative algorithm, to quickly find the optimal size of model grids for a given spatiotemporal prediction model.

To summarize, we make the following contributions:
\begin{itemize}[leftmargin=*]
	\item We formally define a novel metric, real error, to measure the deviation between the model's forecast and the actual number of events for homogeneous grids. Then, we propose a new problem, namely optimal grid size selection (OGSS), to automatically find the optimal grid size for a given spatiotemporal prediction model in Section \ref{sec:problemDefinition} 
	\item  We analyze the  upper bound of the real error and the relationship between the number of model grids and the  upper bound in Section \ref{sec:erroranalysis}. Then, we propose two algorithms to find the optimal grid size to minimize the upper bound of real error in Section \ref{sec:solution2}.
	\item We conduct  sufficient experiments on different spatiotemporal prediction models to explore the influencing factors of real errors in Section \ref{sec:experimental}.
\end{itemize}

We review the related studies in Section \ref{sec:related} and conclude this paper in Section \ref{sec:conclusion}.

\section{Preliminaries}
\label{sec:problemDefinition}
In this section, we will introduce some basic concepts and present the formal definition of model grid, homogeneous grid, model error,  expression error and real error. We prove that the summation of model error and expression error is an upper bound of real error. 
\subsection{Basic Concepts}
Without loss of generality, in this paper, we consider that the spatiotemporal prediction models will first divide the whole space into $n$ rectangular model grids, then predict the number of spatial events for each model grid in a given future time period. When the size of a grid is small enough (e.g., 100m$\times$100m), the distribution of spatial events can be considered uniform for most application scenarios (e.g., online car-hailing systems). Under this assumption, each model grid will be further divided into $m$ smaller homogeneous grids, where predicted spatial events is considered uniformly distributed in each homogeneous grid. \revision{To ensure HGrids are small enough, it is required that the number of HGrids is larger than $N$ (i.e., $mn>N$). We note $N$ as the minimum number that makes the distribution of spatial events in each HGrid itself is uniform. We propose a method to select a proper value of $N$ in Section \ref{sec:region_depart}.}
We formally define the model grids and the homogeneous grids as follows:
\begin{definition}[Model Grid, MGrid]
	The whole space \revision{is divided} into $n$ same-sized model grids $\{r_1, r_2, \cdots, r_n\}$. The number of actual spatial events happening in $r_i$ in the next period is noted as $\lambda_i$. A spatiotemporal prediction model will predict the number, $\hat{\lambda}_i$, of spatial events  that happening in each model grid $r_i$ in the next period. 
\end{definition}

\begin{definition}[Homogeneous Grid, HGrid]
	Each model grid $r_i$ can be further evenly divided into $m$ homogeneous grids $\{r_{i1}, r_{i2}, \cdots, r_{im}\}$.
	For a homogeneous grid $r_{ij}$, the number of actual spatial events happening in it in the next period is marked as $\lambda_{ij}$ (i.e., $\lambda_i=\sum_{j=1}^{m}\lambda_{ij}$).
\end{definition}

\begin{figure}[t!]\centering\vspace{-3ex}
	\scalebox{0.12}[0.12]{\includegraphics{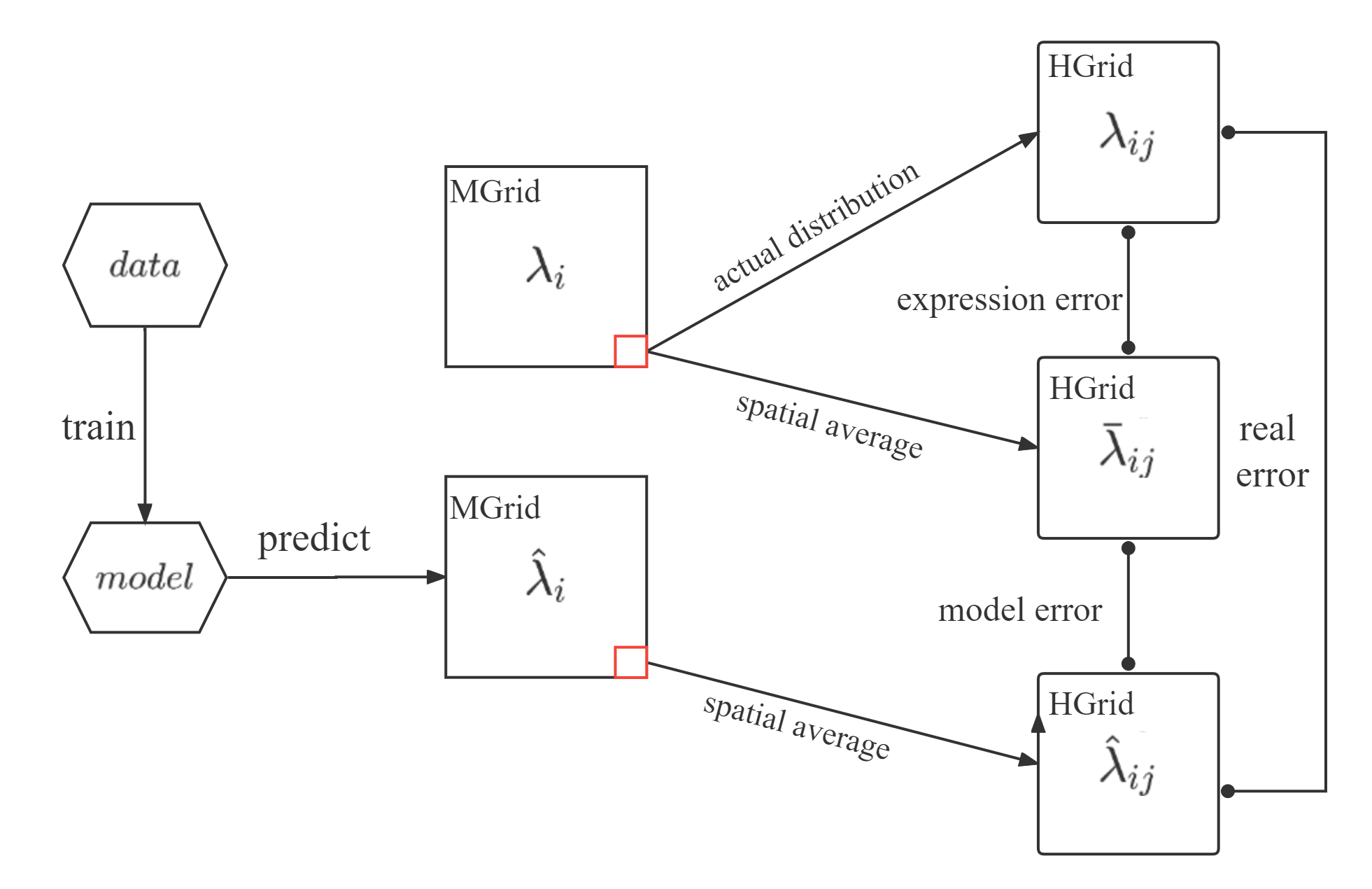}}
	\caption{\small An Illustration of Relationships between Basic Concepts.}
	\label{fig:notation_explain}\vspace{-3ex}
\end{figure}

In the absence of any prior knowledge of the distribution of the spatial events in a model, we assume that the number of spatial events of each HGrid in the MGrid is same to each other according to the principle of maximum entropy \cite{guiasu1985principle}. Thus,  with the actual number, $\lambda_{i}$, of spatial events  in  MGrid $r_i$, the estimated  number of spatial events of  HGrid $r_{ij}$ is denoted as  $\bar{\lambda}_{ij}=\frac{\lambda_i}{m}=\frac{\sum_{j=1}^{m}{\lambda_{ij}}}{m}$.  Similarly, with the predicted number, $\hat{\lambda}_{i}$,  of spatial events of MGrid $r_{i}$, we can have the predicted number of spatial events of HGrid $r_{ij}$ as $\hat{\lambda}_{ij}=\frac{\hat{\lambda}_i}{m}$. 

The differences between $\lambda_{ij}$, $\bar{\lambda}_{ij}$ and $\hat{\lambda}_{ij}$ lead to three types of errors: model error, expression error and real error. Figure \ref{fig:notation_explain} illustrates the three types of errors. The frameworks utilizing spatiotemporal prediction models is hard to get the information about the real distribution of spatial events from the models. Thus, the prediction result $\hat{\lambda}_i$ of a MGrid $r_{i}$ will be divided equally into HGrids without any prior information. The real error describes the difference between the actual number of spatial events $\lambda_{ij}$ and the predicted result $\hat{\lambda}_{ij}$ of HGrid $r_{ij}$. However, it is difficult to calculate the real error directly. Therefore, $\bar{\lambda}_{ij}$ is introduced to decompose the real error into expression error and model error. We formally define the real error model error and expression error as follows:

\begin{definition}[Real Error]
\revision{For a HGrid $r_{ij}$,  its real error $E_r(i,j)$ is defined as the mean/average of difference between its predicted and actual numbers of spatial events in the corresponding same time periods of the historical days for the next period.  It means {\scriptsize$E_r\left(i,j\right)=\mathbb{E}_{\lambda_{ij}\sim P}(|\hat{\lambda}_{ij} - \lambda_{ij}|)$}, where $\lambda_{ij}$ follows a given distribution $P$.}
\end{definition}

In practice, it is difficult to calculate the difference $|\hat{\lambda}_{ij} - \lambda_{ij}|$ without the information about the number $\lambda_{ij}$ of events in next period. Thus, we define the real error as the mean of the difference $|\hat{\lambda}_{ij} - \lambda_{ij}|$. \revision{However, due to the lack of a sufficient number of samples, it is difficult for us to calculate $E_r\left(i,j\right)$ accurately because $\lambda_{ij}$ does not follow the same distribution for different time periods. Another factor is that the environment is prone to change over a long period, so the number of events does not follow the same distribution either. As a result, we use the number of events in the same time period on each day of the previous one month to estimate real error. Suppose that we have a set $\Lambda_{ij}$ of the actual events number, $\lambda_{ij}$, and its corresponding prediction number, $\hat{\lambda}_{ij}$, we can estimate $E_r\left(i,j\right)$ as follows:
{\scriptsize
$$
E_r\left(i,j\right)= \mathbb{E}_{\lambda_{ij}\sim P}(|\hat{\lambda}_{ij} - \lambda_{ij}|)= \frac{1}{\left|\Lambda_{ij}\right|}\sum_{(\hat{\lambda}_{ij},\lambda_{ij})\in \Lambda_{ij}}|\hat{\lambda}_{ij} - \lambda_{ij}|
$$
}
}
\begin{definition}[Model Error]
	\revision{For a HGrid $r_{ij}$, its model error $E_m(i,j)$ is the mean/average of difference between its predicted and estimated numbers of spatial events in the corresponding same time periods of the historical days for the next period. It means {\scriptsize$E_m(i,j)=\mathbb{E}_{\lambda_{ij}\sim P}(|\hat{\lambda}_{ij} - \bar{\lambda}_{ij}|)$}, where $\lambda_{ij}$ follows a given distribution $P$.}
\end{definition}

\begin{definition}[Expression Error]
	\revision{For a HGrid $r_{ij}$, its expression error $E_e(i,j)$ is the mean/average of difference between its estimated and  actual numbers of spatial events in the corresponding same time periods of the historical days for the next period. It means {\scriptsize$E_e(i,j)=\mathbb{E}_{\lambda_{ij}\sim P}(|\bar{\lambda}_{ij} - \lambda_{ij}|)$}, where $\lambda_{ij}$ follows a given distribution $P$.}
\end{definition}

In the previous analysis, we explained that the grid size selection would significantly affect the real error. This paper aims to find an optimal size that minimizes the summation of real errors in all HGrids. We formally define the problem as follows:

\begin{definition}[Optimal Grid Size Selection Problem, OGSS]
	For a given number of all HGrids $N$, and a given model to predict the number of spatial events for MGrids in next period, the optimal grids size selection problem is to find the optimal $n$ to minimize the summation of real error of all HGrids under the constraint $nm>N$, which is:
	{\small\begin{alignat}{2}
    	& \min\limits_{n}&  & \sum_{i=1}^{n}{\sum_{j=1}^{m}{E_r(i,j)}}\\
    	& \text{s.t.}&    \quad & 
    	\begin{aligned}[t]
    	& nm>N
    	\end{aligned}\notag
    \end{alignat}}
\noindent where $m$ represents the minimum required number of HGrids in each MGrid satisfying $nm>N$. 
\end{definition}

\subsection{Upper Bound of Real Error}
\label{sec:ubte}

We denote the summation of model error and expression error as $E_u\left(i,j\right)$ ($=E_m\left(i,j\right)+E_e\left(i,j\right)$). We can prove that $E_u\left(i,j\right)$ is an upper bound on $E_r\left(i,j\right)$ by the theorem \ref{theo:ubte}.

\begin{theorem} [Upper Bound of Real Error]
	$E_u\left(i,j\right)$ is an upper bound of real error $E_r\left(i,j\right)$
	\label{theo:ubte}
\end{theorem}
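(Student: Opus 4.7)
The plan is to prove this by a direct application of the triangle inequality to the absolute differences inside the expectations, followed by linearity of expectation. The three error definitions all have the same form $\mathbb{E}_{\lambda_{ij}\sim P}(|\cdot|)$ taken over the same probability measure, so the comparison reduces to a pointwise inequality on the integrand.

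Concretely, I would first observe the algebraic identity $\hat{\lambda}_{ij} - \lambda_{ij} = (\hat{\lambda}_{ij} - \bar{\lambda}_{ij}) + (\bar{\lambda}_{ij} - \lambda_{ij})$, which is valid for any realization of $\lambda_{ij}$ since $\bar{\lambda}_{ij}$ is simply an intermediate quantity derived from the MGrid aggregate. Taking absolute values and applying the triangle inequality yields $|\hat{\lambda}_{ij} - \lambda_{ij}| \le |\hat{\lambda}_{ij} - \bar{\lambda}_{ij}| + |\bar{\lambda}_{ij} - \lambda_{ij}|$ pointwise for every sample from $P$.

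Next, I would take the expectation $\mathbb{E}_{\lambda_{ij}\sim P}(\cdot)$ of both sides. Monotonicity of expectation preserves the inequality, and linearity of expectation splits the right-hand side into the sum of two separate expectations. Matching these two terms to the definitions of $E_m(i,j)$ and $E_e(i,j)$, and the left-hand side to $E_r(i,j)$, gives exactly $E_r(i,j) \le E_m(i,j) + E_e(i,j) = E_u(i,j)$, which is the claim.

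I do not expect any real obstacle here: the statement is essentially the triangle inequality lifted through an expectation. The only subtlety worth flagging in a careful write-up is that $\hat{\lambda}_{ij}$ and $\bar{\lambda}_{ij}$ must be treated consistently as (possibly random) quantities defined on the same sample space as $\lambda_{ij}$, so that the pointwise triangle inequality is well-defined before integration; given the paper's setup, where $\hat{\lambda}_{ij}$ is a model output and $\bar{\lambda}_{ij}$ is the uniform split of the realized MGrid count, this is immediate.
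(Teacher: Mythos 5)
Your proposal is correct and matches the paper's own proof exactly: both insert $\bar{\lambda}_{ij}$ as an intermediate term, apply the pointwise triangle inequality, and conclude by monotonicity and linearity of expectation. No differences worth noting.
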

\begin{proof}
	We prove it through the following inequalities:
{\scriptsize\begin{align*}
		E_r\left(i,j\right)&=\mathbb{E}\left(|\hat{\lambda}_{ij}-\lambda_{ij}|\right) = \mathbb{E}\left(|\hat{\lambda}_{ij}-\bar{\lambda}_{ij}+\bar{\lambda}_{ij}-\lambda_{ij}|\right)\\
		&\leq \mathbb{E}\left(|\hat{\lambda}_{ij}-\bar{\lambda}_{ij}|+|\bar{\lambda}_{ij}-\lambda_{ij}|\right)\\
		&=\mathbb{E}\left(|\hat{\lambda}_{ij}-\bar{\lambda}_{ij}|\right)+\mathbb{E}\left(\left|\bar{\lambda}_{ij}-\lambda_{ij}\right|\right)\\
		&=E_m\left(i,j\right)+E_e\left(i,j\right)=E_u\left(i,j\right)
	\end{align*}}\vspace{-1ex}
\end{proof}

\revision{
Meanwhile, we obtain the upper bound of the difference between $E_u\left(i,j\right)$ and $E_r\left(i,j\right)$ by the following scaling:
{\scriptsize
\begin{align*}
    E_u(i,j)-E_r(i,j)&\leq \mathbb{E}\left(2\min\left(|\hat{\lambda}_{ij}-\bar{\lambda}_{ij}|,|\bar{\lambda}_{ij}-\lambda_{ij}|\right)\right)\\
    &=2\min\left(\mathbb{E}\left(|\bar{\lambda}_{ij}-\lambda_{ij}|\right),\mathbb{E}\left(|\hat{\lambda}_{ij}-\bar{\lambda}_{ij}|\right)\right)\\
    &=2\min\left(E_e(i,j),E_m(i,j)\right)
\end{align*}
}
This indicates that we can ensure that the $E_r\left(i,j\right)$ is small when $E_u\left(i,j\right)$ is minimized. Therefore, we will minimize $E_u\left(i,j\right)$ as much as possible to optimize OGSS in the following sections of this paper. Finally, Table \ref*{table0} shows some important notations used in this paper.
}

\begin{table}
	\centering
	{\small\scriptsize
		\caption{\small Symbols and Descriptions.} \label{table0}
		\begin{tabular}{l|l}
			{\bf Symbol} & {\bf \qquad \qquad \qquad\qquad\qquad Description} \\ \hline \hline
			$r_i$ & a MGrid\\
			$r_{ij}$ & a HGrid in MGrid $r_i$\\
			$n$ & the number of MGrids\\
            $m$ & the number of HGrids for each MGrid\\
            $\bar{\lambda}_{ij}$ & the estimated number of spatial events for HGrid $r_{ij}$\\
            $\lambda_{ij}$ & the actual number of spatial events in HGrid $r_{ij}$\\
            $\lambda_i$ & the actual number of spatial events in MGrid $r_{i}$\\
            $\hat{\lambda}_i$ & the prediction of $\lambda_i$\\
            $\alpha_{ij}$ & \revision{the temporal mean of $\lambda_{ij}$}\\
            $E_r\left(i,j\right)$ & the real error of HGrid $r_{ij}$\\
            $E_e\left(i,j\right)$ & the expression error of HGrid $r_{ij}$\\
            $E_m\left(i,j\right)$ & the model error of HGrid $r_{ij}$\\
			\hline
			\hline
		\end{tabular}
	}\vspace{-2ex}
\end{table}
\section{Error Analysis}
\label{sec:erroranalysis}

In this section, we first explain how to select a proper $N$ such that each HGrid is small enough and can be considered uniform. Then, we discuss the property of expression error and propose two algorithms to quickly calculate expression error. Finally, we analyze the model error.
\revision{
\subsection{Select A Suitable $N$}
}
\label{sec:region_depart}
We explain how to choose a suitable $N$ in this section. Most of spatiotemporal prediction models are based on experience to divide the whole space into many same-sized MGrids (e.g., 2km$\times$2km grid). However, these methods ignore the uneven distribution of spatial events within a MGrid. 

We divide the whole space into {\scriptsize$\sqrt{N}\times \sqrt{N}$} (i.e., {\scriptsize$nm=N$}) same-sized HGrids. \revision{Let $\alpha_{ij}$ be  the mean number of events for  HGrid $r_{ij}$ in the next period, which can be estimated as the average number of the historical records (i.e., nearest one month's data) of $r_{ij}$.} 

\revision{Here, we give the definition of the uniformly distribution for a grid as follows:
\begin{definition}[Uniformly Distributed Grid]
    Given a grid $r_{ij}$ with the expected spatial events number $\alpha_{ij}$ and a  positive integer {\scriptsize$K\in\mathbb{Z}^+$}, we divide the grid into {\scriptsize$K$} smaller grids with the expected spatial events number {\scriptsize$\alpha_{ijk}, k=1,2,...,K$}.  Grid $r_{ij}$ is \textbf{uniformly distributed} if and only if {\scriptsize$\alpha_{ijk}=\frac{\alpha_{ij}}{K}$} for any {\scriptsize$1\leq k\leq K$}. 
\end{definition}
}
Then, we introduce a metric to measure the degree of the uneven distribution for spatial events in HGrids, which is defined as the following formula:\vspace{-1ex}
{\scriptsize\begin{eqnarray}
	D_{\alpha}\left(N\right)=\sum_{i=1}^{n}\sum_{j=1}^{m}\left|\alpha_{ij}-\bar{\alpha}_N \right|,
	\label{eq:d_alpha}
\end{eqnarray}}\vspace{-2ex}

\noindent where  {\scriptsize$\bar{\alpha}_N=\frac{1}{N}\sum_{i=1}^{n}\sum_{j=1}^{m}\alpha_{ij}$}. We notice that when $N$ increases, {\scriptsize$D_{\alpha}\left(N\right)$} will also increase. However, when $N$ is large enough (i.e., spatial events can be considered evenly distributed in each HGrid), {\scriptsize$D_{\alpha}\left(N\right)$} will not significantly increase any more. We prove this with the following theorem:

\begin{theorem}
	Assume that {\scriptsize$N$} is suitable (sufficiently large) such that the spatial events are uniformly  distributed in each HGrid, then {\scriptsize$D_{\alpha}\left(N\right)=D_{\alpha}\left(NK\right)$}, for any {\scriptsize$K\in \mathbb{Z}^{+}$}. 
	\label{the:D_alpha}
\end{theorem}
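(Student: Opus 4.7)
The plan is a direct computation that follows from the definition of a uniformly distributed grid combined with the scale invariance of the mean. By hypothesis every HGrid $r_{ij}$ in the $N$-partition is uniformly distributed, so subdividing each $r_{ij}$ into $K$ subcells yields $NK$ cells whose expected spatial-event counts are all equal to $\alpha_{ij}/K$. Denote these refined values by $\beta_{ijk}=\alpha_{ij}/K$ for $k=1,\ldots,K$.

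First I would compute the new global mean. Subdivision does not change the total expected number of events in the whole space, so $\sum_{i,j,k}\beta_{ijk}=\sum_{i,j}\alpha_{ij}$, and therefore $\bar{\alpha}_{NK}=\bar{\alpha}_N/K$. Next I would substitute the refined values into the definition of $D_\alpha(NK)$: each of the $NK$ summands takes the form $\left|\alpha_{ij}/K-\bar{\alpha}_N/K\right|$, the factor $1/K$ pulls out of the absolute value, and the $K$ identical copies arising from the subdivision of each original HGrid exactly cancel it, collapsing the sum to $\sum_{i,j}\left|\alpha_{ij}-\bar{\alpha}_N\right|=D_\alpha(N)$.

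I do not expect a serious obstacle here; the only point requiring care is bookkeeping. The formula for $D_\alpha$ in Eq.~\eqref{eq:d_alpha} is written in terms of a two-level MGrid/HGrid indexing with $nm=N$, so for the $NK$-partition I would reindex by treating each original HGrid as an MGrid that now contains $K$ HGrids, which keeps the sum in the canonical form of the definition. Once this identification is made, the equality is one line of arithmetic, and the same argument extended iteratively shows that $D_\alpha$ stabilizes for every further refinement by a positive integer factor, which is the property the paper needs when choosing $N$.
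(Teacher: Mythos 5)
Your proposal is correct and follows essentially the same route as the paper's own proof: subdivide each HGrid into $K$ equal subcells with mean $\alpha_{ij}/K$, observe that $\bar{\alpha}_{NK}=\bar{\alpha}_N/K$, and collapse the sum by pulling the factor $1/K$ out of the absolute value against the $K$ identical copies. Your explicit justification that the global mean scales by $1/K$ (via conservation of the total expected count) is a small bookkeeping step the paper leaves implicit, but the argument is identical.
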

\begin{proof}
    We divide each HGrid $r_{ij}$ into smaller grids denoted as $r_{ijk}$ ($k=1,2,\dots,K$), where the mean number of events for each smaller grid is denoted as $\alpha_{ijk}$. Due to the uniformity of each HGrid, we have $\alpha_{ijk}=\frac{\alpha_{ij}}{K}$. Thus, we have
    {\scriptsize\begin{eqnarray}
        D_{\alpha}\left(NK\right)&=&\sum_{i=1}^{n}\sum_{j=1}^{m}\sum_{k=1}^{K}\left|\alpha_{ijk}-\bar{\alpha}_{NK}\right| \\ \notag 
        &=&\sum_{i=1}^{n}\sum_{j=1}^{m}K\left|\frac{\alpha_{ij}}{K}-\frac{1}{K}\bar{\alpha}_N\right|
		= D_{\alpha}\left(N\right)  \notag
    \end{eqnarray}
}\vspace{-7ex}
	
\end{proof}
\revision{The increase of $N$ will not contribute to the increase of $D_\alpha\left(N\right)$ when $N$ is large enough, which means that $D_\alpha\left(N\right)$ can be an indicator to help us to select $N$. In other words, we should choose a sufficiently large $N$ so that $D_\alpha\left(N\right)$ is maximized in practice.}

\subsection{Analysis and Calculation of Expression Error}
\label{sec:expression_analyses}
We assume that the number $\lambda_{ij}$ of events in a HGrid $r_{ij}$ follows a poisson distribution $Pois$ with parameter of $\alpha_{ij}$ ($\alpha_{ij}$ is  the mean number of events for  HGrid $r_{ij}$), which is verified in our previous work \cite{cheng2019queueing, cheng2021queueing}. 


\noindent\textbf{Calculation of Expression Error.} We first analyze how to calculate expression error for a given HGrid $r_{ij}$. Due to $\lambda_{ij}\sim Pois(\alpha_{ij})$, we have\vspace{-2ex}
{\scriptsize\begin{equation}
P(\lambda_{ij}=k_h)=e^{-\alpha_{ij}}\frac{{\alpha_{ij}}^{k_h}}{k_h!}, k_h\in\mathbb{N} \label{eq:equation1}
\end{equation}}\vspace{-2ex}

Then, we define the random variable $\lambda_{i,\neq j}$ as the mean of the number of events for all HGrids in MGrid $r_i$ excluding the HGrid $r_{ij}$ (i.e., $\lambda_{i,\neq j}=\sum_{g\neq j}\lambda_{ig}$), and have $\lambda_{i,\neq j}\sim Pois(\sum_{g\neq j}\alpha_{ig})$ because of the additivity of Poisson distribution. Let $\bar{\lambda}_{i,\neq j}=\frac{1}{m}\lambda_{i,\neq j}$, we have\vspace{-2ex}
{\scriptsize\begin{equation}
P(\bar{\lambda}_{i,\neq j}=\frac{k_m}{m})=e^{-\sum_{g\neq j}\alpha_{ig}}\frac{\left({\sum_{g\neq j}\alpha_{ig}}\right)^{k_m}}{k_m!}, k_m\in\mathbb{N} \label{eq:equation2}
\end{equation}}\vspace{-2ex}

Since $\lambda_{ij}$ and $\bar{\lambda}_{i,\neq j}$ is independent of each other, $P(|\bar{\lambda}_{ij}-\lambda_{ij}|)$ can be expressed by
\revision{\vspace{-1ex}
{\scriptsize\begin{eqnarray}
P(|\bar{\lambda}_{ij}-\lambda_{ij}|=\frac{k_d}{m})
&=&P(|\frac{m-1}{m}\lambda_{ij}-\bar{\lambda}_{i,\neq j}|=\frac{k_d}{m}) \notag\\
&=&\sum_{|\frac{m-1}{m}k_h-\frac{k_m}{m}|=\frac{k_d}{m}}P\left(\lambda_{ij}=k_h\right)P\left(\bar{\lambda}_{i,\neq j}=\frac{k_m}{m}\right) \notag\\
&=&\sum_{\frac{(m-1)k_h-k_m}{m}=\pm \frac{k_d}{m}}p\left(r_{ij}, k_h, k_m\right)
\label{eq:equation3}
\end{eqnarray}}
}

\noindent where $p\left(r_{ij}, k_h, k_m\right)=e^{-\sum_{j=1}^{m}\alpha_{ij}}\frac{({\sum_{g\neq j}\alpha_{ig}})^{k_m}(\alpha_{ij})^{k_h}}{k_m!k_h!}$ denoting the probability when the number of events in HGrid $r_{ij}$ is $k_h$ and the number of events in MGrid $r_{i}$ is $k_h+k_m$. Then, we have:
{\scriptsize\begin{eqnarray}
E_e\left(i,j\right)\notag&=&\mathbb{E}(|\lambda_{ij}-\bar{\lambda}_{ij}|)
=\sum_{k_d=0}^{\infty}\frac{k_d}{m}P(|\lambda_{ij}-\bar{\lambda}_{ij}|=\frac{k_d}{m}) \notag\\
&=&\sum_{k_d=0}^{\infty}\frac{k_d}{m}\sum_{\frac{(m-1)k_h-k_m}{m}=\pm \frac{k_d}{m}}p\left(r_{ij}, k_h, k_m\right)\notag\\
&=&\sum_{k_h=0}^{\infty}\sum_{k_m=0}^{\infty} b_{k_hk_m}\label{eq:exp_error}
\end{eqnarray}}

\noindent where $b_{k_hk_m}=\left|\frac{(m-1)k_h-k_m}{m}\right|p\left(r_{ij}, k_h, k_m\right)$. \revision{Here, $p\left(r_{ij}, k_h, k_m\right)$ represents the probability when the number of events in MGrid $r_{i}$ is $k_m+k_h$ and the number of events in HGrid $r_{ij}$ is $k_h$, and the single expression error of HGrid $r_{ij}$ in this situation is $\left|\frac{(m-1)k_h-k_m}{m}\right|$. Thus, Equation \ref{eq:exp_error} can be regarded as a weighted average of the single expression error in all possible cases.}

\revision{We can use Equation \ref{eq:exp_error} to} calculate the expression error of HGrid $r_{ij}$, which also indicates that the expression error is  only related to the $\alpha_{ij}$ of each HGrid $r_{ij}$ and $m$. 

\noindent\textbf{Properties of expression error.} We show that the upper bound of expression error {\scriptsize$E_e\left(i,j\right)$}  is positively correlated to $\alpha_{ij}$ and $m$. In other words, when $\alpha_{ij}$ or $m$ increases, the upper bound of expression error {\scriptsize$E_e\left(i,j\right)$}  will also increase. This relationship is presented with the following lemma:

 \begin{lemma}
	\label{lem:bounded}
	$\forall M_1,M_2\in \mathbb{Z^+}$, we have
	{\scriptsize$$
    \sum_{k_h=0}^{M_1}\sum_{k_m=0}^{M_2}{b_{k_hk_m}}< (1-\frac{2}{m})\alpha_{ij} + \frac{\sum_{k=1}^{m}{\alpha_{ik}}}{m}.
	$$}
\end{lemma}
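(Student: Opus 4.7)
The plan is to recognize the target quantity as a finite truncation of the full series that Equation~\ref{eq:exp_error} identifies with $E_e(i,j)$, and then bound $E_e(i,j)$ by an elementary triangle inequality applied to the random variable $\bar{\lambda}_{ij} - \lambda_{ij}$. Chaining these two steps yields the stated bound.

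First, I would observe that every $b_{k_h k_m}$ is non-negative, since it is the product of $p(r_{ij}, k_h, k_m) \geq 0$ and $|(m-1)k_h - k_m|/m \geq 0$. This immediately gives
\[
\sum_{k_h=0}^{M_1}\sum_{k_m=0}^{M_2} b_{k_h k_m} \leq \sum_{k_h=0}^{\infty}\sum_{k_m=0}^{\infty} b_{k_h k_m} = E_e(i,j).
\]
To upgrade $\leq$ to $<$, it suffices to exhibit at least one tail index $(k_h, k_m) \notin \{0,\dots,M_1\} \times \{0,\dots,M_2\}$ with $b_{k_h k_m} > 0$; this is easy, because $p(r_{ij}, k_h, k_m) > 0$ for every $(k_h, k_m)$ whenever $\alpha_{ij}$ and $\sum_{g \neq j}\alpha_{ig}$ are positive, and infinitely many tail pairs satisfy $(m-1)k_h \neq k_m$.

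Second, I would decompose $\bar{\lambda}_{ij} - \lambda_{ij} = \frac{1}{m}\bigl(\sum_{g \neq j}\lambda_{ig} - (m-1)\lambda_{ij}\bigr)$, apply $|a - b| \leq a + b$ for $a, b \geq 0$, and take expectations using $\mathbb{E}[\lambda_{ig}] = \alpha_{ig}$. This yields
\[
E_e(i,j) \leq \frac{(m-1)\alpha_{ij} + \sum_{g \neq j}\alpha_{ig}}{m} = \Bigl(1 - \frac{2}{m}\Bigr)\alpha_{ij} + \frac{\sum_{k=1}^{m}\alpha_{ik}}{m},
\]
where the rightmost identity comes from $(m-1)\alpha_{ij} - \alpha_{ij} = (m-2)\alpha_{ij}$ together with absorbing $\alpha_{ij}$ back into the full sum over $k$. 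Combining the two displays gives the claim.

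The only subtle point is strictness; the numerical bound itself is immediate from triangle inequality plus linearity of expectation. The main obstacle is therefore justifying that the truncation strictly loses mass, which the first step above handles by exhibiting positive tail terms. An essentially equivalent alternative would be to argue that $|(m-1)\lambda_{ij} - \sum_{g \neq j}\lambda_{ig}| < (m-1)\lambda_{ij} + \sum_{g \neq j}\lambda_{ig}$ on the event that both summands are strictly positive, an event of positive probability under any non-degenerate Poisson assumption; either route delivers the strict inequality.
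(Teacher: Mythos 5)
Your proof is correct, but it takes a genuinely different route from the paper's. The paper works entirely with the \emph{finite} truncated sum: it applies the triangle inequality termwise, $\left|\frac{(m-1)k_h-k_m}{m}\right|\leq \frac{(m-1)k_h}{m}+\frac{k_m}{m}$, factors the resulting double sums into products of partial sums of Poisson probability mass functions, and uses the fact that each such partial sum is strictly less than $1$ to obtain the two bounds $\frac{(m-1)}{m}\alpha_{ij}$ and $\frac{1}{m}\sum_{g\neq j}\alpha_{ig}$ --- strictness thus falls out automatically. You instead pass to the full series, identify it with $E_e(i,j)=\mathbb{E}\left(|\bar{\lambda}_{ij}-\lambda_{ij}|\right)$ via Equation~\ref{eq:exp_error}, and apply the same triangle inequality once at the level of random variables together with linearity of expectation. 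Your version is shorter, avoids the Poisson partial-sum bookkeeping, and as a bonus directly establishes that $E_e(i,j)$ itself satisfies the bound --- a conclusion the paper only reaches by combining this lemma with the subsequent convergence argument (Lemma~\ref{lem:repr_converge}); since the terms are non-negative, Tonelli justifies the rearrangements and there is no circularity. The price is that strictness needs the separate argument you supply (positive tail mass, or $|a-b|<a+b$ on an event of positive probability); note that in the fully degenerate case where every $\alpha_{ik}=0$ the strict inequality fails for both your proof and the paper's, so the implicit positivity assumption you flag is genuinely needed.
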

\begin{proof}\vspace{-2ex}
	{\scriptsize\begin{eqnarray}
		&&\sum_{k_h=0}^{M_1}\sum_{k_m=0}^{M_2}{b_{k_hk_m}}
		=\sum_{k_h=0}^{M_1}\sum_{k_m=0}^{M_2}\left|\frac{(m-1)k_h-k_m}{m}\right|p\left(r_{ij}, k_h, k_m\right) \notag \\
		&\leq&\sum_{k_h=0}^{M_1}\sum_{k_m=0}^{M_2}\left(\frac{(m-1)k_h}{m}+\frac{k_m}{m}\right)p\left(r_{ij}, k_h, k_m\right) \label{eq:lem_proof}
	\end{eqnarray}}
	Considering the \revision{first term of right hand side of Inequation \ref{eq:lem_proof},} we have
	{\scriptsize\begin{eqnarray}
        &&\sum_{k_h=0}^{M_1}\sum_{k_m=0}^{M_2}\frac{(m-1)k_h}{m}p\left(r_{ij}, k_h, k_m\right) \notag \\
        &=&\sum_{k_h=0}^{M_1}\sum_{k_m=0}^{M_2}\frac{(m-1)k_h}{m}e^{-\sum_{j=1}^{m}\alpha_{ij}}\frac{({\sum_{g\neq j}\alpha_{ig}})^{k_m}(\alpha_{ij})^{k_h}}{k_m!k_h!} \notag \\
        &=&\frac{(m-1)}{m}\sum_{k_h=1}^{M_1}{\frac{e^{-\alpha_{ij}}(\alpha_{ij})^{k_h}}{(k_h-1)!}}\sum_{k_m=0}^{M_2}{e^{\sum_{g\neq j}\alpha_{ig}}\frac{({\sum_{g\neq j}\alpha_{ig}})^{k_m}}{k_m!}} \label{eq:berfore_reduce1}\\
        &<&\frac{(m-1)}{m}\sum_{k_h=1}^{M_1}{\frac{e^{-\alpha_{ij}}(\alpha_{ij})^{k_h}}{(k_h-1)!}} \label{eq:2}\\
        &=&\frac{(m-1)\alpha_{ij}}{m}\sum_{k_h=1}^{M_1}{\frac{e^{-\alpha_{ij}}(\alpha_{ij})^{k_h-1}}{(k_h-1)!}} \notag\\
        &=&\frac{(m-1)\alpha_{ij}}{m}\sum_{k_h=0}^{M_1-1}{\frac{e^{-\alpha_{ij}}(\alpha_{ij})^{k_h}}{k_h!}} \label{eq:berfore_reduce2}\\
        &<&\frac{(m-1)}{m}\alpha_{ij} \label{eq:sim_proof}
	\end{eqnarray}}\vspace{-2ex}

	\revision{The item {\scriptsize$e^{-\sum_{g\neq j}\alpha_{ig}}\frac{({\sum_{g\neq j}\alpha_{ig}})^{k_m}}{k_m!}$} in Equation \ref{eq:berfore_reduce1} can be regarded as the probability of {\scriptsize$\tilde{P}\left(x=k_m\right)$} where {\scriptsize$\tilde{P}$} is a Poisson with the mean of {\scriptsize$\sum_{g\neq j}\alpha_{ig}$}. Besides, the item {\scriptsize$\frac{e^{-\alpha_{ij}}(\alpha_{ij})^{k_h}}{k_h!}$} in Equation \ref{eq:berfore_reduce2} can also be regarded as the probability of {\scriptsize$\tilde{P}\left(x=k_h\right)$} where {\scriptsize$\tilde{P}$} is a Poisson with the mean of $\alpha_{ij}$. inequalities \ref{eq:2} and \ref{eq:sim_proof} hold, because as the integral of a Poisson distribution on a part of the whole range is smaller than 1. Then, we can do the same thing to simplify the second term of right hand side of Inequality \ref{eq:lem_proof}::}\vspace{-2ex}

	{\scriptsize\begin{eqnarray}
		&&\sum_{k_h=0}^{M_1}\sum_{k_m=0}^{M_2}\frac{k_m}{m}p\left(r_{ij}, k_h, k_m\right) \notag \\
		&=&\frac{1}{m}\sum_{k_h=0}^{M_1}{\frac{e^{-\alpha_{ij}}(\alpha_{ij})^{k_h}}{k_h!}}\sum_{k_m=1}^{M_2}{e^{-\sum_{g\neq j}\alpha_{ig}}\frac{({\sum_{g\neq j}\alpha_{ig}})^{k_m}}{(k_m-1)!}} \notag\\
		&<&\frac{1}{m}{\sum_{g\neq j}\alpha_{ig}}\notag
	\end{eqnarray}}\vspace{-3ex}

	Thus, we have
{\small$
    \sum_{k_h=0}^{M_1}\sum_{k_m=0}^{M_2}{b_{k_hk_m}}< (1-\frac{2}{m})\alpha_{ij} + \frac{\sum_{k=1}^{m}{\alpha_{ig}}}{m}
	$}\vspace{-1ex}
\end{proof}\vspace{-1ex}

Then, we can get the upper bound of the summation of {\scriptsize$E_e\left(i,j\right)$} for all HGrids is
{\scriptsize$
\sum_{i=1}^{n}{\sum_{j=1}^{m}{E_e\left(i,j\right)}}\leq 2\left(1-\frac{1}{m}\right)\sum_{i=1}^{n}{\sum_{j=1}^{m}{\alpha_{ij}}}
$}.

According to Theorem \ref{theo:ubte}, to minimize the overall real error, we need to minimize the overall expression error. With Lemma \ref{lem:bounded}, to minimize expression error {\scriptsize$E_e\left(i,j\right)$}, we can minimize $\alpha_{ij}$ or $m$.  However, $\alpha_{ij}$ is an inner property of HGrid $r_{ij}$ and not determined by the prediction algorithms. For example, $\alpha_{ij}$ on weekdays and weekends in the same HGrid $r_{ij}$ is quite different. On the other hand, the mean number of events in the same grid can vary greatly in different time periods of the day. Then, \textit{to minimize expression errors, we should select the appropriate $n$ to minimize $m$ under the constraint {\scriptsize$nm>N$}.} Since {\scriptsize$nm>N$}, in order to minimize $m$, we should maximize $n$.

\noindent\textbf{Convergence of Expression Error}. We notice that Equation \ref{eq:exp_error} is a summation of an infinite series. For a HGrid $r_{ij}$, we explain that Equation \ref{eq:exp_error} to calculate expression error $E_e\left(i,j\right)$ can converge:
\begin{lemma}
	Equation \ref{eq:exp_error} to calculate Expression error $E_e\left(i,j\right)$ can converge.
	\label{lem:repr_converge}
\end{lemma}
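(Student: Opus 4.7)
The plan is to leverage Lemma~\ref{lem:bounded} directly: it already provides a uniform upper bound on all finite partial sums of the double series, so convergence will follow from a bounded-monotone-sequence argument applied to the non-negative series.

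First, I would observe that every summand $b_{k_h k_m} = \left|\frac{(m-1)k_h - k_m}{m}\right| p(r_{ij}, k_h, k_m)$ is non-negative, since it is the product of an absolute value and a probability. Consequently, the double partial sums
\[
S_{M_1, M_2} \;=\; \sum_{k_h=0}^{M_1}\sum_{k_m=0}^{M_2} b_{k_h k_m}
\]
form a sequence that is monotonically non-decreasing in both indices $M_1$ and $M_2$.

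Next, I would invoke Lemma~\ref{lem:bounded}, which establishes that for every pair $M_1, M_2 \in \mathbb{Z}^+$,
\[
S_{M_1, M_2} \;<\; \Bigl(1-\tfrac{2}{m}\Bigr)\alpha_{ij} + \frac{\sum_{k=1}^{m}\alpha_{ik}}{m}.
\]
The right-hand side is a finite constant depending only on $m$ and the fixed Poisson means $\alpha_{i1}, \ldots, \alpha_{im}$, and crucially is independent of $M_1$ and $M_2$. Thus $\{S_{M_1, M_2}\}$ is a uniformly bounded, monotone non-decreasing sequence of non-negative real numbers, and by the monotone convergence theorem its limit as $M_1, M_2 \to \infty$ exists and is finite. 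This limit is exactly the iterated sum on the right-hand side of Equation~\ref{eq:exp_error}, so $E_e(i,j)$ is well-defined and finite.

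I do not expect any serious obstacle here, since the heavy lifting has already been carried out in Lemma~\ref{lem:bounded}. The only mild subtlety is justifying that a double series of non-negative terms with bounded rectangular partial sums converges to the same value regardless of the order of summation; this is a standard consequence of Tonelli's theorem for counting measure (or equivalently, the fact that absolutely convergent double series may be rearranged freely), and I would mention it in one line rather than dwell on it.
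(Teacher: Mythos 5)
Your proposal is correct and follows essentially the same route as the paper's own proof: both rest on the non-negativity of the terms $b_{k_h k_m}$ together with the uniform bound from Lemma~\ref{lem:bounded}, and conclude by a bounded-monotone argument (the paper takes the two limits iteratively, first in $M_2$ then in $M_1$, while you phrase it as monotone convergence of the rectangular partial sums plus a one-line Tonelli remark). No gap to report.
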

\begin{proof}
   
	\revision{Considering that {\scriptsize$b_{k_hk_m}$} for any {\scriptsize$k_h,k_m$} is positive and {\scriptsize$\sum_{k_h=0}^{M_1}\sum_{k_m=0}^{M_2}{b_{k_hk_m}}$} is bounded according to the Lemma \ref{lem:bounded}, we can prove that Equation \ref{eq:exp_error} can converge.} Let  
	{\scriptsize$S\left(M_2, M_1\right) = \sum_{k_m=0}^{M_2}\sum_{k_h=0}^{M_1}{b_{k_hk_m}}
	$}.
	Lemma \ref{lem:bounded} shows {\scriptsize$\exists M>0$} make the {\scriptsize$S\left(M_2, M_1\right)\leq M$} hold for any {\scriptsize$M_2 \in \mathbb{Z}$}. Since {\scriptsize$S\left(M_2, M_1\right)$} is monotonically increasing with respect to {\scriptsize$M_2$, $\lim_{M_2\to \infty}S\left(M_2, M_1\right)$} can converge, and we have {\scriptsize$\lim_{M_2\to \infty}S\left(M_2, M_1\right)\leq M
	$}, that is: \vspace{-1ex}
	{\scriptsize$$\sum_{k_m=0}^{\infty}\sum_{k_h=0}^{M_1}{b_{k_hk_m}}\leq M \notag\Leftrightarrow \sum_{k_h=0}^{M_1}\sum_{k_m=0}^{\infty}{b_{k_hk_m}}\leq M$$}\vspace{-1ex}

	Let {\scriptsize$T\left(M_1\right) = \sum_{k_h=0}^{M_1}\sum_{k_m=0}^{\infty}{b_{k_hk_m}}$}, and we have {\scriptsize$\lim_{M_1\to \infty}{T\left(M_1\right)}=E_e\left(i,j\right)$}. In the same way, we can prove that {\scriptsize$E_e\left(i,j\right)$} can converge, which means {\scriptsize$E_e\left(i,j\right)=\lim_{M_1\to \infty} T\left(M_1\right)\leq M
	$} because of the monotone increase with respect to {\scriptsize$M_1$}.
\end{proof}\vspace{-1ex}

Since  Equation \ref{eq:exp_error} can converge, we will introduce algorithms to calculate expression error.

    

\noindent \textbf{Algorithm to Calculate Expression Error}.
Equation \ref{eq:exp_error} shows how to calculate the expression error {\scriptsize$E_e\left(i,j\right)$} for a HGrid $r_{ij}$. In fact, we cannot compute the expression error exactly, but we can prove that the expression error can be approximated to arbitrary precision through the below theorem.

\begin{theorem}
For any $\varepsilon$, there is a number $K$ that makes the following inequality holds:\vspace{-1ex}
{\scriptsize$$
\left|\sum_{k_h=0}^{K}\sum_{k_m=0}^{(m-1)K}b_{k_hk_m}-E_e\left(i,j\right)\right|<\varepsilon
$$}
\label{the:arbitrary_precision}
\end{theorem}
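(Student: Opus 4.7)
The plan is to leverage Lemma \ref{lem:repr_converge}, which already establishes that the iterated series $\sum_{k_h=0}^{\infty}\sum_{k_m=0}^{\infty} b_{k_hk_m}$ equals $E_e(i,j)$ and is bounded. The key structural fact I will exploit is that every term $b_{k_hk_m}=\left|\frac{(m-1)k_h-k_m}{m}\right| p(r_{ij},k_h,k_m)$ is non-negative, so every rectangular partial sum is monotone non-decreasing in each index and bounded above by $E_e(i,j)$. This reduces the theorem to choosing the truncation index $K$ large enough.

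First, I would fix an arbitrary $\varepsilon>0$ and invoke Lemma \ref{lem:repr_converge} to obtain integers $M_1,M_2$ such that
\begin{equation*}
\left|\sum_{k_h=0}^{M_1}\sum_{k_m=0}^{M_2} b_{k_hk_m} - E_e(i,j)\right|<\varepsilon .
\end{equation*}
Since all terms are non-negative, this absolute value is really $E_e(i,j)-\sum_{k_h=0}^{M_1}\sum_{k_m=0}^{M_2} b_{k_hk_m}$.

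Next, I would choose $K=\max\!\left(M_1,\,\left\lceil \tfrac{M_2}{m-1}\right\rceil\right)$, so that $K\geq M_1$ and $(m-1)K\geq M_2$ simultaneously (the case $m=1$ is degenerate and can be handled separately, since then $b_{k_hk_m}=\tfrac{k_m}{1}p(\cdot)$ depends only on $k_m$ and the truncation becomes one-dimensional). By non-negativity and the inclusion of index rectangles,
\begin{equation*}
\sum_{k_h=0}^{M_1}\sum_{k_m=0}^{M_2} b_{k_hk_m}\;\leq\;\sum_{k_h=0}^{K}\sum_{k_m=0}^{(m-1)K} b_{k_hk_m}\;\leq\; E_e(i,j).
\end{equation*}
Combining this sandwich with the previous inequality gives the desired bound
$\bigl|\sum_{k_h=0}^{K}\sum_{k_m=0}^{(m-1)K} b_{k_hk_m}-E_e(i,j)\bigr|<\varepsilon$.

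The only subtle point I expect to be a mild obstacle is justifying the conversion from the iterated-limit statement of Lemma \ref{lem:repr_converge} to the rectangular-truncation statement needed here; but because all terms are non-negative, Tonelli-style monotonicity makes this immediate and the argument reduces to the index-matching choice of $K$ above. No extra analytic machinery beyond Lemma \ref{lem:repr_converge} and the non-negativity of $b_{k_hk_m}$ is needed.
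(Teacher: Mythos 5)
Your proposal is correct, and it shares the same skeleton as the paper's proof: both reduce the theorem to the convergence established in Lemma \ref{lem:repr_converge}, extract truncation thresholds $M_1,M_2$, and then choose $K$ so that $K\geq M_1$ and $(m-1)K\geq M_2$ simultaneously. Where you differ is in how the truncation error is controlled. The paper works with the two-sided definition of the limit, splits the tolerance into two $\varepsilon/2$ pieces, and explicitly interchanges the order of summation (justified by the boundedness from Lemma \ref{lem:bounded}) to pass from the iterated limit to a rectangular truncation. You instead observe that every $b_{k_hk_m}$ is non-negative, so rectangular partial sums are monotone in each index and bounded above by $E_e(i,j)$; this yields the one-sided sandwich $\sum_{k_h=0}^{M_1}\sum_{k_m=0}^{M_2}b_{k_hk_m}\leq\sum_{k_h=0}^{K}\sum_{k_m=0}^{(m-1)K}b_{k_hk_m}\leq E_e(i,j)$ and dispenses with both the $\varepsilon/2$ bookkeeping and the order-of-summation interchange. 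Your version is the cleaner of the two: the only point you should spell out (and which you correctly flag as the one subtlety) is that for non-negative terms the iterated sum of Lemma \ref{lem:repr_converge} coincides with the supremum of rectangular partial sums, which is exactly what licenses the upper bound $\leq E_e(i,j)$ in your sandwich; that step is the monotone-convergence analogue of the paper's explicit interchange, so nothing is lost. The explicit choice $K=\max\bigl(M_1,\lceil M_2/(m-1)\rceil\bigr)$ and the aside on the degenerate case $m=1$ are both fine.
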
\vspace{-2ex}
\begin{proof}
	Lemma \ref{lem:repr_converge} shows that $E_e\left(i,j\right)$ can converge. We have \vspace{-2ex}
	{\scriptsize$$\lim_{\tilde{k}_1\to \infty}\sum_{k_h=0}^{\tilde{k}_1}\sum_{k_m}^{\infty}{b_{k_hk_m}}=E_e\left(i,j\right)$$}\vspace{-1ex}
    
	According to the definition of limit, there will be {\scriptsize$M_1$} for any $\varepsilon>0$ that we have\vspace{-2ex}
	
	{\scriptsize$$\frac{-\varepsilon}{2}<\sum_{k_h=0}^{\tilde{k}_1}\sum_{k_m}^{\infty}{b_{k_hk_m}}-E_e\left(i,j\right)<\frac{\varepsilon}{2}$$}
    when {\scriptsize$\tilde{k}_1>M_1$}, which means\vspace{-2ex}
    
	{\scriptsize\begin{eqnarray}
		\frac{-\varepsilon}{2}+E_e\left(i,j\right)<\sum_{k_h=0}^{\tilde{k}_1}\sum_{k_m}^{\infty}{b_{k_hk_m}}<\frac{\varepsilon}{2}+E_e\left(i,j\right)\notag 
	\end{eqnarray}}\vspace{-2ex}

	Since the series $\sum_{k_h=0}^{\tilde{k}_1}\sum_{k_m}^{\infty}{b_{k_hk_m}}$ are bounded (shown by Lemma \ref{lem:bounded}), we can switch the order of the series.\vspace{-2ex}
	
	{\scriptsize\begin{eqnarray}
		\sum_{k_h=0}^{\tilde{k}_1}\sum_{k_m}^{\infty}{b_{k_hk_m}}=\sum_{k_m}^{\infty}\sum_{k_h=0}^{\tilde{k}_1}{b_{k_hk_m}}\notag 
	\end{eqnarray}}\vspace{-2ex}

	We can also find $M_2$ for a positive number $\varepsilon$ that meets\vspace{-2ex}
	
	{\scriptsize\begin{eqnarray}
		\frac{-\varepsilon}{2}<\sum_{k_h=0}^{\tilde{k}_1}\sum_{k_m}^{\tilde{k}_2}{b_{k_hk_m}}-\sum_{k_m}^{\infty}\sum_{k_h=0}^{\tilde{k}_1}{b_{k_hk_m}}<\frac{\varepsilon}{2}\notag 
	\end{eqnarray}}\vspace{-2ex}

	Based on the definition of the limit when {\scriptsize$\tilde{k}_2>M_2$}, we have \vspace{-2ex}
	{\scriptsize\begin{eqnarray}
		-\varepsilon<\sum_{k_h=0}^{\tilde{k}_1}\sum_{k_m}^{\tilde{k}_2}{b_{k_hk_m}}-E_e\left(i,j\right)<\varepsilon\notag 
	\end{eqnarray}}\vspace{-1ex}
	
	We select a number {\scriptsize$K$} which meets the constraints of {\scriptsize$K>M_1$} and {\scriptsize$(m-1)K>M_2$}. Thus, we have
	{\scriptsize\begin{eqnarray}
        \left|\sum_{k_h=0}^{K}\sum_{k_m=0}^{(m-1)K}b_{k_hk_m}-E_e\left(i,j\right)\right|<\varepsilon
    \end{eqnarray}}\vspace{-2ex}
\end{proof}

Theorem \ref{the:arbitrary_precision} shows that we can achieve the result close to the expression error by selecting a suitable {\scriptsize$K$}. We first need to compute {\scriptsize$p\left(r_{ij}, k_h, k_m\right)$}, which needs {\scriptsize$O\left(k_h+k_m\right)$} time to compute. Then, the complexity of the whole calculation of Equation \ref{eq:exp_error} is  {\scriptsize$O\left(m^2K^3\right)$}. However the calculation of {\scriptsize$p\left(r_{ij}, k_h, k_m\right)$} can be simplified as follows:\vspace{-2ex}

{\scriptsize\begin{eqnarray}
	p\left(r_{ij}, k_h, k_m+1\right)=\frac{\sum_{g\neq j}^{m}{\alpha_{ig}}}{k_m+1}p\left(r_{ij}, k_h, k_m\right) \label{eq:recursive_calculation}\vspace{-2ex}
\end{eqnarray}}\vspace{-1ex}

Based on Equation \ref{eq:recursive_calculation}, Algorithm \ref{algo:repre_algorithm} is proposed to approximately computing the expression error {\scriptsize$E_e\left(i,j\right)$} of the HGrid $r_{ij}$. Since the complexity of computing {\scriptsize$p\left(r_{ij},k_h,k_m\right)$} is {\scriptsize$O\left(1\right)$}, the complexity of Algorithm \ref{algo:repre_algorithm} is {\scriptsize$O\left(mK^2\right)$}.

{\small\begin{algorithm}[t]
	\DontPrintSemicolon
	\KwIn{\small the number $m$ of HGrids per MGrid, $\alpha_{ij}$ for each HGrid $r_{ij}$ in the MGrid $r_i$, \revision{a hyper-parameter $K$}}
	\KwOut{\small the expression error $E_e(i,j)$ of the HGrid $r_{ij}$}
    $E_e(i,j)\gets0$\;
    $\alpha_{i,\neq j}\gets \sum_{g\neq j}^{m}{\alpha_{ig}}$\;
    $p_1\gets e^{-\alpha_{ij}}$\;
    \For{$k_h=1$ to $K$}{
        $p_2\gets e^{-\alpha_{i,\neq j}}$\;
        \For{$k_m=1$ to $(m-1)K$}{
            $\Delta\gets\left|\frac{(m-1)k_h-k_m}{m}\right|p_1p_2$\;
            $E_e(i,j)\gets E_e(i,j)+\Delta$\;
            $p_2\gets \frac{-p_2\alpha_{i,\neq j}}{k_m}$\;
        }
        $p_1\gets \frac{p_1\alpha_{ij}}{k_h}$\;
    }
    \Return $E_e(i,j)$\;
	\caption{\small Expression Error Calculation}
	\label{algo:repre_algorithm}
\end{algorithm}}

\noindent \textbf{Algorithm Optimization}.
Considering the large number of HGrids, even though the time needed to calculate the expression error of each HGrid is only about 0.1 second, the final time cost needed to calculate the summation of expression error of all HGrids with Algorithm \ref{algo:repre_algorithm} is about 4 hours. Therefore, we introduce a more efficient algorithm with time complexity of {\scriptsize$O\left(mK\right)$} in this section based on  a more in-depth analysis of Equation \ref{eq:exp_error}.

According to theorem \ref{the:arbitrary_precision}, we can approximate Equation \ref{eq:exp_error} with the following equations:
{\scriptsize\begin{eqnarray}
	&&\sum_{k_h=0}^{K}\sum_{k_m=0}^{(m-1)K}\left|\frac{(m-1)k_h-k_m}{m}\right|p\left(r_{ij}, k_h, k_m\right) \label{eq:close_to}\\
	&=&\frac{(m-1)}{m}\sum_{k_h=0}^{K}\sum_{k_m=0}^{(m-1)K}k_h\mathbb{I}\left(\left(m-1\right)k_h-k_m\right)p\left(r_{ij}, k_h, k_m\right) \notag\\
	&-&\frac{1}{m}\sum_{k_h=0}^{K}\sum_{k_m=0}^{(m-1)K}k_m\mathbb{I}\left(\left(m-1\right)k_h-k_m\right)p\left(r_{ij}, k_h, k_m\right) \label{eq:equation6}
\end{eqnarray}}
where $\mathbb{I}(x)$ is a indicator function and satisfies:\vspace{-1ex}
{\scriptsize\begin{equation}
	\mathbb{I}(x)=\left\{
	\begin{array}{ll}
	1, & x>0 \\
	-1, & x \leq 0
	\end{array}
	\right. \notag
\end{equation}}\vspace{-1ex}

We transform the \revision{first term} of right hand side of Equation \ref{eq:equation6} (denoted as $e_1$) to the following formula:\vspace{-2ex}

\revision{
{\scriptsize\begin{eqnarray}
	&&\frac{(m-1)}{m}\sum_{k_h=1}^{K}\sum_{k_m=0}^{(m-1)K}k_h\mathbb{I}((m-1)k_h-k_m)p(r_{ij}, k_h, k_m) \notag\\
    &=&\frac{(m-1)}{m}\sum_{k_h=1}^{K}k_h(2{\sum_{k_m=0}^{(m-1)k_h}p(r_{ij}, k_h, k_m)-\sum_{k_m=0}^{(m-1)K}p(r_{ij}, k_h, k_m)})\notag\\
	&=&\frac{(m-1)}{m}\sum_{k_h=1}^{K}{e^{-\sum_{j=1}^{m}{\alpha_{ij}}}\frac{(\alpha_{ij})^{k_h}}{(k_h-1)!}}{e^{'}_1(k_h)}
\end{eqnarray}}
}
Let {\scriptsize$e^{'}_1\left(k_h\right)$} denote a function with respect to {\scriptsize$k_m$}, as follows:
{\scriptsize\begin{eqnarray}
	-\sum_{k_m=0}^{(m-1)K}{\frac{\left(\sum_{g\neq j}{\alpha_{ig}}\right)^{k_m}}{k_m!}}+2\sum_{k_m=0}^{(m-1)k_h}{\frac{\left(\sum_{g\neq j}{\alpha_{ig}}\right)^{k_m}}{k_m!}}\label{eq:equation7}
\end{eqnarray}}
The time complexity of the direct calculation of {\scriptsize$e^{'}_1\left(k_h+1\right)$} is {\scriptsize$O\left(mK\right)$} based on Equation \ref{eq:equation7}\revision{; as a result, the time complexity of calculating $e_1$ is {\scriptsize$O\left(mK^2\right)$}.} \revision{However, we can build the connection between {\scriptsize$e^{'}_1\left(k_h+1\right)$} and {\scriptsize$e^{'}_1\left(k_h\right)$} as follows:
{\scriptsize\begin{eqnarray}
	&&e^{'}_1\left(k_h+1\right)-e^{'}_1\left(k_h\right)\notag \\
    &=&2\sum_{k_m=0}^{(m-1)(k_h+1)}{\frac{\left(\sum_{g\neq j}{\alpha_{ig}}\right)^{k_m}}{k_m!}}-2\sum_{k_m=0}^{(m-1)k_h}{\frac{\left(\sum_{g\neq j}{\alpha_{ig}}\right)^{k_m}}{k_m!}}\notag\\
    &=&2\sum_{k_m=(m-1)k_h}^{(m-1)(k_h+1)}{\frac{\left(\sum_{g\neq j}{\alpha_{ig}}\right)^{k_m}}{k_m!}}\label{eq:equation15}
\end{eqnarray}}\vspace{-1ex}
}

\noindent \revision{Therefore, $e^{'}_1\left(k_h+1\right)$ can be calculated through the result of $e^{'}_1\left(k_h\right)$ so that the time complexity of $e^{'}_1\left(k_h\right)$ can be reduced to $O\left(m\right)$.}
Then we can do the same analysis for the \revision{second term} $e_2$ of right hand side of Equation \ref{eq:equation6}:\vspace{-1ex}
{\scriptsize\begin{eqnarray}
	&&\frac{1}{m}\sum_{k_h=0}^{K}\sum_{k_m=0}^{(m-1)K}k_m\mathbb{I}((m-1)k_h-k_m)p\left(r_{ij}, k_h, k_m\right) \notag\\
	&=&\frac{1}{m}\sum_{k_h=0}^{K}{e^{-\sum_{j=1}^{m}{\alpha_{ij}}}\frac{\left(\alpha_{ij}\right)^{k_h}}{k_h!}}e^{'}_2\left(k_h\right)\notag
\end{eqnarray}}\vspace{-1ex}

\noindent where {\scriptsize$e^{'}_2\left(k_h\right)=-\sum_{k_m=1}^{(m-1)K}{\frac{\left(\sum_{g\neq j}{\alpha_{ig}}\right)^{k_m}}{(k_m-1)!}}+2\sum_{k_m=1}^{(m-1)k_h}{\frac{\left(\sum_{g\neq j}{\alpha_{ig}}\right)^{k_m}}{(k_m-1)!}}$}\revision{, and we can do a similar analysis as Equation \ref{eq:equation15}.}

Algorithm \ref{algo:simple_repre_algorithm} is proposed through the above analysis. By reducing the time complexity of $e^{'}_1\left(k_h\right)$ and $e^{'}_2\left(k_h\right)$ from $O\left(mK\right)$ to $O\left(m\right)$, the time complexity of Algorithm \ref{algo:simple_repre_algorithm} becomes $O\left(mK\right)$.
\begin{algorithm}[t]
	{\small
	\DontPrintSemicolon
	\KwIn{the number $m$ of HGrids per MGrid, $\alpha_{ij}$ for each HGrid $r_{ij}$ in the MGrid $r_i$, \revision{a hyper-parameter $K$}}
	\KwOut{the expression error $E_e(i,j)$ of the HGrid $r_{ij}$}
	
	$p_2\gets 1$;
    $e^{'}_1,e^{'}_2\gets 0$\;
	\For(// initialize $e^{'}_1$ and $e^{'}_2$){$k_m=0$ to $(m-1)K$}{
		$p_2 \gets p_2 \sum_{g\neq j}{\alpha_{ig}}$\;
		$e^{'}_2\gets e^{'}_2-p_2$\;
		$p_2 \gets p_2 / (k_m + 1)$\;
		$e^{'}_1\gets e^{'}_1-p_2$\;
	}
	$p_1\gets e^{-\sum_{j=1}^{m}{\alpha_{ij}}}$\;
    $p_2\gets 1$;
    $e_1,e_2\gets 0$\;
	\For(// calculate the value of $e_1$ and $e_2$){$k_h=1$ to $K$}{
		\For{$k_m=\left(m-1\right)(k_h-1)$ to $\left(m-1\right)k_h$}{
			$e^{'}_2\gets e^{'}_2+2p_2$\;
			$p_2 \gets \frac{p_2}{k_m + 1}$\;
			$e^{'}_1\gets e^{'}_2+2p_2$\;
			$p_2 \gets p_2 \sum_{g\neq j}{\alpha_{ig}}$\;
		}
		$e_1\gets e^{'}_1p_1+e_1$\;
		$p_1 \gets \frac{p_1 \alpha_{ij}}{k_h}$\;
		$e_2\gets e^{'}_2p_1+e_2$\;
	}
	$E_e(i,j)\gets \frac{m-1}{m}e_1-\frac{e_2}{m}$\;
    \Return $E_e(i,j)$\;
	\caption{\small Fast Expression Error Calculation}
	\label{algo:simple_repre_algorithm}}
\end{algorithm}

\subsection{Analysis of Model Error}
\label{sec:prediction_error}
 In this section, we can estimate model error with the mean absolute error for each HGrid. Suppose we use the model $f$ to predict the event number $\hat{\lambda}_i$ (i.e., $\hat{\lambda}_i=f(x_i)$) for the next stage of the MGrid through the historical information of the events {\scriptsize$X$}. Let denote the dataset of each MGrid $r_{i}$ as {\scriptsize$X_{i}$}, and we have {\scriptsize$\cup_{i=1}^{n}X_{i}=X$}. Meanwhile, the number of samples in {\scriptsize$X_{i}$} for each MGrid $r_{i}$ is {\scriptsize$\frac{\left|X\right|}{n}$}. We define the mean absolute error of $f$ as {\scriptsize$MAE(f)$} (i.e., {\scriptsize$MAE(f)=\frac{\sum_{x_i\in \mathbf{X}}{\left|f(x_i)-\lambda_i\right|}}{\left|\mathbf{X}\right|}$}), and we have\vspace{-2ex}
 
{\scriptsize\begin{eqnarray}
	\lim_{\left|\mathbf{X}\right|\to \infty}MAE\left(f\right)&=&\lim_{\left|\mathbf{X}\right|\to \infty}\frac{\sum_{x_i\in \mathbf{X}}{\left|f(x_i)-\lambda_i\right|}}{\left|\mathbf{X}\right|}\notag \\
	&=&\frac{1}{n}\sum_{i=1}^{n}\lim_{\left|\mathbf{X}_i\right|\to \infty}\frac{\sum_{x_j\in \mathbf{X}_i}{\left|f(x_j)-\lambda_j\right|}}{\left|\mathbf{X}_i\right|}\notag\\
	&=&\frac{1}{n}\sum_{i=1}^{n}{E\left(\left|\hat{\lambda}_i-\lambda_{i}\right|\right)}\notag
\end{eqnarray} }\vspace{-1ex}

We can get the relationship between the model error {\scriptsize$E_m\left(i,j\right)$} and {\scriptsize$MAE\left(f\right)$}:\vspace{-1ex}
{\scriptsize\begin{eqnarray}
  \sum_{i=1}^{n}{\sum_{j=1}^{m}{E_m\left(i,j\right)}}&=&\sum_{i=1}^{n}{\sum_{j=1}^{m}{\mathbb{E}\left(\left|\hat{\lambda}_{ij}-\lambda_{ij}\right|\right)}}=\sum_{i=1}^{n}{m\mathbb{E}\left(\left|\hat{\lambda}_{ij}-\lambda_{ij}\right|\right)}\notag\\
  &=&\sum_{i=1}^{n}{\mathbb{E}\left(\left|\hat{\lambda}_i-\lambda_{i}\right|\right)}\approx nMAE(f)\label{eq:prediction_error}
\end{eqnarray}}\vspace{-2ex}

According to Equation \ref{eq:prediction_error}, the total model error will increase when $n$ increases. However, based on the analyses in Section \ref{sec:expression_analyses}, the total expression error will decrease when $n$ increases. We have proved that the summation of expression error and model error is a upper bound of real error, which will first decrease then increase when $n$ increase from 1 to $N$. Thus, to minimize the total real error, we will propose two efficient algorithms to select a proper $n$ in next section.

\section{Search for Optimal Grid Size}
\label{sec:solution2}
From the analysis in Section \ref{sec:erroranalysis},  the size of $n$ will affect expression error and model error of each HGrid $r_{ij}$, which will further affect the upper bound of real error. A straightforward algorithm that checks  all the values of $n$ can  achieve the optimal solution for OGSS with the complexity of {\scriptsize$O(\sqrt{N})$}, which is not efficient. Therefore, we will propose two more efficient algorithms to solve OGSS in this section. We first introduce the upper bound calculation of real error.
\subsection{Calculation of Upper Bound for Real Error }
In practice, it is difficult to directly estimate the real error of each HGrid, and then select the optimal partitioning size.  Theorem \ref{theo:ubte} proves that the summation of expression error and model error is an upper bound of real error. Thus, we can turn to minimize expression error and model error, whose calculations have been discussed in Section \ref{sec:erroranalysis}. Specifically, we can use Algorithm \ref{algo:simple_repre_algorithm} to calculate expression errors and Equation \ref{eq:prediction_error} to estimate model errors.
	
%

Based on the analysis in Section \ref{sec:erroranalysis}, we propose our algorithm showed in Algorithm \ref{algo:opg_1} to calculate $e(\sqrt{n})$ (i.e., {\scriptsize$e(\sqrt{n})=\sum_{i=1}^{n}\sum_{j=1}^{m}E_u\left(i,j\right)$}), which is an approximate problem of OGSS. The time cost of training the model is considerable when calculating the error {\scriptsize$e(\sqrt{n})$}. Therefore, we will introduce two algorithms with fewer computations of {\scriptsize$e(\sqrt{n})$}.

\begin{algorithm}[t]
	{\small
	\DontPrintSemicolon
	\KwIn{\small the number of MGrid $n$, the number of all HGrids $N$, \revision{dataset $\mathbf{X}$,} a prediction method $Model$}
	\KwOut{\small$e(\sqrt{n})$}
    $m\gets$ {\scriptsize$\left\lceil \sqrt{\frac{N}{n}}\right\rceil^2$}\;
    $f\gets Model\left(\mathbf{X}\right)$\;
    $e\gets nMAE\left(f\right)$\;
    divide the global space into $N$ HGrids and estimate the $\alpha_{ij}$ for each HGrid $r_{ij}$\;
    \For{$i=1$ to $n$}{
        \For{$j=1$ to $m$}{
            $e\gets e+ E_e\left(i,j\right)$ \tcp{calculated by Algorithm \ref{algo:simple_repre_algorithm}}
        }
    }
    \Return $e$\;
	\caption{\small $UpperBound\left(n, N, \mathbf{X}, Model\right)$}
	\label{algo:opg_1}}
\end{algorithm}
\subsection{Ternary Search}
Without any prior information, we cannot make any optimization of the most straightforward algorithm. Fortunately, it can be concluded from the analysis in Section \ref{sec:erroranalysis} that the model error will increase and the expression error will decrease when $n$ increases.  It means there exists an equilibrium point that minimizes the summation of the expression error and the model error. Consider an extreme case (i.e., $n=1$), the prediction model only needs to predict the number of events for the whole space in the future, which can be very accurate. For example, according to the historical information of New York City (NYC), the number of spatial events (e.g., rider's order) on weekdays almost keeps a relatively stable value without dramatic fluctuations. At this time, the model error is small, but the expression error is considerable. Even if we could know the exact number of orders in the whole NYC for specific timestamp, it would not help for us to dispatch orders for drivers in a particular street area of NYC. When {\scriptsize$n=N$}, the forecasting model needs to predict a mass of grids' events accurately, which will leads to huge model errors due to the uncertainty of human behavior. While the area of a grid is very small, the uncertainty of human activity will lead huge different of prediction. Therefore, we assume that the trend of {\scriptsize$e(\sqrt{n})$} with the increase of $n$ will first go down and then up (This assumption will be confirmed in Section \ref{sec:real_error_grid_size_exp}).

We propose a ternary search algorithm to find the optimal partition size. Given that $n$ is a perfect square, we need to find the optimal $n$ among {\scriptsize$\sqrt{N}$} numbers. Let $l$ be the minimum of {\scriptsize$\sqrt{n}$} and $r$ be the maximum of {\scriptsize$\sqrt{n}$}. The main idea of ternary search is to take the third-equinox between $r$ and $l$ in each round and then compare the corresponding error of the two third-equinox points denoted as $m_r$, $m_l$. If $e(m_r)>e(m_l)$, let $r=m_r$ for next round; otherwise, let $l=m_l$. The ternary search algorithm showed in Algorithm \ref{algo:opg_2} will drop $\frac{1}{3}$ of possible values for $n$ each time, which results in the convergence.

\begin{algorithm}[t]
	{\small\DontPrintSemicolon
	\KwIn{\revision{dataset $\mathbf{X}$,} prediction model $Model$}
	\KwOut{partition size $n$ that minimize $e(\sqrt{n})$}
    use the method analyzed in Section \ref{sec:region_depart} to select $N$\;
    $l\gets 1$;
    $r\gets \sqrt{N}$\;
    \While{$r-l>1$}{
    $m_r\gets \left\lceil \frac{2}{3}r+\frac{1}{3}l\right\rceil $\;
    $m_l\gets \left\lfloor \frac{1}{3}r+\frac{2}{3}l\right\rfloor $\;
    $e(m_l)\gets UpperBound\left(m_l^2, N, \mathbf{X}, Model\right)$\;
    $e(m_r)\gets UpperBound\left(m_r^2, N, \mathbf{X}, Model\right)$\;
    \eIf{$e(m_l)>e(m_r)$}{
        $l\gets m_l$\;
    }{
        $r\gets m_r$\;
    }
    }
    \eIf{$e(l)>e(r)$}{
    $n\gets r^2$\;
    }{
    $n\gets l^2$\;
    }
    \Return $n$\;
	\caption{\small Ternary Search}
	\label{algo:opg_2}}
\end{algorithm}
If the graph of function {\scriptsize$e(\sqrt{n})$} has only one minimum point, then the ternary search will find the optimal solution. However,  the graph of function {\scriptsize$e(\sqrt{n})$} is not always ideal, but the ternary search algorithm can also find a good solution.

\noindent \textbf{Time Complexity.} For a given $N$, we can mark the algorithm complexity as {\scriptsize$T(\sqrt{N})$}. We know that the algorithm will drop $\frac{1}{3}$ of the possible values from the above analysis, thus converting the original problem into a subproblem. Therefore, we have: {\scriptsize$T(\sqrt{N})=T(\frac{2}{3}\sqrt{N})+2$}.
We can infer that the time complexity of Algorithm \ref{algo:opg_2} is {\scriptsize$O(\log{\sqrt{N}})$} according to the master theorem.

\subsection{Iterative Method}
\label{sec:iterative}
Although the ternary search algorithm reduces the algorithm complexity from {\scriptsize$O(\sqrt{N})$} to {\scriptsize$O(\log{\sqrt{N}})$} based on the traversal algorithm, the experiments in Section \ref{sec:experimental} show that the ternary search algorithm may miss the optimal global solution in some situations. Therefore, we will introduce an iteration-based algorithm with a greater probability of achieving the optimal $n$ in this section.


Considering that the upper bound on the real error is large when   $n$ is either large or  small, the global optimal  value for $n$ tends to be somewhere in the middle. We can roughly choose the possible value $p$ of the optimal solution through practical experience and then take this value as the initial position to conduct a local search. We set a search boundary $b$, and if the size of error for the current position is smaller than all possible regions within the boundary $b$, the current position is likely to be the optimal solution. In order to speed up the search process, we start the current position of searching from the boundary $b$ to avoid local traversal when {\scriptsize$e(\sqrt{n})$} is monotonous.  The details of the algorithm is shown in Algorithm \ref{algo:opg_3}.

In Algorithm \ref{algo:opg_3}, the choice of the initial position $p$ and the setting of the search boundary $b$ significantly affect the quality of its result and its efficiency. 
Based on the experience from the existing studies \cite{wang2020demand}, we use the default grid of $2km\times 2km$ (i.e., approximately $16\times 16$) as the corresponding initial position to speed up the search for the global optimal $n$. 
On the other hand, the search boundary $b$ has an essential influence on the properties of the solution and the algorithm's efficiency. When the search boundary is large, the probability of the algorithm finding the optimal solution will increase, but the efficiency of the algorithm execution will decrease. On the contrary, the algorithm can converge quickly when the search boundary is small with a small probability of finding the optimal solution.

\begin{algorithm}[t]
	{\small\DontPrintSemicolon
		\KwIn{\revision{dataset $\mathbf{X}$,} prediction model $Model$}
		\KwOut{partition size $n$ that minimize $e(\sqrt{n})$}
		use the method analyzed in Section \ref{sec:region_depart} to select $N$\;
		$p\gets 16$; $b\gets 4$\;
		$flag\gets true$\;
		\While{$flag$}{
			$flag\gets false$\;
			\For{$i=b$ to $1$}{
                $e(p+i)\gets UpperBound\left((p+i)^2, N, \mathbf{X}, Model\right)$\;
                $e(p-i)\gets UpperBound\left((p-i)^2, N, \mathbf{X}, Model\right)$\;
				\If{$e(p) > e(p + i)$}{
					$p \gets p + i$\;
					$flag \gets true$\;
					break\;
				}
				\If{$e(p) < e(p - i)$}{
					$p \gets p - i$\;
					$flag \gets true$\;
					break\;
				}
			}
		}
        $n\gets p^2$\;
		\Return $n$\;
		\caption{\small Iterative Method}
		\label{algo:opg_3}}
\end{algorithm}
\section{Experimental Study}
\label{sec:experimental}

\subsection{Data Set}

We use realistic data to study the property of expression error and model error. 

\revision{\textbf{New York Taxi Trip Dataset.} New York Taxi and Limousine Commission (TLC) Taxi Trip Data \cite{nyc-web} includes the taxi orders in NYC. We use the Taxi Trip Dataset from January to May 2013 (i.e., January to April as  training  set, May 1st to 27th as  validation  set, and May 28th as  test set). There are 282,255 orders in test set. The size of the whole space is $23km\times 37km$ (i.e., \ang{-73.77}$\sim$\ang{-74.03}, \ang{40.58}$\sim$\ang{40.92}).} Since the number of other types of taxis in NYC is much smaller than that of yellow taxi, we only use the trip data of yellow taxi. Each order record contains the pick-up and drop-up locations, the pick-up timestamp, and the driver's profit.

\revision{\textbf{Chengdu Taxi Trip Dataset.} DiDi Chuxing GAIA Open Dataset \cite{gaia} provides taxi trips in Chengdu, China. We use the taxi trip records from November 1st, 2016 to November 25th, 2016 as training set, November 26th to 29th, 2016 as validation set and November 30th, 2016 as test data set. There are 238,868 orders in test set. The size of Chengdu is also $23km\times 37km$ (i.e., \ang{103.93}$\sim$\ang{104.19}, \ang{30.50}$\sim$\ang{30.84}).}

\revision{\textbf{Xi'an Taxi Trip Dataset.} DiDi Chuxing GAIA Open Dataset \cite{gaia} also provides a dataset of taxi trips in Xi'an, China. We use the taxi trip records from October 1st, 2016 to October 25th, 2016 as training data set, October 26th to 29th, 2016 as validation set and October 30th, 2016 as test set. There are 109,753 orders in test set. The size of Xi'an is $8.5km\times 8.6km$ (i.e., \ang{108.91}$\sim$\ang{109.00}, \ang{34.20}$\sim$\ang{34.28}).}

\revision{Please refer to Appendix A for the distributions of the datasets.}

\subsection{Experiment Configuration}
\revision{We use three prediction models to predict the numbers of  future spatial events in different regions:}

\revision{\textbf{Multilayer Perceptron (MLP) \cite{rosenblatt1961principles}}: We use a neural network consisting of six fully connected layers. The numbers of hidden units on each layer are 1024, 1024, 512, 512, 256 and 256. When the size of MGrid is $n$, we can get the model input (8, $\sqrt{n}$, $\sqrt{n}$), which represents the number of all regions in nearest eight time slots, and we use a flatten layer to map the original input to a vector with the size of $8\times n$ before it is fed into the model.}

\textbf{DeepST} \cite{zhang2017deep}: DeepST divides a day into 48 time slots (i.e., 30 minutes per time slot) and calculates inflow and outflow of the events. As a result, DeepST can calculate the number of events in the next time slot by predicting the inflow and outflow status of events in the next time slot. It uses three types of historical information: closeness, period and trend. Closeness expresses the number of events in the nearest eight time slots, period expresses the number of events at the same time slot of the previous eight days, and trend represents the number of events at the same time slot of the previous eight weeks. \revision{DeepST mainly utilizes the spatial information to predict the spatial events for next time slot.}

\revision{\textbf{Dmvst-Net} \cite{yao2018deep}: Dmvst-Net models the correlations between future demand and recent historical data via long short term memory (LSTM) and models the local spatial correlation via convolutional neural network (CNN). Moreover, Dmvst-Net models the correlations among regions sharing similar temporal patterns. Compared with DeepST, Dmvst-Net utilizes both spatial and temporal information, which leads to a better performance of the prediction model.}

Since the size of model input for DeepST and Dmvst-Net is different in the experiment, we need to map the original input to the same $shape$  to ensure that the model structure does not change significantly through a conditional deconvolution layer. When the number of MGrid is $n$, that is, the input dimension of the model is $(2, \sqrt{n}, \sqrt{n})$, the size $k$ of the convolution kernel and step size $s$ of the convolutional layer can be obtained through the following formula:
{\small\begin{eqnarray}
	s&=&\left\lfloor \frac{shape}{n-1}\right\rfloor \notag\\
	k&=&shape-s\left(n-1\right) \notag
\end{eqnarray}}
Here, we set $shape=128$ in our experiment. Finally, we add a convolution layer with the same stride and the same size as the deconvolution layer as the last layer of DeepST.

\begin{table}\vspace{-2ex}
	\centering
	{\small\scriptsize
		\caption{\small Experiment Setting for Training Model} \label{tab:experiment_configure}
		\revision{
			\begin{tabular}{c|c}
				{\bf Symbol} & {\bf Setting} \\ \hline \hline
				$N$ & {$128\times128$}\\
				$n$ & {4$\times$4,$\dots$,\textbf{16$\times$16},$\dots$,75$\times$75,76$\times$76}\\
				time slot & {30 minutes}\\ 
				prediction model & {MLP, \textbf{DeepST}, Dmvst-Net}\\
				\hline
				\hline
			\end{tabular}
		}
	}\vspace{-2ex}
\end{table}

\revision{As the dataset used in this experiment is the Taxi Trip Dataset, \textbf{Order Count Bias} is used as the metric of model error, expression error and real error in this experiment. Model error represents the difference between the predicted order quantity and the estimated order quantity; expression error represents the difference between the estimated order quantity and the actual order quantity; real error represents the difference between the actual order quantity and the predicted order quantity. Considering that we will constantly change the grid size in the experiment, it is meaningless to consider the error of a single grid; therefore, the errors we discuss in subsequent experiments are the summation of errors of all grids, unless otherwise specified.}

\begin{figure*}[t!]\vspace{-3ex}
	\begin{minipage}{0.245\linewidth}\centering
		\scalebox{0.4}[0.4]{\includegraphics{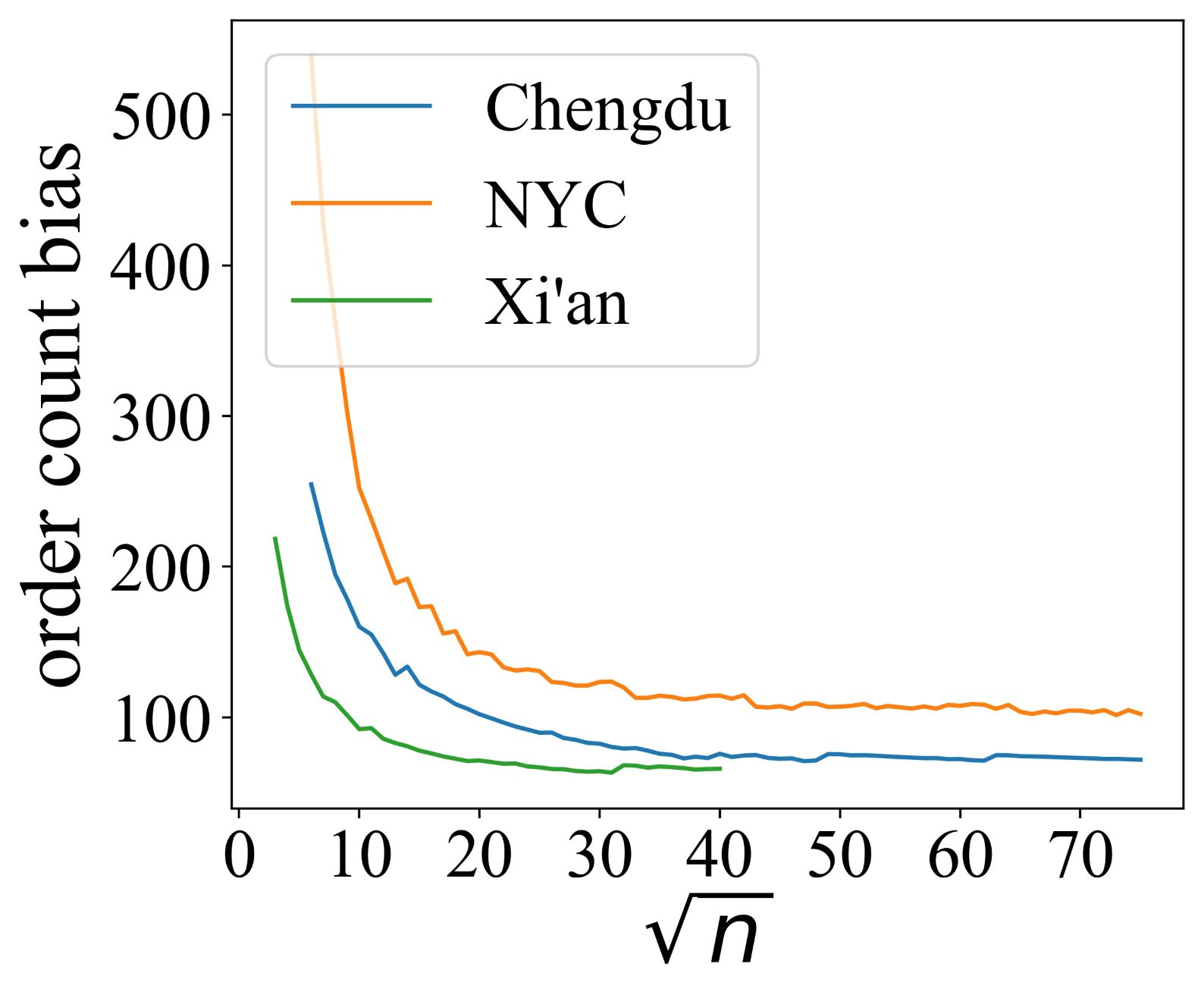}}
		\caption{\small Effect of $n$ on Expression Error in Different Cities}
		\label{fig:multi_citys_expression_error}\vspace{-2ex}
	\end{minipage}
	\begin{minipage}{0.745\linewidth}\centering
		\subfigure[][{\scriptsize Chengdu}]{
			\scalebox{0.4}[0.4]{\includegraphics{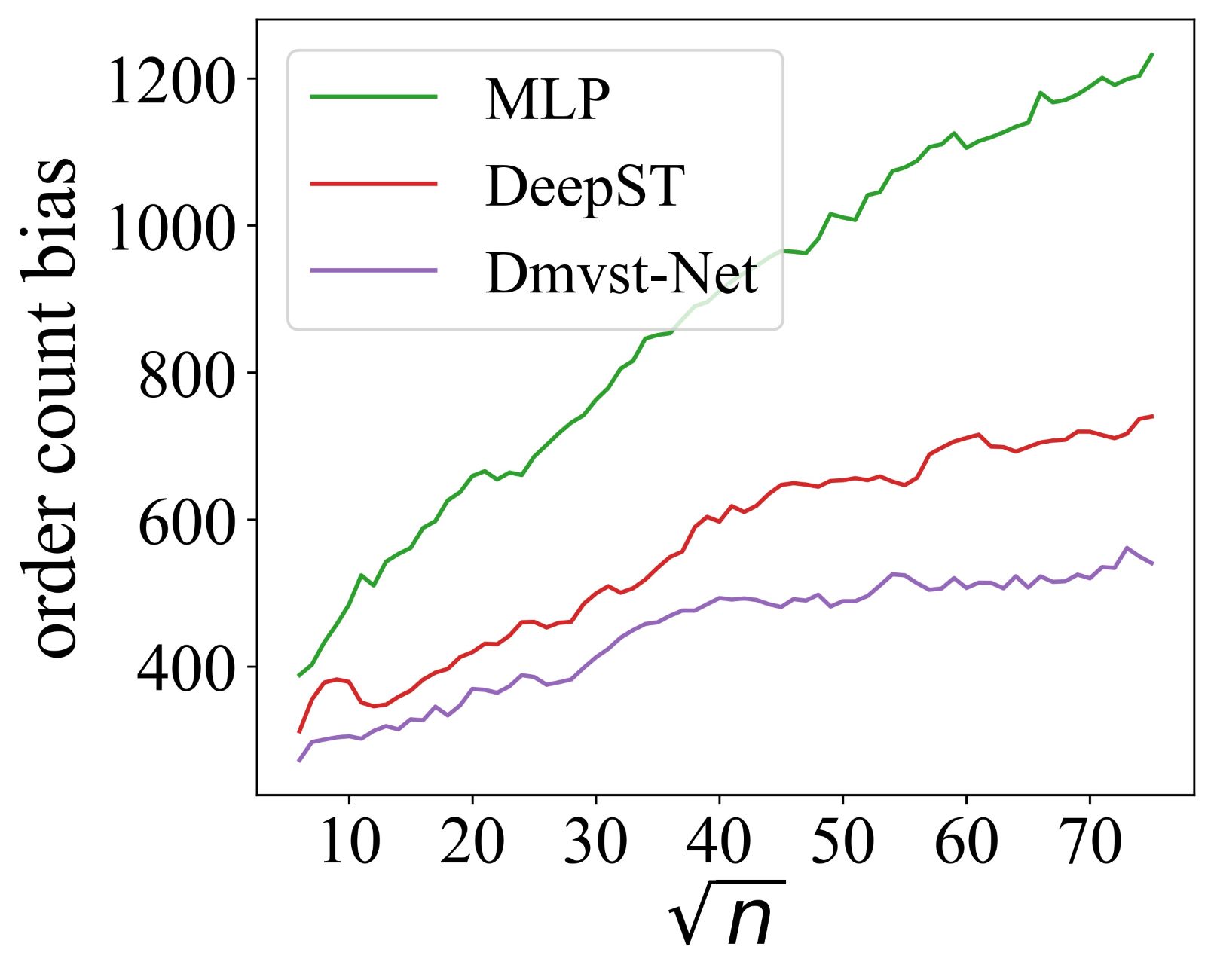}}
			\label{fig:chengdu_deepst_model_error}}
		\subfigure[][{\scriptsize NYC}]{
			\scalebox{0.4}[0.4]{\includegraphics{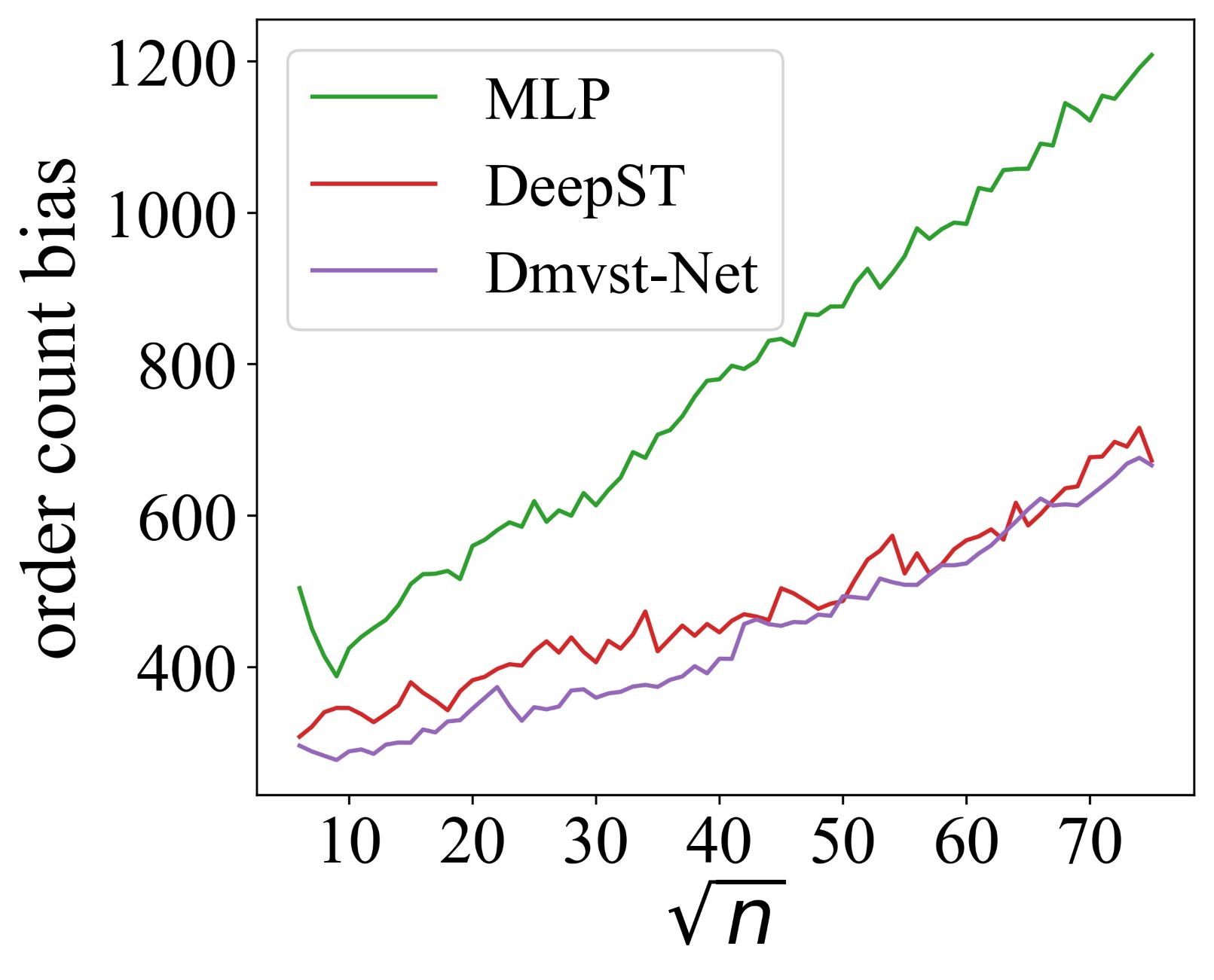}}
			\label{fig:nyc_deepst_model_error}}
		\subfigure[][{\scriptsize Xi'an}]{
			\scalebox{0.4}[0.4]{\includegraphics{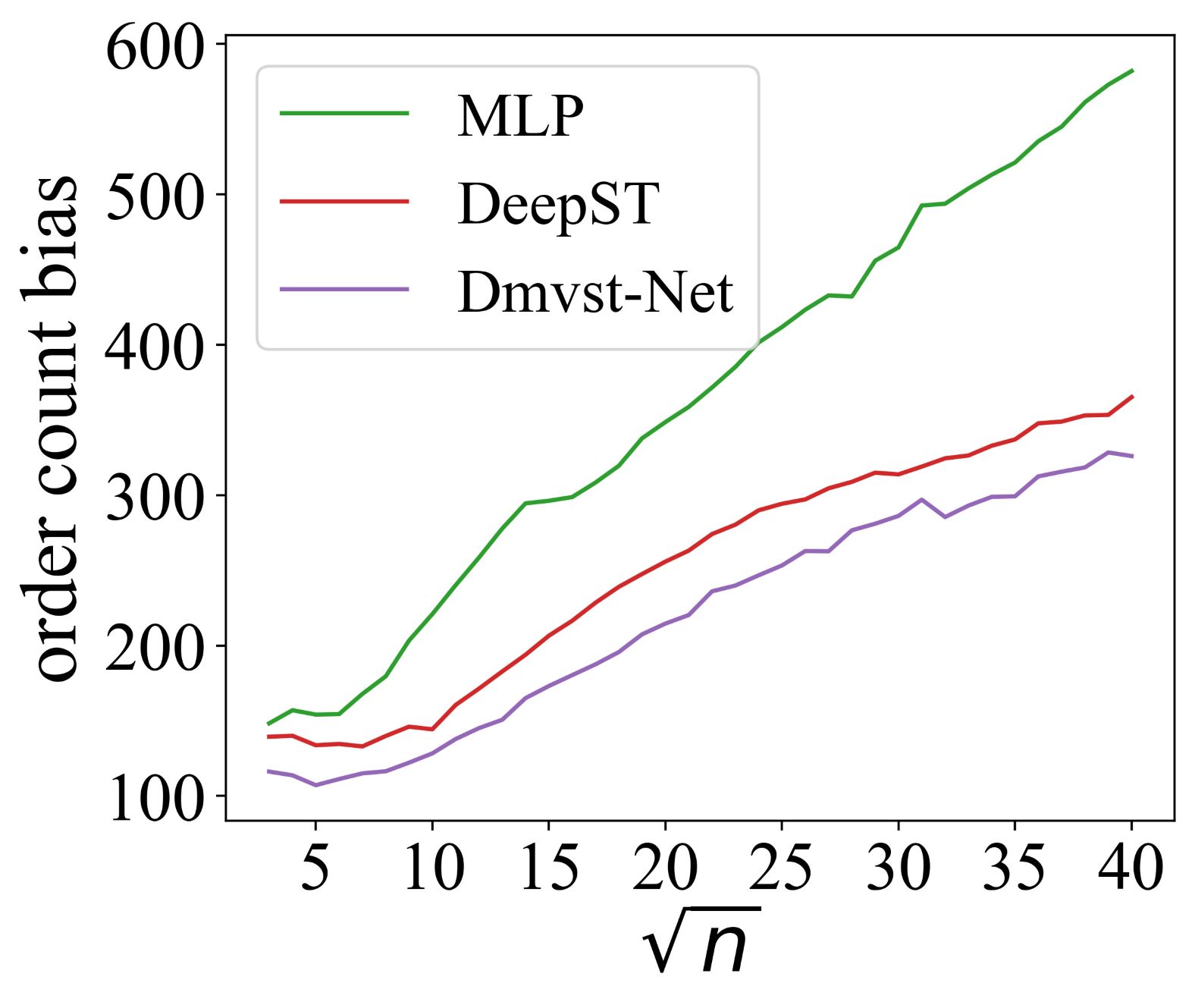}}
			\label{fig:xian_deepst_model_error}}
		\caption{\small Effect of $n$ on the Model Error}
		\label{fig:model_error}\vspace{-2ex}
	\end{minipage}
\end{figure*}

\begin{figure*}[t!]
	\centering
	\subfigure[][{\scriptsize Real Error in Xi'an}]{
		\scalebox{0.48}[0.48]{\includegraphics{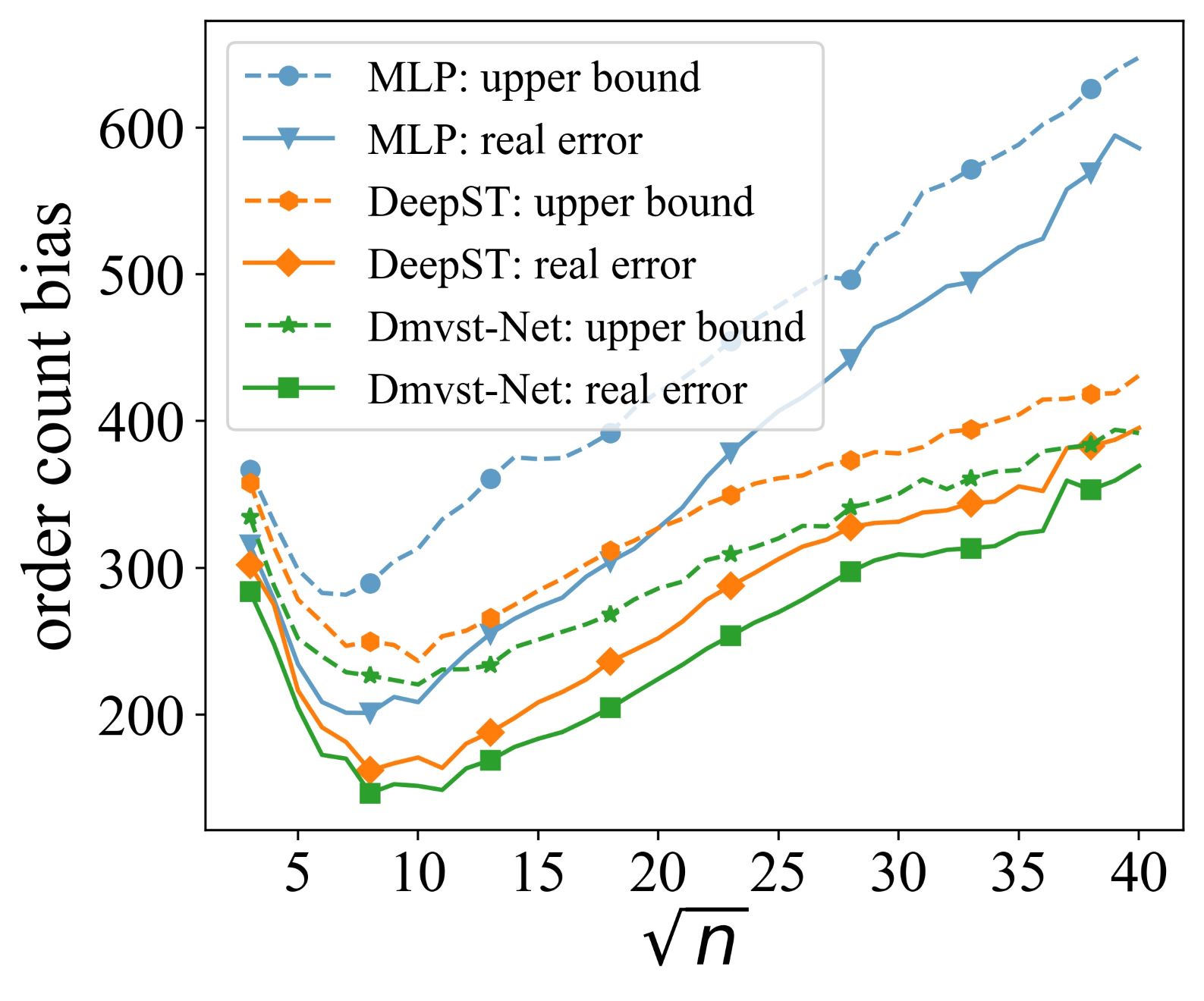}}
		\label{fig:xian_real_error}}
	\subfigure[][{\scriptsize Real Error in Chengdu}]{
		\scalebox{0.48}[0.48]{\includegraphics{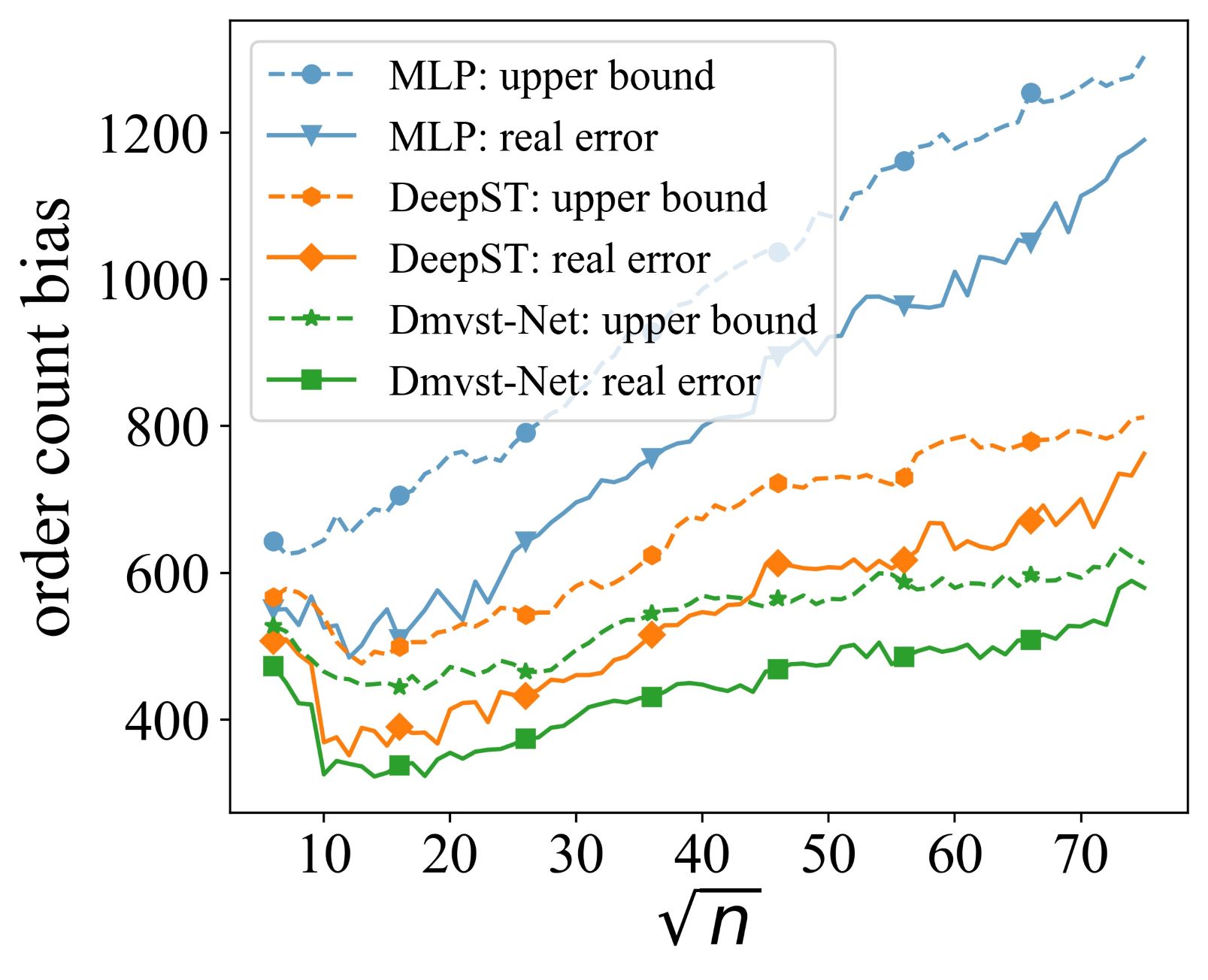}}
		\label{fig:chengdu_real_error}}\vspace{1ex}
	\subfigure[][{\scriptsize Real Error in NYC}]{
		\scalebox{0.48}[0.48]{\includegraphics{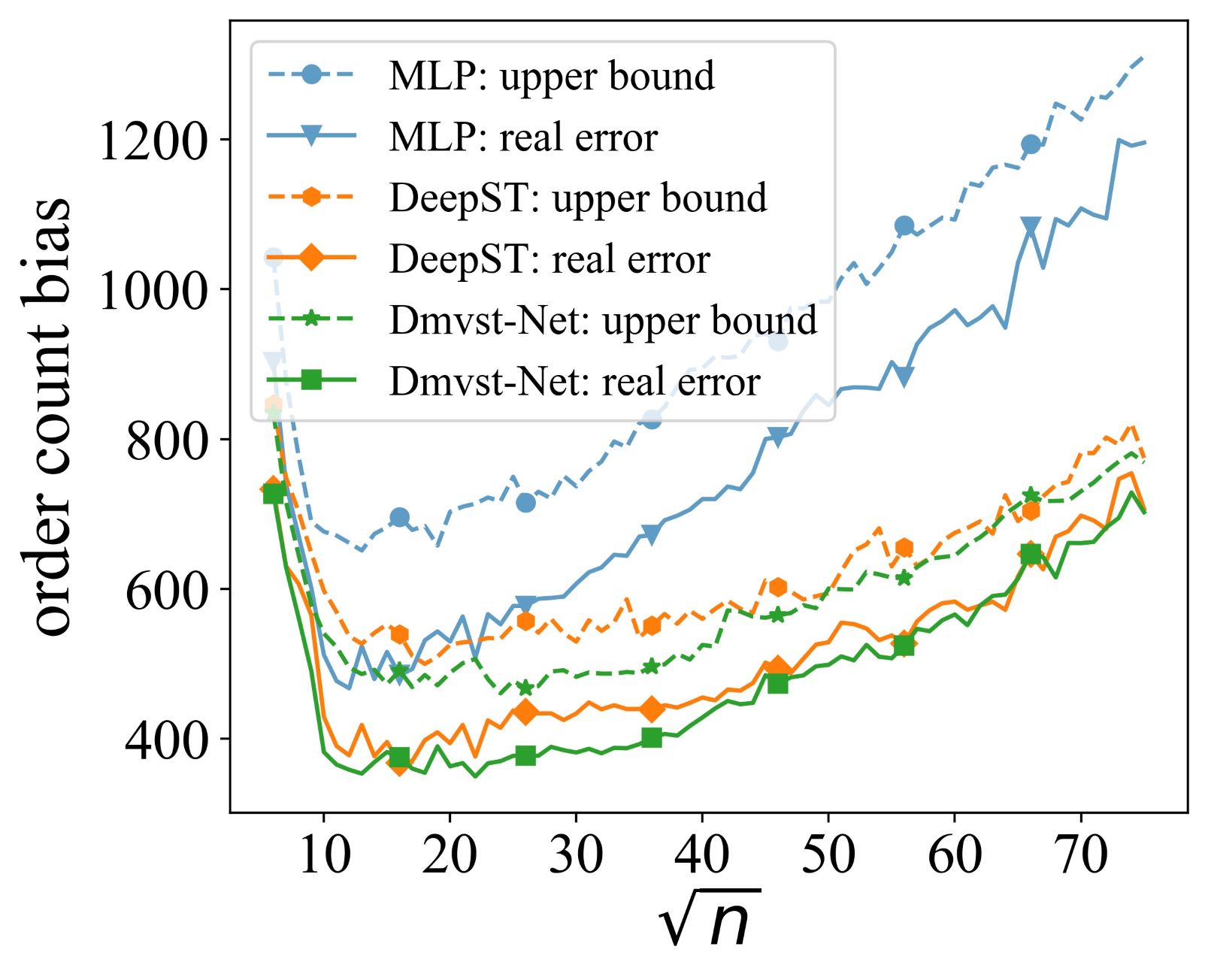}}
		\label{fig:nyc_real_error}}\vspace{-1ex}
	\caption{\small Effect of $n$ on Real Error in Different Cities with Different Prediction Models}
	\label{fig:citys_models_real_error}\vspace{-2ex}
\end{figure*}

In order to calculate the expression error of a HGrid, we need to estimate the mean number of events $\alpha_{ij}$ for the grid $r_{ij}$ in advance. Over a long period, grid environments will change significantly so that the number of events for the same grid does not follow the same distribution. On the other hand, when sample size is small, the estimate of the mean number for events will produce a considerable bias. Therefore, when estimating the mean number of events, we need to choose the appropriate range of adoption. At the same time, considering the remarkable difference about the number of events at different periods in a day and the great difference in the willingness of people to travel on weekdays and workdays, this experiment takes the average number of events at the same period of all workdays in last one month as the mean number $\alpha_{ij}$ of events in the HGrid $r_{ij}$. In subsequent experiments, we estimate $\alpha_{ij}$ by using the number of events between $8:00$ $A.M.$ and $8:30$ $A.M.$ as default unless otherwise stated. 
\revision{
The above experimental settings are summarized in Table \ref{tab:experiment_configure} where the default parameters are in bold. Our experiments are run on AMD Ryzen 5-5600H with 32 GB RAM and GeForce RTX 3050 in Python, while LS, POLAR and DAIF in Java.
}

\revision{
\subsection{Relationship between Real Error and $n$}
\label{sec:real_error_grid_size_exp}
In this section, we mainly show the effect of $n$ on  the expression error and the model error as analyzed in Section \ref{sec:erroranalysis} and verify that real error has the same change trend as its upper bound. \\
\textbf{Expression Error.} We use Algorithm \ref{algo:simple_repre_algorithm} to calculate the expression errors in different cities, which all decrease with the increase of $n$ as shown in Figure \ref{fig:multi_citys_expression_error}. Since orders in NYC are more evenly distributed than in Chengdu, therefore, the expression error of Chengdu is smaller than that of NYC when $n$ is the same. Additionally, the order quantity of Xi'an is much smaller than that of the other two cities. Meanwhile, the order distribution of Xi'an is more uniformly distributed compared with the other two cities. As a result, the expression error of Xi'an is much smaller than that of other cities. We analyze the relationship between expression error and the uniformity of order distribution in detail in Appendix B.\\
\textbf{Model Error.} We test the performance of three prediction models (i.e., MLP, DeepST and Dmvst-Net) on the datasets of NYC and Chengdu as shown in Figures \ref{fig:model_error}. The experimental results show that the model error of the three prediction models all increase with the increase of $n$ on the two data sets. The model errors of DeepST and Dmvst-Net are much smaller than that of MLP with relatively simple model structure, while Dmvst-Net makes use of time information of historical data so that it performs better than DeepST.\\
\textbf{Real Error.} Figure \ref{fig:citys_models_real_error} shows the relationship between real error and its upper bound in different cities while using different prediction models. The real error and its upper bound have the same trend, all falling first and then rising while changing $n$. Comparing with Chengdu, the expression error of NYC is larger, which makes the optimal $n$ of NYC larger than that of Chengdu when the same prediction model is used. For example, the real error of NYC based on Dmvst-Net is also small when $n$ is $30\times30$ as shown in Figure \ref{fig:nyc_real_error}. On the other hand, the prediction model with higher accuracy makes the real error significantly smaller, and also leads to the increase of $n$ that minimizes the real error. Taking NYC as an example, the optimal value of $n$ is 23 when using Dmvst-Net as prediction model; when the prediction model is DeepST, the optimal value of $n$ is 16; when the prediction model is MLP, the optimal value of $n$ is 13. In the case of models with high accuracy, a larger $n$ helps to reduce expression error. Moreover, when we use MLP as a prediction model to forecast the number of orders in Chengdu, Figure \ref{fig:chengdu_real_error} shows that the real error increases varying $n$ as the model error plays a dominant role in the real error while the expression error of Chengdu is small and the model error of MLP is large. In addition, because the space size of the Xi'an dataset is much smaller than that of Chengdu and NYC, the optimal $n$ of Xi'an is smaller than that of the other two cities.
}

\subsection{Case Study on Effect of Minimizing Real Error}
\revision{
In this section, we explore the effect of real error on two crowdsourcing problems (i.e., task assignment \cite{cheng2021queueing, tong2017flexible} and route planning \cite{wang2020demand}). We test two prediction models: Dmvst-Net and DeepST in the experiment.

\textbf{Task Assignment.} Task assignment refers to sending location-based requests to workers, based on their current positions, such as ride-hailing. We use two state-of-the-art prediction-based task assignment algorithms (i.e., LS \cite{cheng2021queueing}, POLAR \cite{tong2017flexible}) to dispatch orders under different values of $n$. The goal of LS is to maximize total revenue while the goal of POLAR is to maximize the number of served orders. Thus, we use the total revenue and order quantity as metrics for the two algorithms. We compare the performance of the two algorithms using different prediction models in this paper. Specific experimental setup in this paper is as the same as the default setting in our previous work \cite{cheng2021queueing}.

\begin{figure}[t!]\vspace{-3ex}
	\subfigure[][{\scriptsize Order Quantity}]{
		\scalebox{0.5}[0.5]{\includegraphics{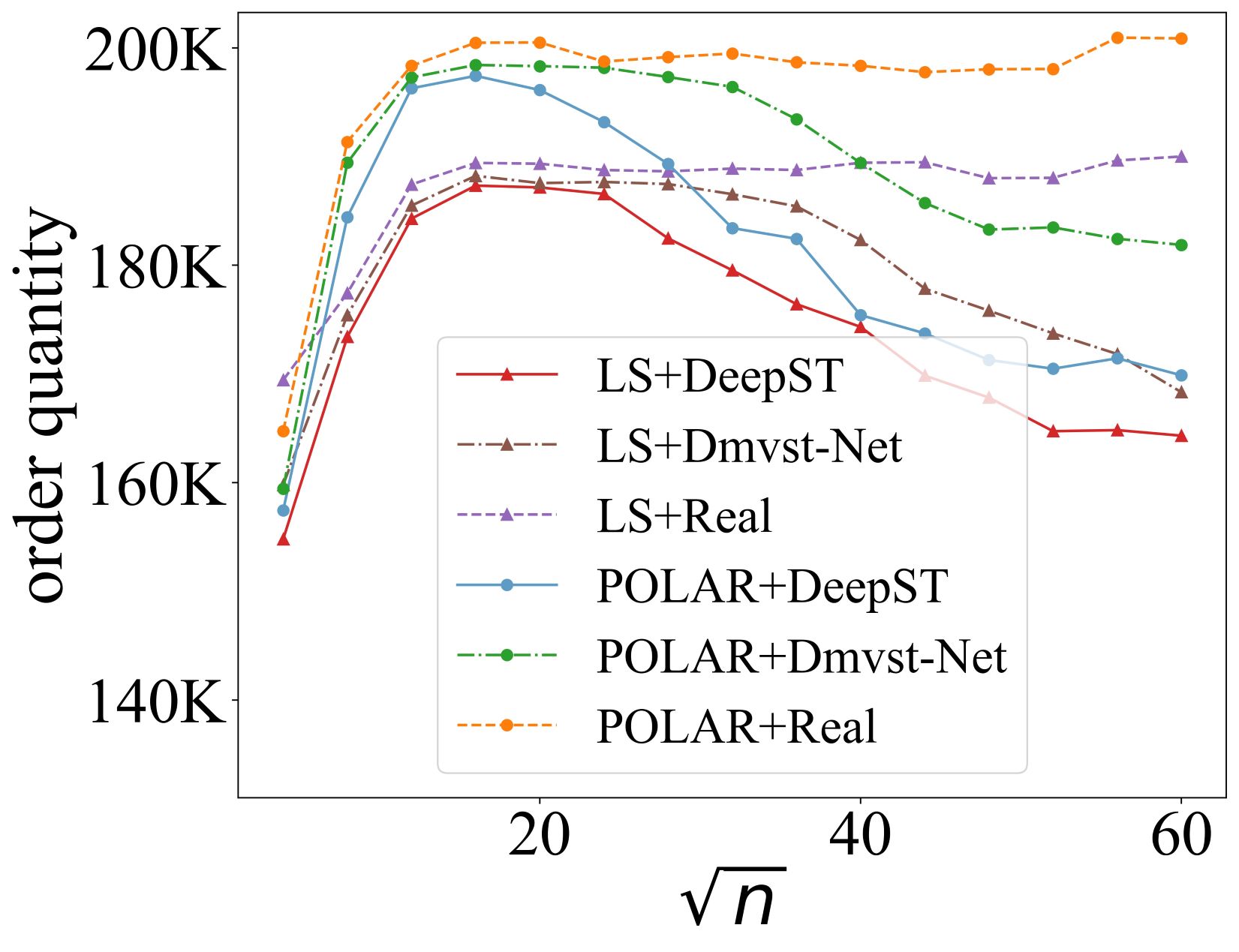}}
		\label{fig:nyc_order_quantity}}
	\subfigure[][{\scriptsize  Total Revenue}]{
		\scalebox{0.5}[0.5]{\includegraphics{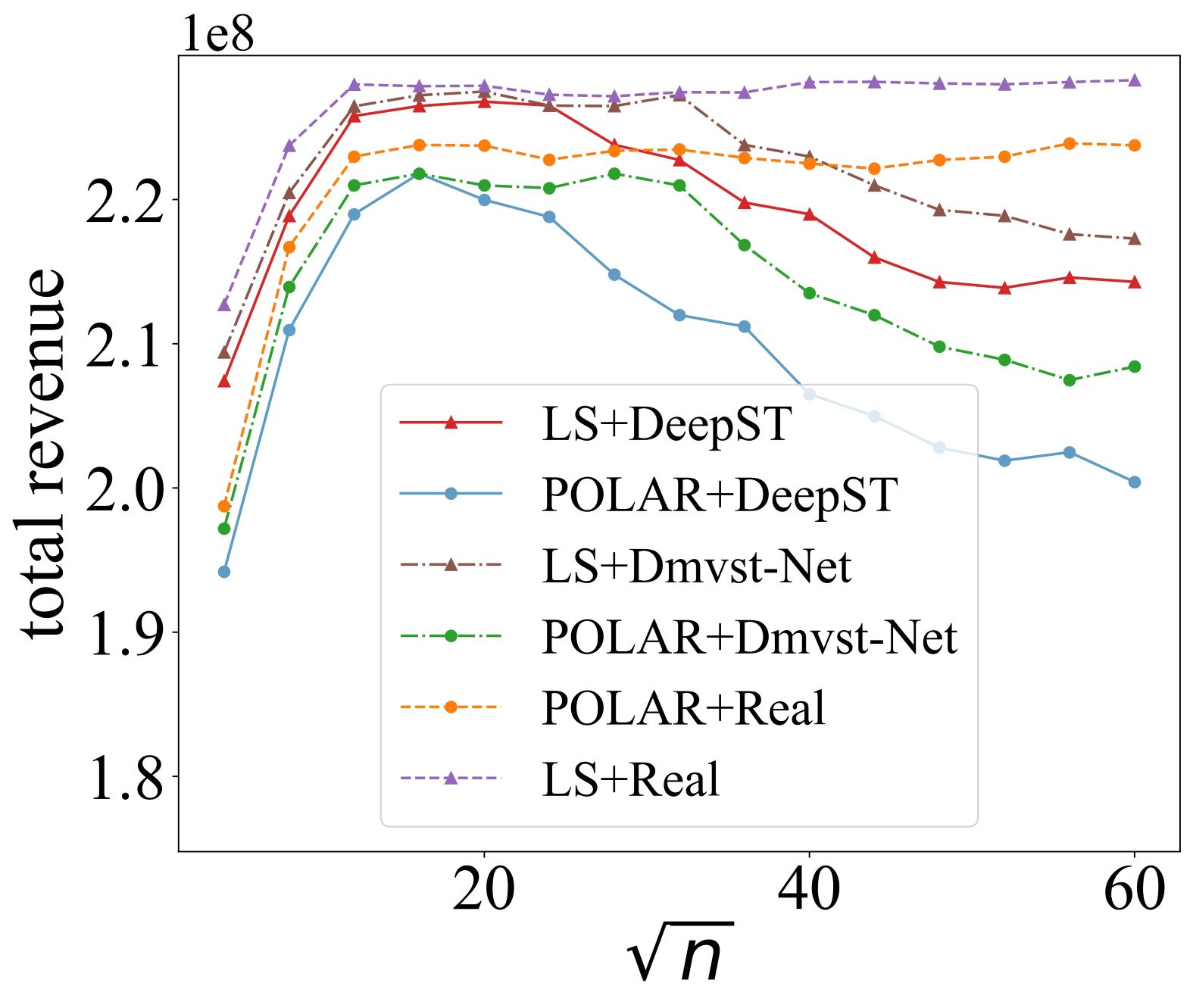}}
		\label{fig:nyc_totall_revenue}}\vspace{-1ex}
	\caption{Effect of $n$ on Task Assignment (NYC)}
	\label{fig:task_assignment_nyc}\vspace{-2ex}
\end{figure}
\begin{figure}[t!]
	\subfigure[][{\scriptsize Order Quantity}]{
		\scalebox{0.5}[0.5]{\includegraphics{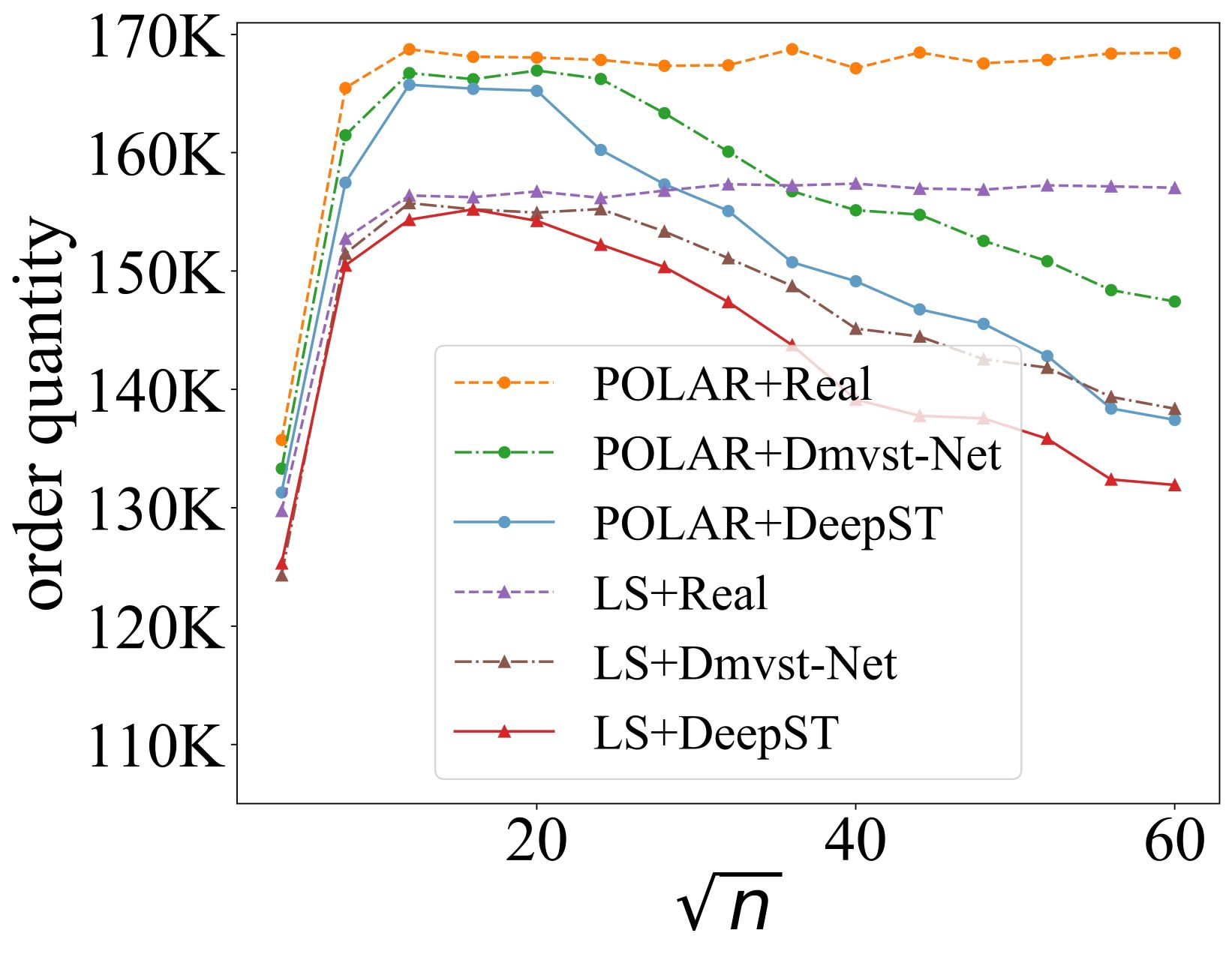}}
		\label{fig:chengdu_order_quantity}}
	\subfigure[][{\scriptsize  Total Revenue}]{
		\scalebox{0.5}[0.5]{\includegraphics{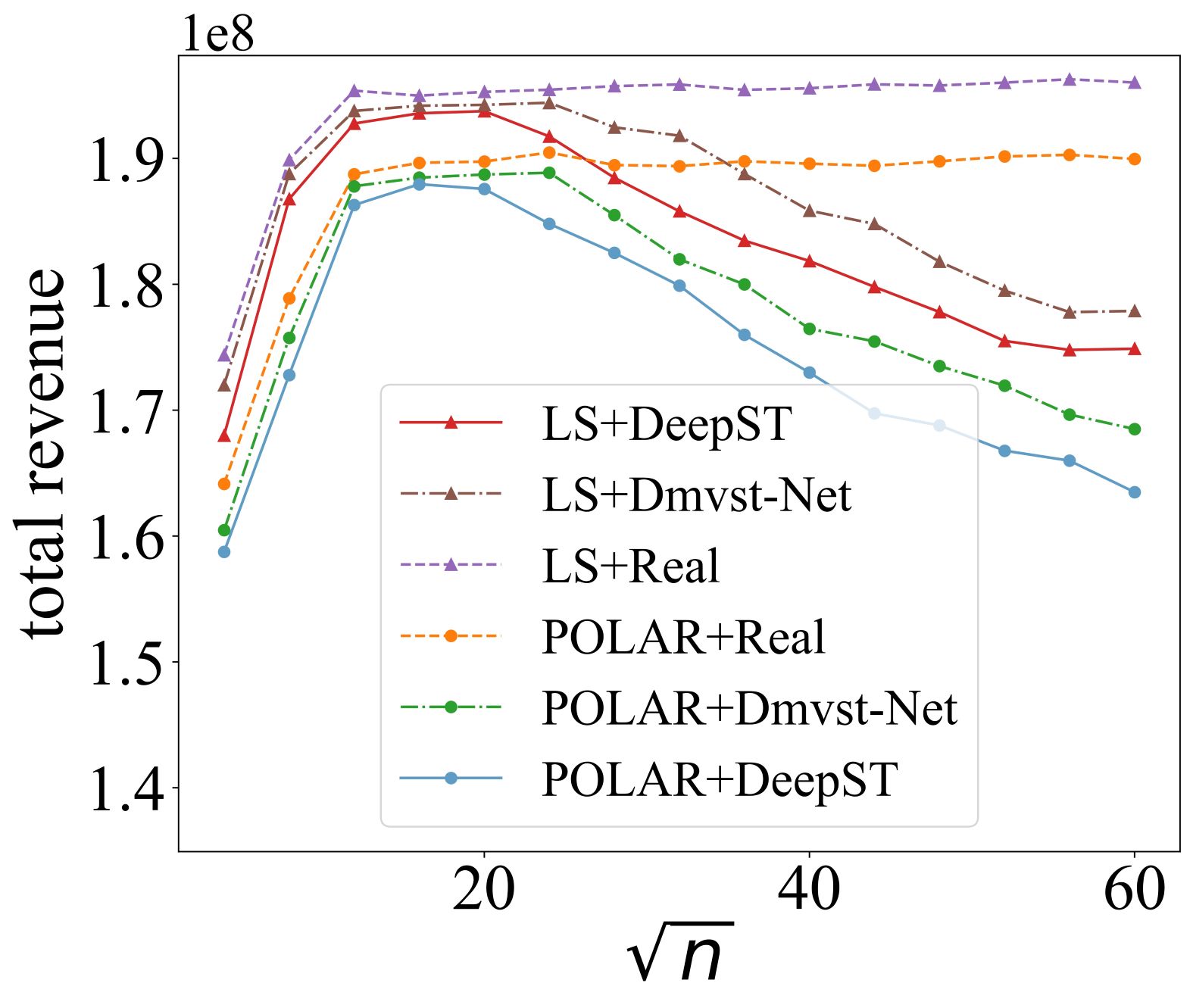}}
		\label{fig:chengdu_totall_revenue}}\vspace{-1ex}
	\caption{Effect of $n$ on Task Assignment  (Chengdu)}
	\label{fig:task_assignment_chengdu}\vspace{-3ex}
\end{figure}

Figure \ref{fig:task_assignment_nyc}$\sim$\ref{fig:task_assignment_xian} show that total revenue and order quantity of the prediction-based dispatching algorithms vary under different values of $n$. When using the predicted results, both algorithms show an increasing first and then decreasing trend in revenue, because the real error is large when $n$ is too small or too large. When POLAR and LS use real order data such that the model error becomes $0$, the real error is equivalent to the expression error. It means that the real error decreases as $n$ increases. Therefore, the performance of Polar and LS will not decrease due to the large $n$ when using the real order data, which is also consistent with the changing trend of the real error. In addition, the order distribution of Xi'an is more even than that of the other two cities because of its smaller area. Therefore, the optimal $n$ in Xi'an is less than that in the other two cities. In short, the experimental results verify that the real error is an important factor affecting the performance of the algorithms in task assignment. 

\textbf{Route Planning.} Route planning is a central issue in shared mobility applications such as ride-sharing, food delivery and crowdsourced parcel delivery. We use the state-of-the-art algorithm, DAIF \cite{wang2020demand}, to verify the effect of $n$ on route planning problem. We use the default parameters of the original paper \cite{wang2020demand} in this experiment, and take the number of served requests and the unified cost as the metrics of DAIF. Figure \ref{fig:route_planning_nyc} shows that the number of served requests of DAIF first increases then decreases when $n$ increases. The unified cost of DAIF is minimized when $n=16\times 16$. Using the actual number of orders, DAIF gets better performance with a large $n$. Although route planning problem is affected less by real error compared with task assignment problem, the size of grid affects the performance of prediction-based algorithms.

\begin{figure}[t!]\vspace{-3ex}
	\subfigure[][{\scriptsize Order Quantity}]{
		\scalebox{0.5}[0.5]{\includegraphics{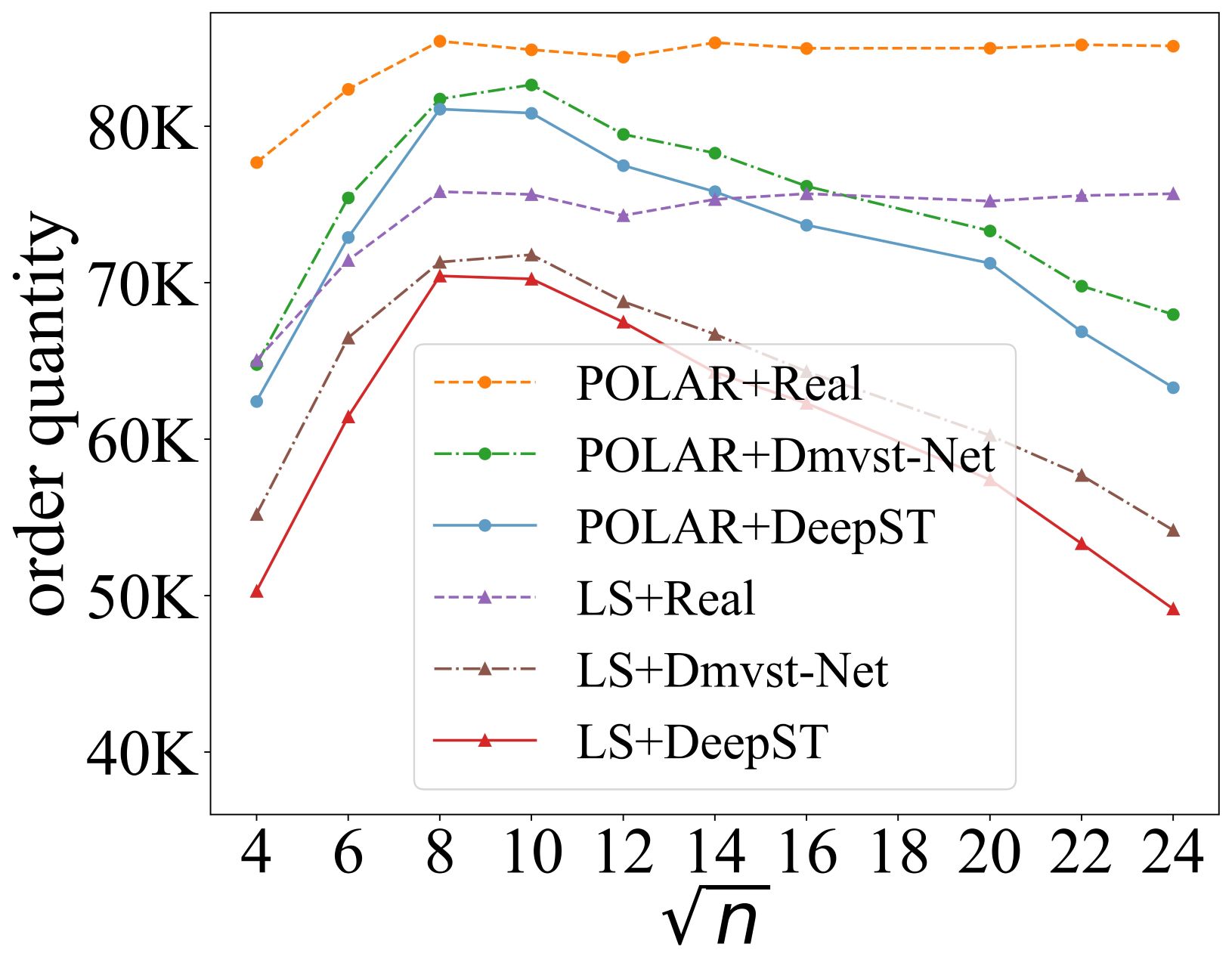}}
		\label{fig:xian_order_quantity}}
	\subfigure[][{\scriptsize Total Revenue}]{
		\scalebox{0.5}[0.5]{\includegraphics{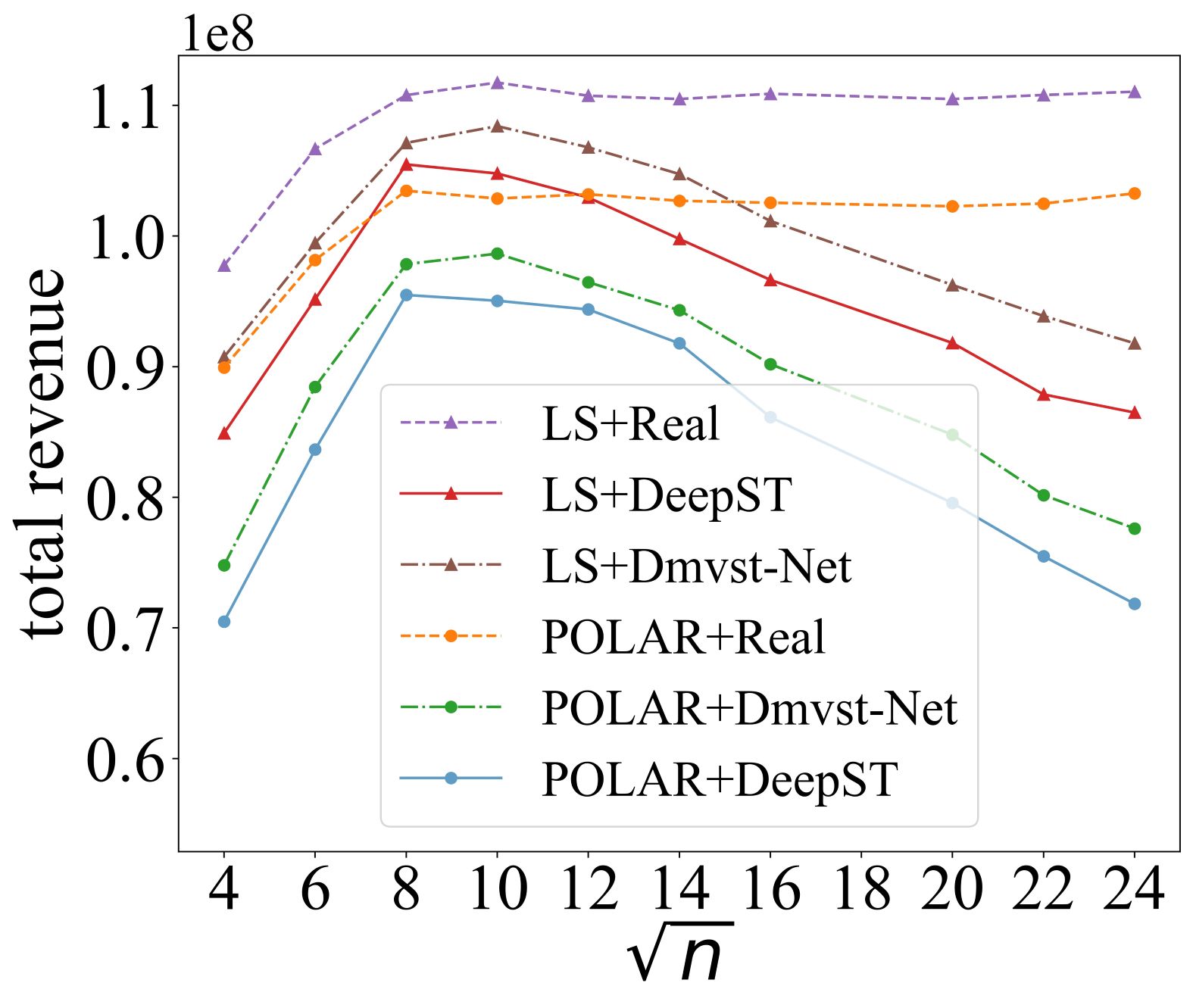}}
		\label{fig:xian_totall_revenue}}\vspace{-1ex}
	\caption{Effect of $n$ on Task Assignment (Xi'an)}
	\label{fig:task_assignment_xian}\vspace{-2ex}
\end{figure}
\begin{figure}[t!]
	\subfigure[][{\scriptsize Served Requests}]{
		\scalebox{0.495}[0.495]{\includegraphics{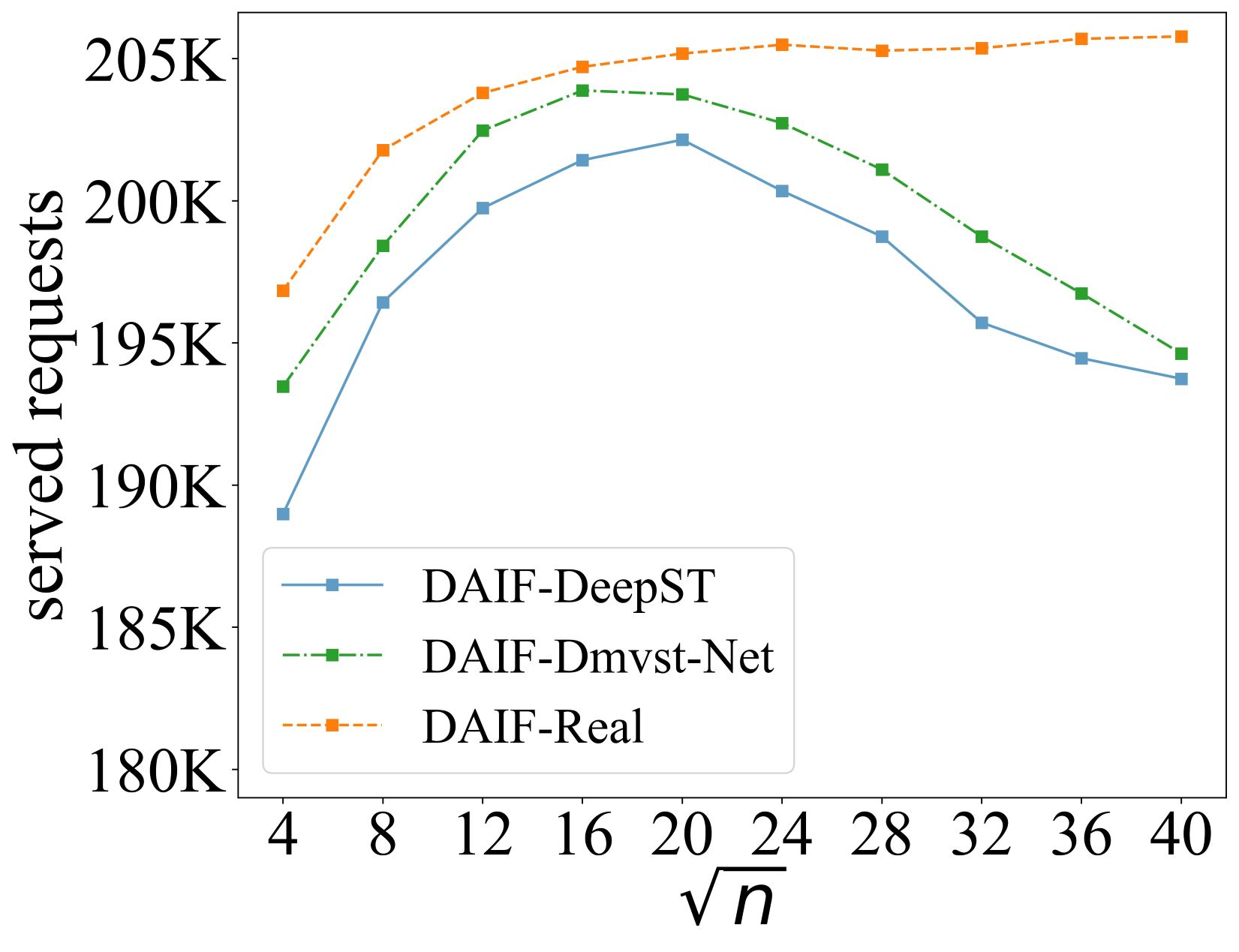}}
		\label{fig:nyc_order_quantity_route}}
	\subfigure[][{\scriptsize Unified Cost}]{
		\scalebox{0.495}[0.495]{\includegraphics{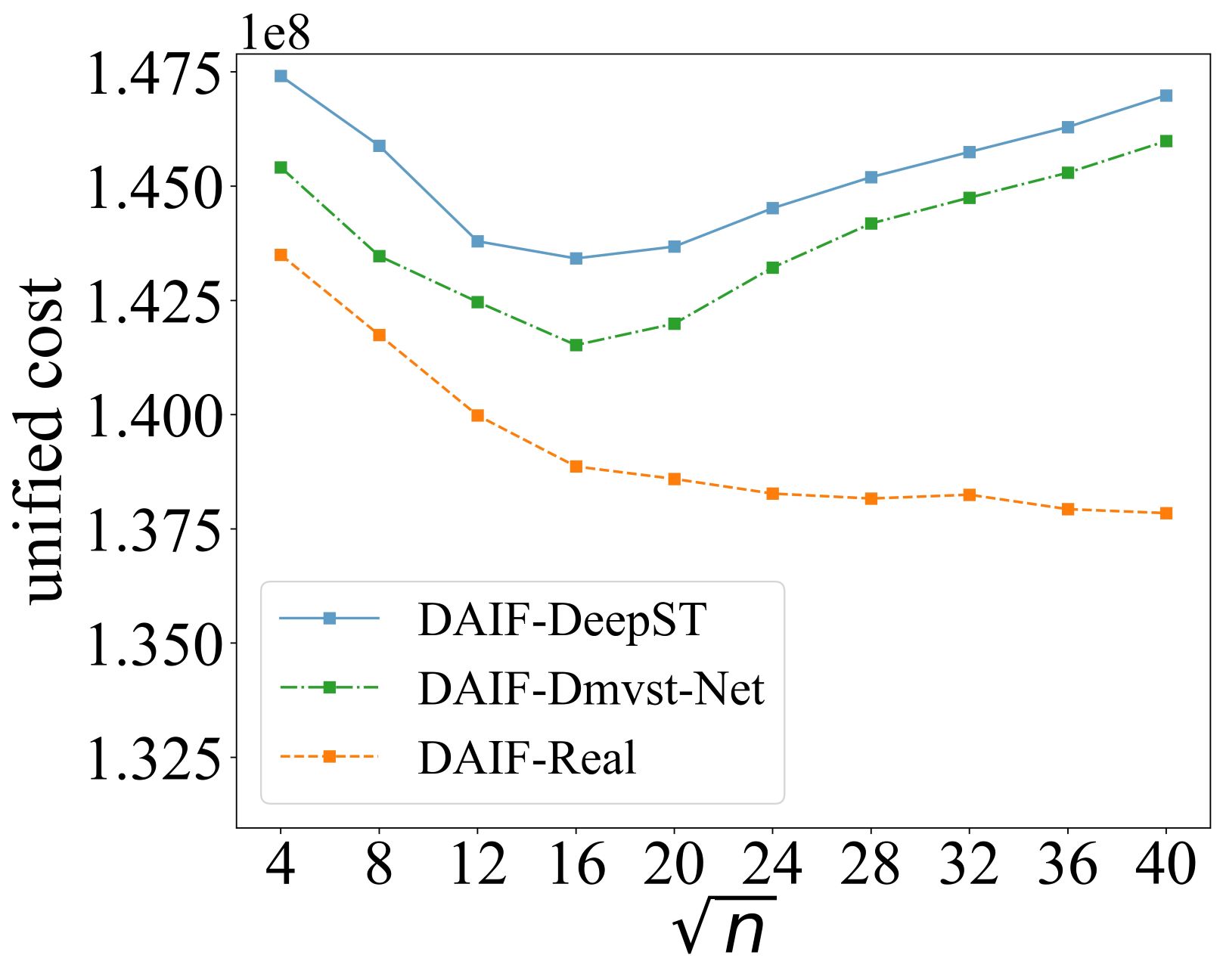}}
		\label{fig:nyc_totall_revenue_route}}\vspace{-1ex}
	\caption{Effect of $n$ on Route Planning (NYC)}
	\label{fig:route_planning_nyc}\vspace{-3ex}
\end{figure}

Table \ref{tab:promotion} shows the improvement of the original algorithm by selecting the optimal grid size with DeepST as the prediction model on NYC. Original $n$ represents the default value of $n$ set in \cite{tong2017flexible, cheng2021queueing, wang2020demand}, while optimal $n$ denotes the optimal grid size found by our GridTuner. The results show that both POLAR and DAIF can achieve performance gains with the optimal grid size. Due to the selection of the default $n$ in the existing paper \cite{tong2017flexible} is close to the optimal $n$, the performance of LS has no obvious improvement.}

\begin{table}[h!]
	\centering 
	{\small\scriptsize
		\caption{\small Promotion of the prediction-based algorithms}
		\label{tab:promotion}
		\begin{tabular}{l|l|lll}
			Metric&Algorithm&Optimal $n$&Original $n$&Improve ratio\\
			\hline
			{Served Order Number}&POLAR&$16\times 16$&$50\times 50$&$13.6\%$\\
			{Total Revenue}&POLAR&$16\times 16$&$50\times 50$&$8.97\%$\\
			\hline
			{Total Revenue}&LS&$20\times 20$&$16\times 16$&$0.13\%$\\
			{Served Order Number}&LS&$20\times 20$&$16\times 16$&$0.7\%$\\
			\hline
			{Unified Cost}&DAIF&$16\times 16$&$12\times 12$&$0.76\%$\\
			{Served Requests}&DAIF&$20\times 20$&$12\times 12$&$3.35\%$\\
			\hline
			\hline
		\end{tabular}
	}\vspace{-1ex}
\end{table}

\subsection{Experiment Result of Optimization Searching Algorithms}

Since the mean of the event quantity in the same grid varies in different periods of a day, the expression error of each time slot is different, leading to the different optimal solutions of each time slot. 
\revision{
In this section, we use the algorithms proposed in Section \ref{sec:solution2} to calculate the optimal partition scheme of different cities and compare the performance of them with the \textbf{Brute-force Search} (i.e., traverses all the values to find the optimal $n$).
We use three indicators to measure the quality of the solution found by the algorithm and the efficiency of the algorithm: \textbf{cost} denotes the cost of time; \textbf{probability} denotes the probability of obtaining the optimal solution (i.e., the number of finding out the optimal solution divided by the number of time slots); \textbf{optimal ratio ($OR$)} is denoted as
$
OR=\frac{o_a}{o_r}
$,
where $o_r$ denotes the optimal order count served by driver while using the POLAR as dispatching algorithm in NYC and $o_a$ represents the results optimized by the algorithm.
}

\revision{
	\begin{table}[t!]
		\centering
		{
			\caption{\small Performance of the algorithms.} \label{table3}
			\begin{tabular}{l|l|ccc}
				{\bf City} & {\bf Algorithm} & {\bf Cost (h)}&{\bf Probability}&{\bf OR} \\ \hline
				{NYC}&Ternary Search&7.03&52.08$\%$&97.83$\%$\\
				{NYC}&Iterative Method&{\bf5.58}&{\bf81.25$\%$}&{\bf98.77}$\%$\\
				{NYC}&Brute-force Search&47.43&100.00$\%$&100.00$\%$\\\hline
				{Chengdu}&Ternary Search&6.32&70.83$\%$&98.35$\%$\\
				{Chengdu}&Iterative Method&4.53&{\bf95.83$\%$}&{\bf99.77$\%$}\\
				{Chengdu}&Brute-force Search&43.26&100.00$\%$&100.00$\%$\\\hline
				{Xi'an}&Ternary Search&{\bf3.90}&60.42$\%$&97.98$\%$\\
				{Xi'an}&Iterative Method&$3.31$&{\bf91.67$\%$}&{\bf99.57$\%$}\\
				{Xi'an}&Brute-force Search&$21.76$&100.00$\%$&100.00$\%$\\\hline\hline
			\end{tabular}
		}\vspace{-2ex}
	\end{table}
	
The experimental results in Table \ref{table3} show that Ternary Search and Iterative Method both can greatly reduce the time cost of finding the optimal solution compared with the Brute-force Search. Meanwhile, both algorithms can find the global optimal solution with high probabilities. With the reasonable choice of bound and initial position of Iterative Method, its execution efficiency and probability of finding the optimal solution are better than that of Ternary Search. According to Table \ref{table3}, sub-optimal solutions achieved by Ternary Search are at most 3$\%$ less than the optimal results (and 1.5$\%$ for Iterative Method), which shows the effectiveness of our grid size selection algorithms. 
}

\noindent\textbf{Summary:}
The experimental results show that the larger real error often leads to the decrease of the payoff of the dispatching algorithms. At the same time, Ternary Search and Iterative Method proposed in this paper can effectively find the optimal solution to the OGSS by minimizing the upper bound of the real error. Furthermore, this paper improves the effect of the original algorithm by selecting a reasonable size $n$. Specially, the performance of POLAR improves by $13.6\%$ on served order number and $8.97\%$ on total revenue. Finally, this paper also studies the influence of different traffic prediction algorithms on the optimal size of MGrids. The results show that when the accuracy of the prediction algorithm is high, the whole space can be divided into more MGrids to reduce the expression error. On the contrary, when the accuracy of the prediction algorithm is low, we need to make the area of a MGrid larger to reduce the model error.

\section{Related Work}
\label{sec:related}
In recent years, with the rise of various online taxi-hailing platforms, more and more researchers have been working on how to assign tasks to workers. Summarized the publishing models in \cite{kazemi2012geocrowd}, the task assignment problem mainly is classified in two modes: worker selected task and server assigned task. Compared with the former, the latter is easier to find the optimal global solution, and the major online taxi-hailing platforms mainly adopt the latter, which attracts more and more researchers' attention.

There are two main modes of order distribution on the platform: online \cite{tong2017flexible, cheng2019queueing, tong2016online, wang2020demand, asghari2018adapt} and the other is offline \cite{zheng2018order, ma2013t, thangaraj2017xhare, chen2018price}. The online task assignment faces more tremendous challenges than the offline task assignment due to the lack of follow-up order information. However, the emergence of traffic prediction technology \cite{zhang2017deep, yao2018deep, guo2019attention} has solved this problem well. With the continuous improvement of these traffic forecasting work, the demand-aware algorithm \cite{tong2017flexible, cheng2019queueing, wang2020demand, zhao2020predictive, asghari2018adapt} for task assignment has more advantages than some traditional algorithm \cite{kazemi2012geocrowd, karp1990optimal, chen2018price, zheng2018order}. The optimal solution of task assignment based on the supply and demand can be approximately equivalent to the optimal solution of offline task assignment problem when the prediction result of the traffic prediction algorithm is close to the real.

Several traffic prediction methods \cite{zhang2017deep, yao2018deep, guo2019attention, he2015high, cressie2015statistics} divides the entire space into grids based on latitude and longitude and then predicts the number of orders in each region. The residual network is introduced into the traffic prediction in \cite{zhang2017deep} so that the deep neural network can better reduce the deviation between the predicted results and the actual results. \cite{yao2018deep} tries to combine different perspectives to predict future order data, including time perspective, space perspective, and semantic perspective. The results show that the multi-view spatiotemporal network can improve the prediction performance of the model. In addition, an attention mechanism is introduced in \cite{guo2019attention} to mine the dynamic spatiotemporal correlation of traffic data to optimize the prediction performance.

Combining the result of predicted future distribution of tasks, algorithms for task assignment can better solve the problem. The work \cite{cheng2019queueing} proposed a framework based on queuing theory to guide the platform for order dispatching, which used queuing theory combined with the distribution of future orders and drivers in the region to predict the waiting time of drivers before they received the next order after sending the current order to the destination. In addition, a two-stage dispatching model is proposed in \cite{tong2017flexible}. In the first stage, the platform will pre-assign drivers based on the predicted number of regional orders and direct them to the likely location of the orders, while in the second stage, it will assign the actual orders.

The accuracy of traffic prediction will significantly affect the performance of this algorithm. However, The order dispatching algorithms pay attention to the model error and the expression error caused by the uneven distribution of orders in the grids. Our problem mainly focuses on how to divide model grid to balance the model error and the expression error to improve the effectiveness of the order dispatching algorithm based on supply and demand prediction.

\section{Conclusion}
\label{sec:conclusion}

In this paper, we propose a more fine-grained measure of prediction bias, namely real error, and investigate how to minimize it. The real error is mainly composed of expression error and model error. Expression error is caused by using the order quantity of large regions to estimate the number of spatial events of HGrids, while  model error  is the inner error of the prediction model. We show that the summation of the expression error and the model error is the upper bound on the real error. We solve expression error and model error and analyze the relationship between them and MGrids. Through the above analysis, we propose two algorithms to minimize the real error as much as possible by minimizing its upper bound. Finally, we verify the effectiveness of our algorithm through experiments and analyze the role of real error for spatiotemporal prediction models.


\newpage

\bgroup\small
\bibliographystyle{ieeetr}
\let\xxx=\bibitem\def\bibitem{\par\vspace{1mm}\xxx}
\bibliography{../references/add}
\egroup

\appendix

\subsection{Distributions of Order Datasets}
\label{append:distribution_orders}
Figure \ref{fig:order_distribution} shows the distribution of orders in test data set from 8:00 A.M. to 8:30 A.M.

\begin{figure*}[t!]
	\centering
	\subfigure[][{\scriptsize NYC}]{
		\scalebox{0.65}[0.65]{\includegraphics{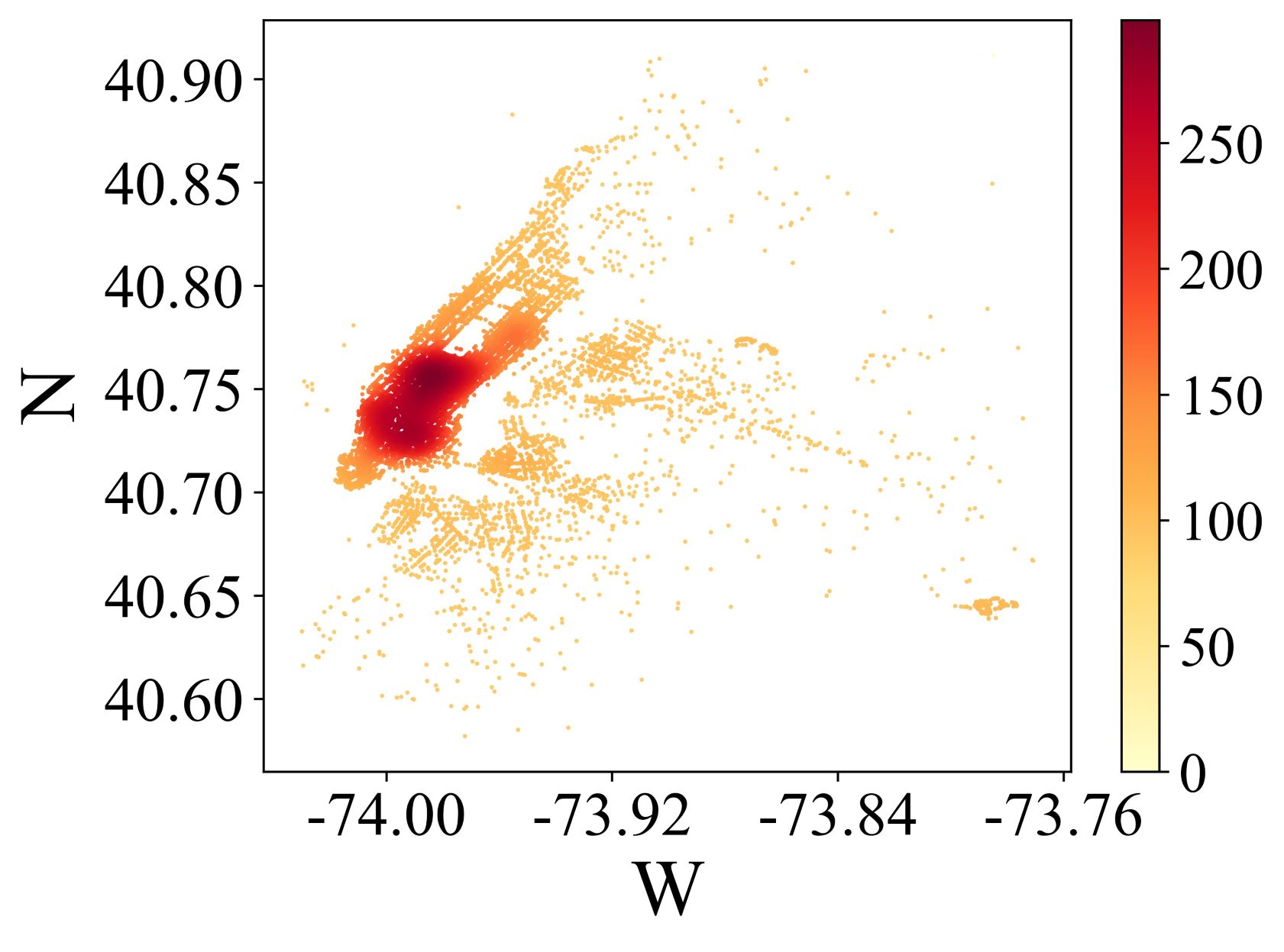}}
		\label{fig:nyc_distribution}}
	\subfigure[][{\scriptsize Chengdu}]{
		\scalebox{0.65}[0.65]{\includegraphics{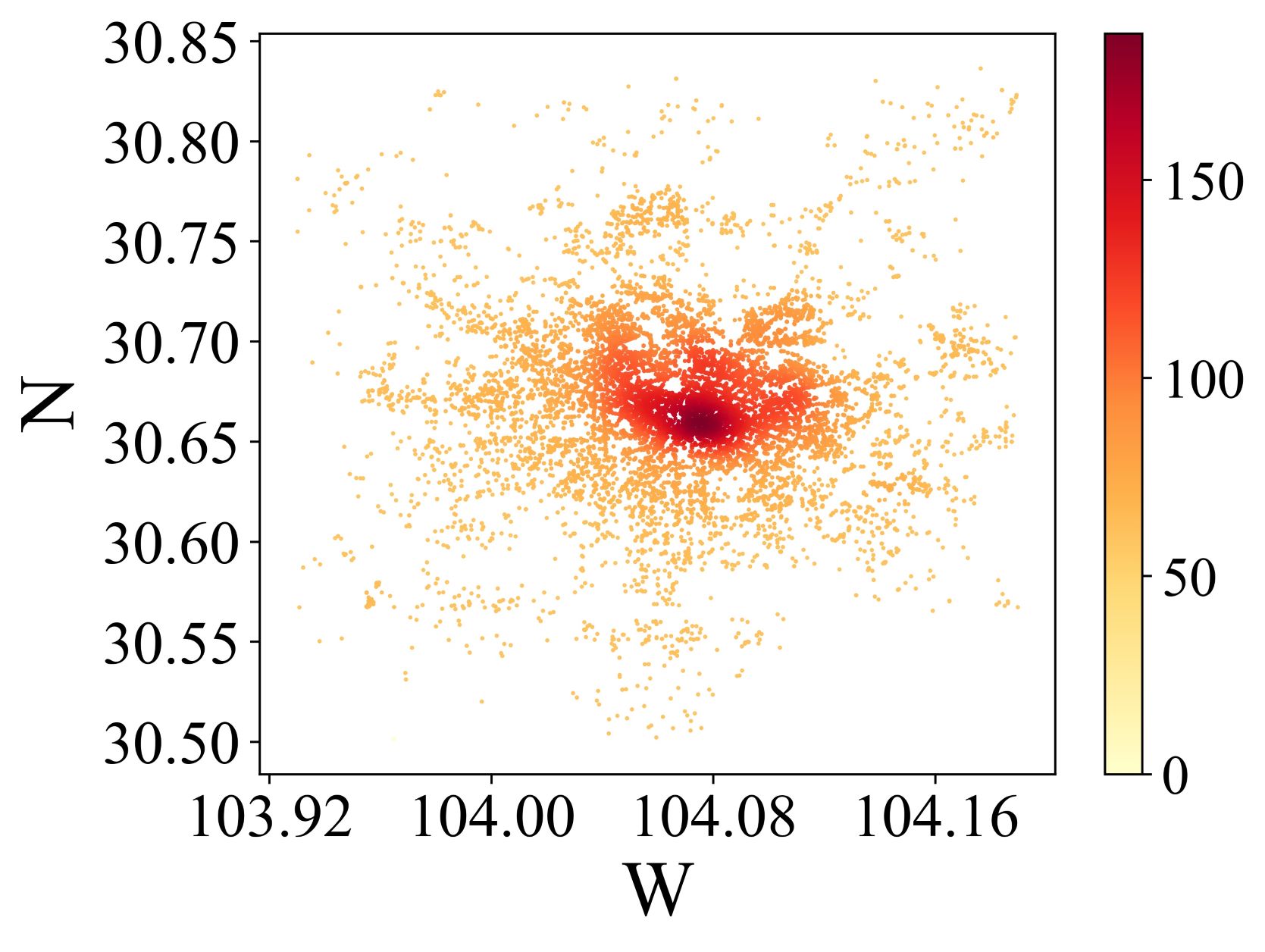}}
		\label{fig:chengdu_distribution}}
	\subfigure[][{\scriptsize Xi'an}]{
		\scalebox{0.65}[0.65]{\includegraphics{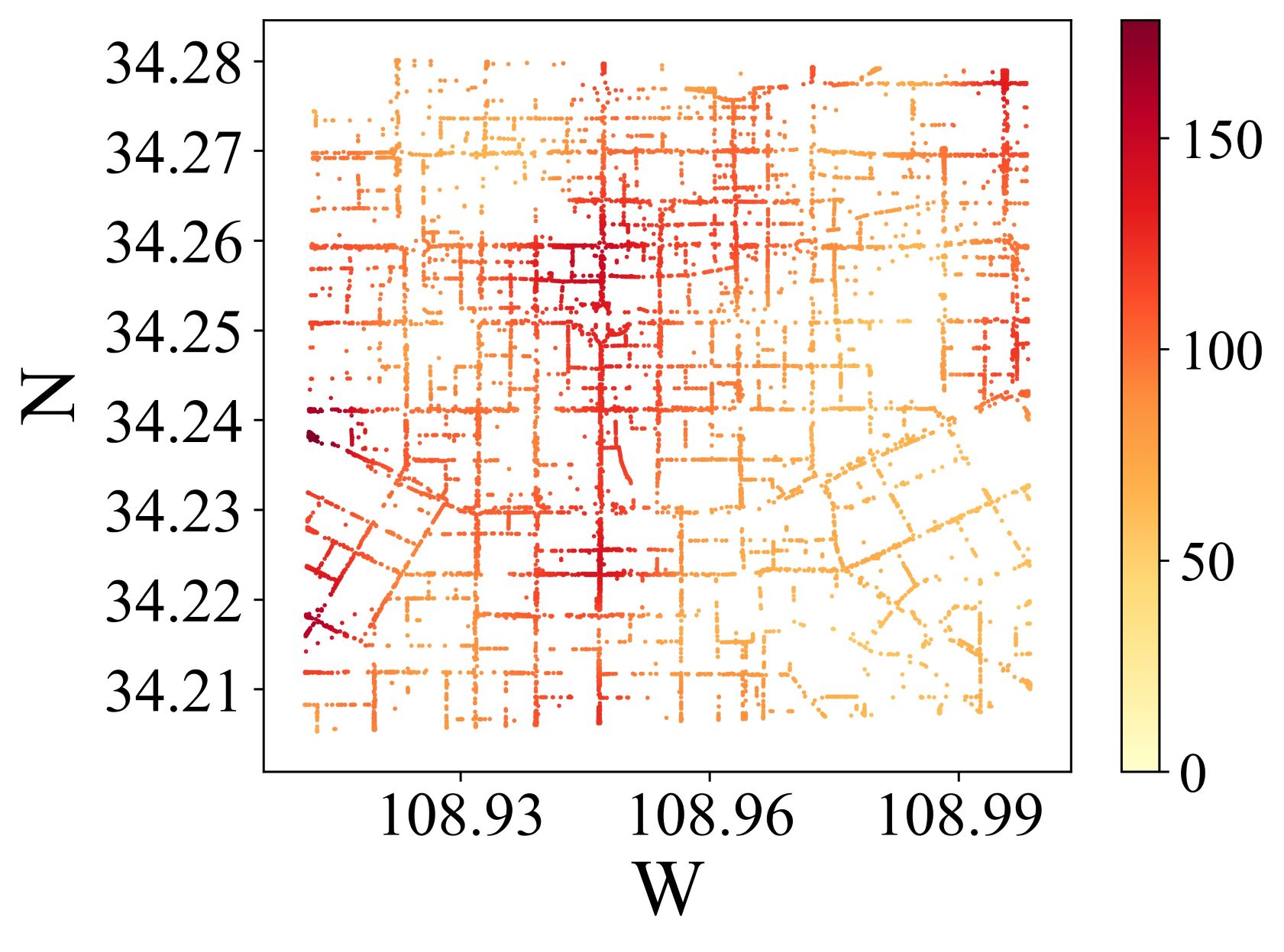}}
		\label{fig:xian_distribution}}
	\caption{Order Distributions in NYC, Chengdu and Xi'an}
	\label{fig:order_distribution}
\end{figure*}

The number of trips in the testing day is: 282,255 for NYC dataset, 238,868 for Chengdu, and 109,753 for Xi'an. We also analyze the distribution of the length of the trips in three cities as shown in Figure \ref{fig:trip_length_distribution}.
The lengths of trips in Chengdu are generally evenly distributed, but the number of long trips (longer than 45 km) is more than 1,000. The taxi trips in NYC mainly happened in Manhattan district, thus most trips are shorter than 15 km. For Xi'an dataset, since the spatial area is relatively small, most trips are shorter than 10 km.

\begin{figure*}[t!]
	\subfigure[][{\scriptsize Chengdu}]{
		\scalebox{0.35}[0.35]{\includegraphics{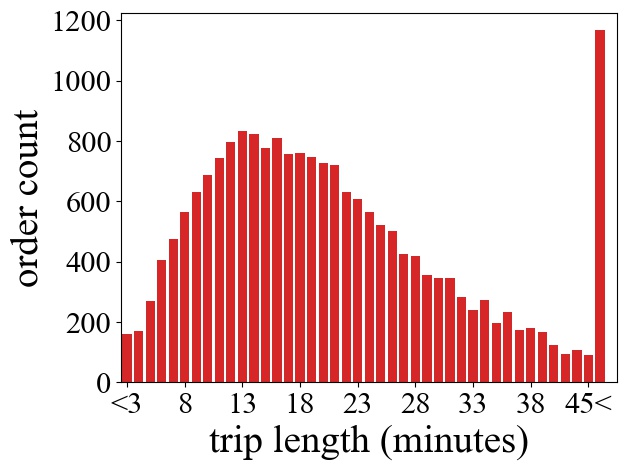}}
		\label{fig:chengdu_trip_length_distribution}}
	\subfigure[][{\scriptsize NYC}]{
		\scalebox{0.35}[0.35]{\includegraphics{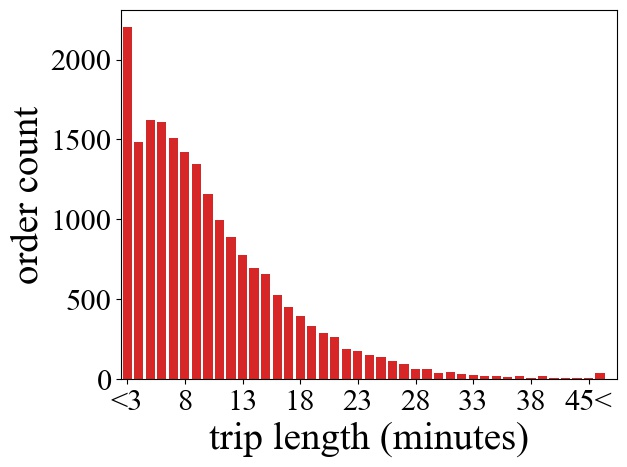}}
	\label{fig:nyc_trip_length_distribution}}
	\subfigure[][{\scriptsize Xi'an}]{
		\scalebox{0.35}[0.35]{\includegraphics{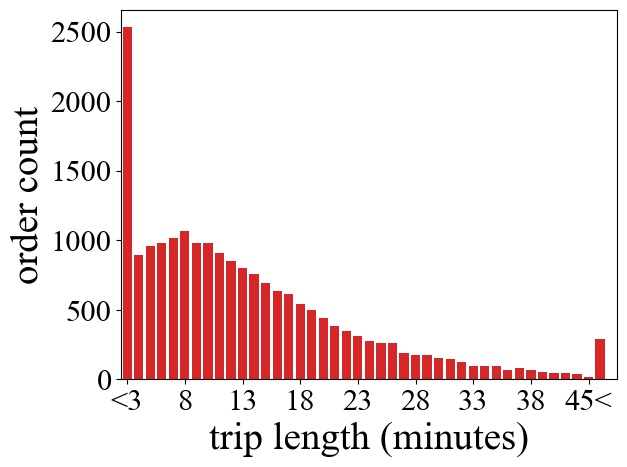}}
		\label{fig:xian_trip_length_distribution}}
	\caption{\small Distribution of Trip Length in Different Cities}
	\label{fig:trip_length_distribution}
\end{figure*}

\subsection{Relationship between Expression Error and the Uniformity of the Distribution}
\label{sec:realation_uniformity}
Expression error refers to the error caused by estimating the number of events $\hat{\lambda}_{ij}$ in a HGrid $r_{ij}$ with the prediction result $\hat{\lambda}_i$ of the MGrid $r_{i}$. The uniformity of the distribution of events within the MGrid will lead to a large expression error. We use $D_\alpha\left(N\right)$ to represent the degree of unevenness in the distribution of events within a MGrid. In the experiment, we set the parameter $m$ as $8\times 8$ with $n=16\times 16$, which means each MGrid has an area of $3.7578$ $km^2$. Then we calculate the imbalance $D_\alpha\left(64\right)$  of event distribution within each MGrid and the summation of the expression error $E\left(i,j\right)$ of each HGrid $r_{ij}$ in the MGrid $r_{i}$. Then, we plot the corresponding relationship between them into a scatter diagram.

Figure \ref{fig:d_alpha_e} shows that the expression error gradually increases with the increase of imbalance of event distribution within a MGrid. On the other hand, we find that many MGrids has $D_\alpha\left(64\right)<10$ because events are unevenly distributed in New York City, leading to the scarcity of events in several grids.

The distribution of events in two different MGrids shown in Figure \ref{fig:example2} where each point denotes a spatial event. The event distribution shown in Figure \ref{subfig:sub_example4} is uneven. There is a large empty area in the upper left corner, and there is a long main road in the middle with lots of events. On the contrary, the event distribution in Figure \ref{subfig:sub_example3} is more uniform. The $D_\alpha\left(64\right)$ of the right grid is $25.87$ with the expression error of $39.90$, while the $D_\alpha\left(64\right)$ of the left grid is $8.47$ with the expression error of $8.8$.
\begin{figure}[t!]
	\subfigure[][{\scriptsize a Grid with  $D_\alpha=25.87$}]{
		\scalebox{0.5}[0.5]{\includegraphics{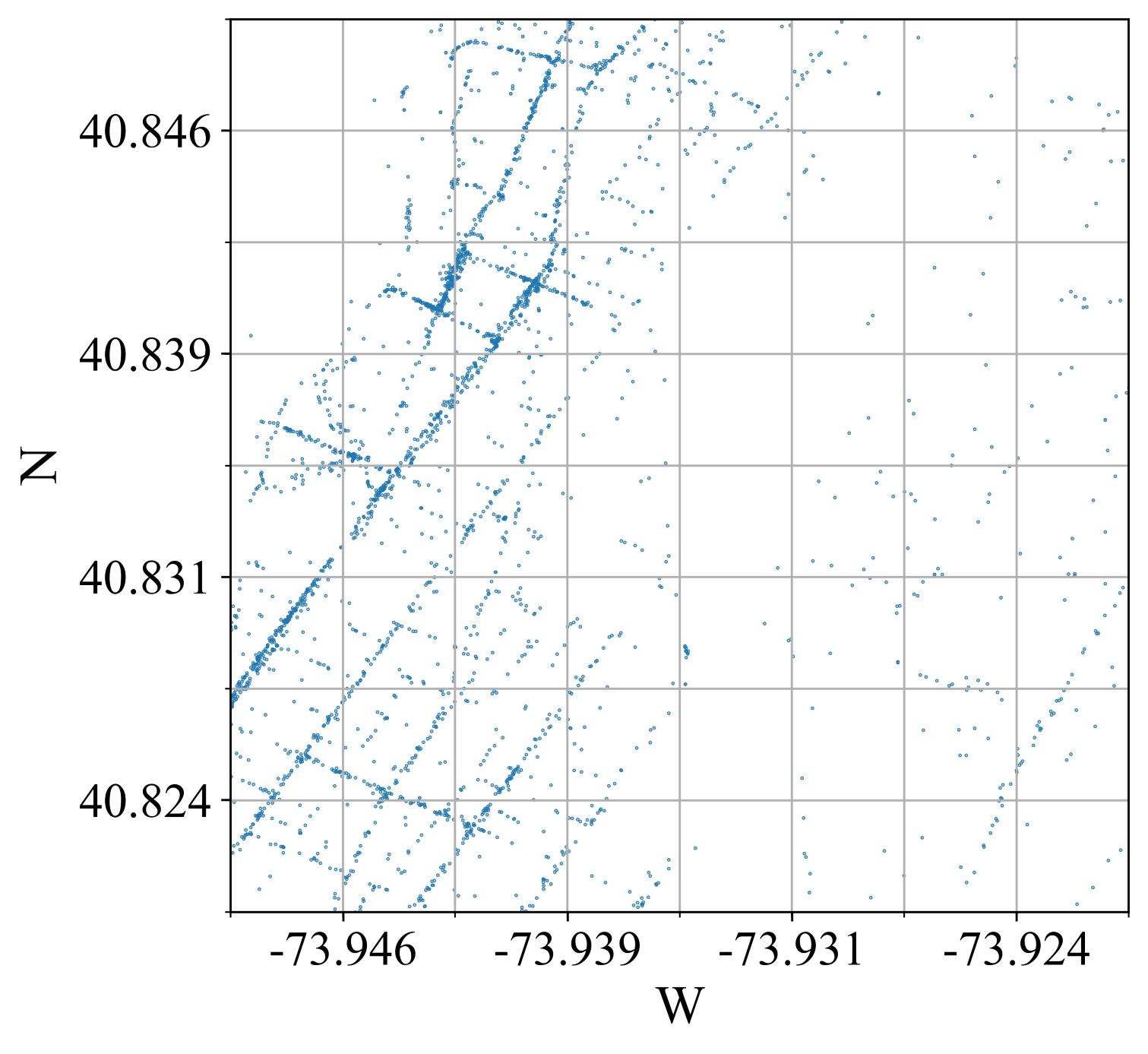}}
		\label{subfig:sub_example4}}
	\subfigure[][{\scriptsize a Grid with  $D_\alpha=8.8$}]{
		\scalebox{0.5}[0.5]{\includegraphics{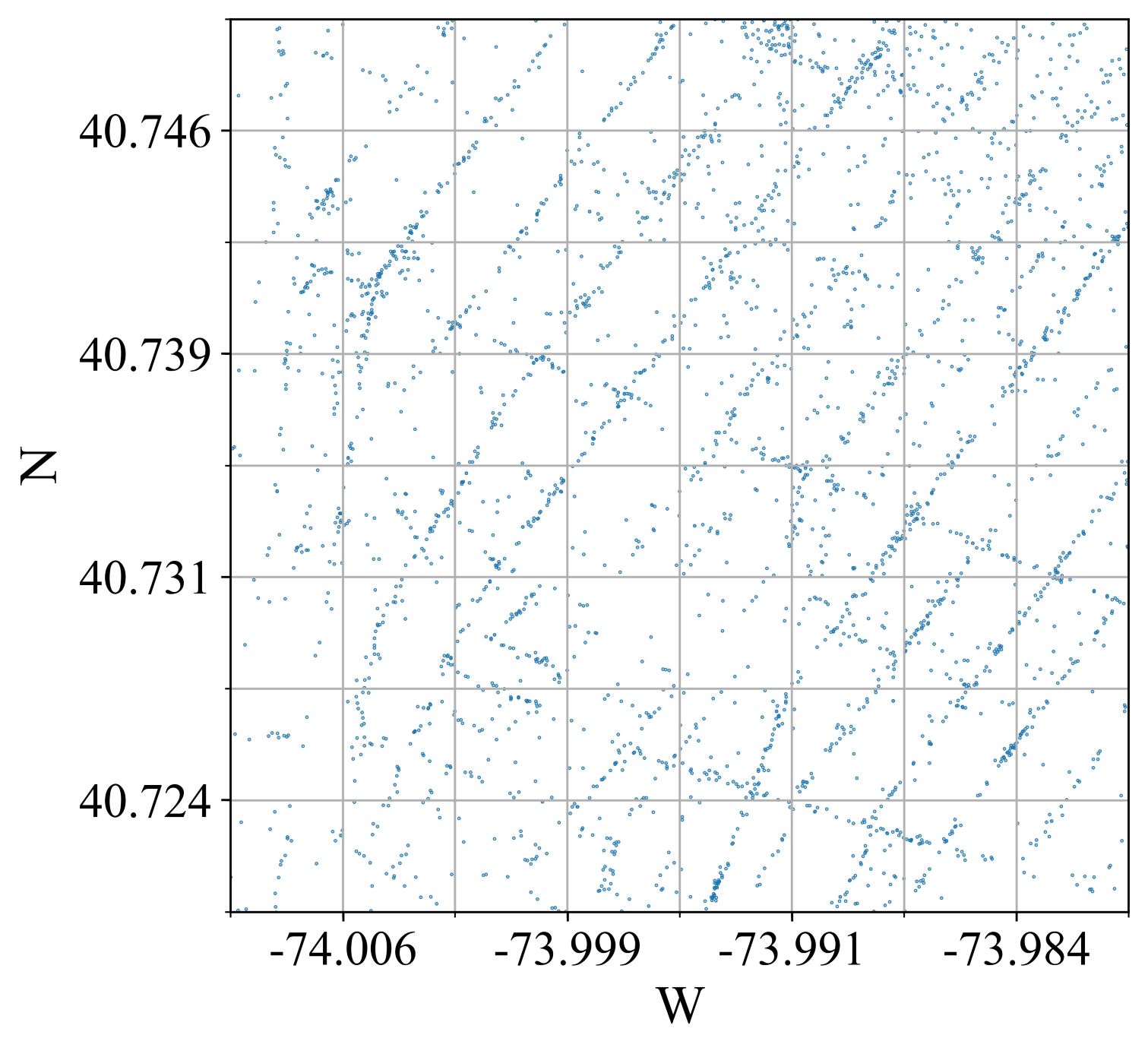}}
		\label{subfig:sub_example3}}	
	\caption{\small Two instances of event distribution}
	\label{fig:example2}
\end{figure}
\begin{figure}[t!]
    \centering
    \scalebox{0.5}[0.5]{\includegraphics{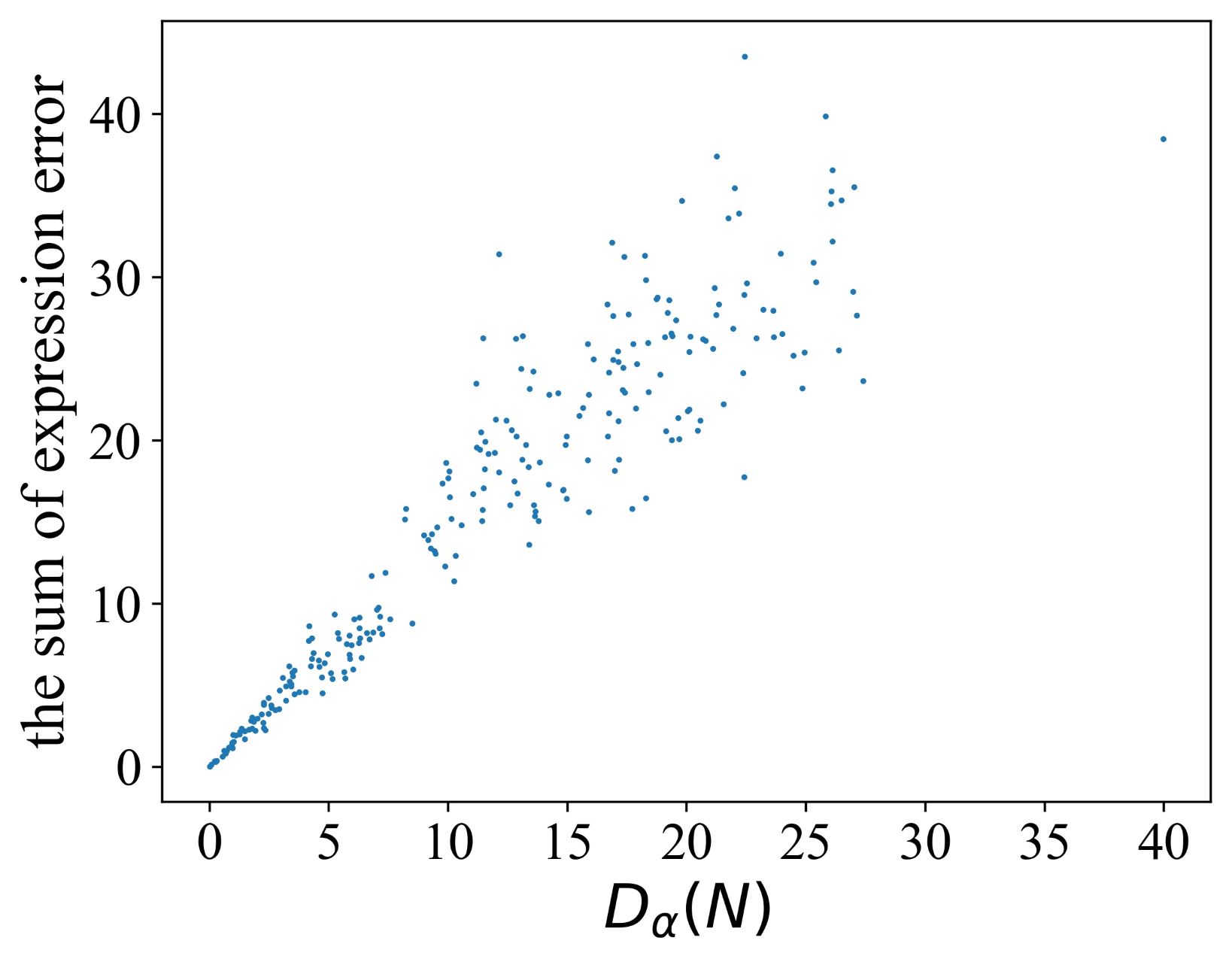}}
    \caption{\small Effect of $D_{\alpha}\left(N\right)$ on $E_e\left(i,j\right)$}
    \label{fig:d_alpha_e}
\end{figure}
\subsection{Results on HGrid Division}
The real error is defined in the HGrids. Thus, how to divide the whole space into HGrids becomes a crucial problem. We cannot guarantee that events are evenly distributed within a HGrid while the area of each HGrid is enormous. On the other hand, the calculation complexity of the expression error will be too high if the area of each HGrid is tiny.

We calculate $D_\alpha\left(N\right)$ over the whole grids with different values of $N$ based on Equation \ref{eq:d_alpha}. Then we plot the change in $D_\alpha\left(N\right)$ with respect to $N$ in two different way to estimate $\alpha_{ij}$.

\revision{
Figure \ref{fig:n_d_alpha} shows that $D_\alpha\left(N\right)$ increases with $N$ and the growth rate of $D_{\alpha}\left(N\right)$ slows down while $N$ is greater than a turning point (i.e., approximately $76$), which indicates that the distribution of events within the HGrid is uniform when $N$ is greater than $76\times 76$. However, $D_{\alpha}\left(N\right)$ continues to grow \revision{quickly} when $N$ is greater than $76\times 76$ due to the estimation of $\alpha_{ij}$ with the sample of a week or three months. Subsequent increases in $D_{\alpha}\left(N\right)$ are mainly attributed to the inaccurate estimation of the value of $\alpha_{ij}$.
}

We set $n$ to 16 and keep increasing $m$ to figure out the influence of $N$ on real error, model error and expression error. Figure \ref{fig:m_error} shows that the real error and the expression error increase as $m$ increases. \revision{When $m$ is large, the area of grids decreases, leading to the inaccurate estimation of the value of $\alpha_{ij}$. Thus, the expression error and the real error continually increase. In this paper, we want to reduce the influence of such inaccuracy on the expression error, and make the expression error to reflect the uneven distribution of events. As a result, we set $N=128\times128$ in our experiments, which allows us to reduce the influence of the inaccurate estimation of the value of $\alpha_{ij}$ and guarantee homogeneous for HGrids.}
\begin{figure}[t!]\centering
	\scalebox{0.30}[0.30]{\includegraphics{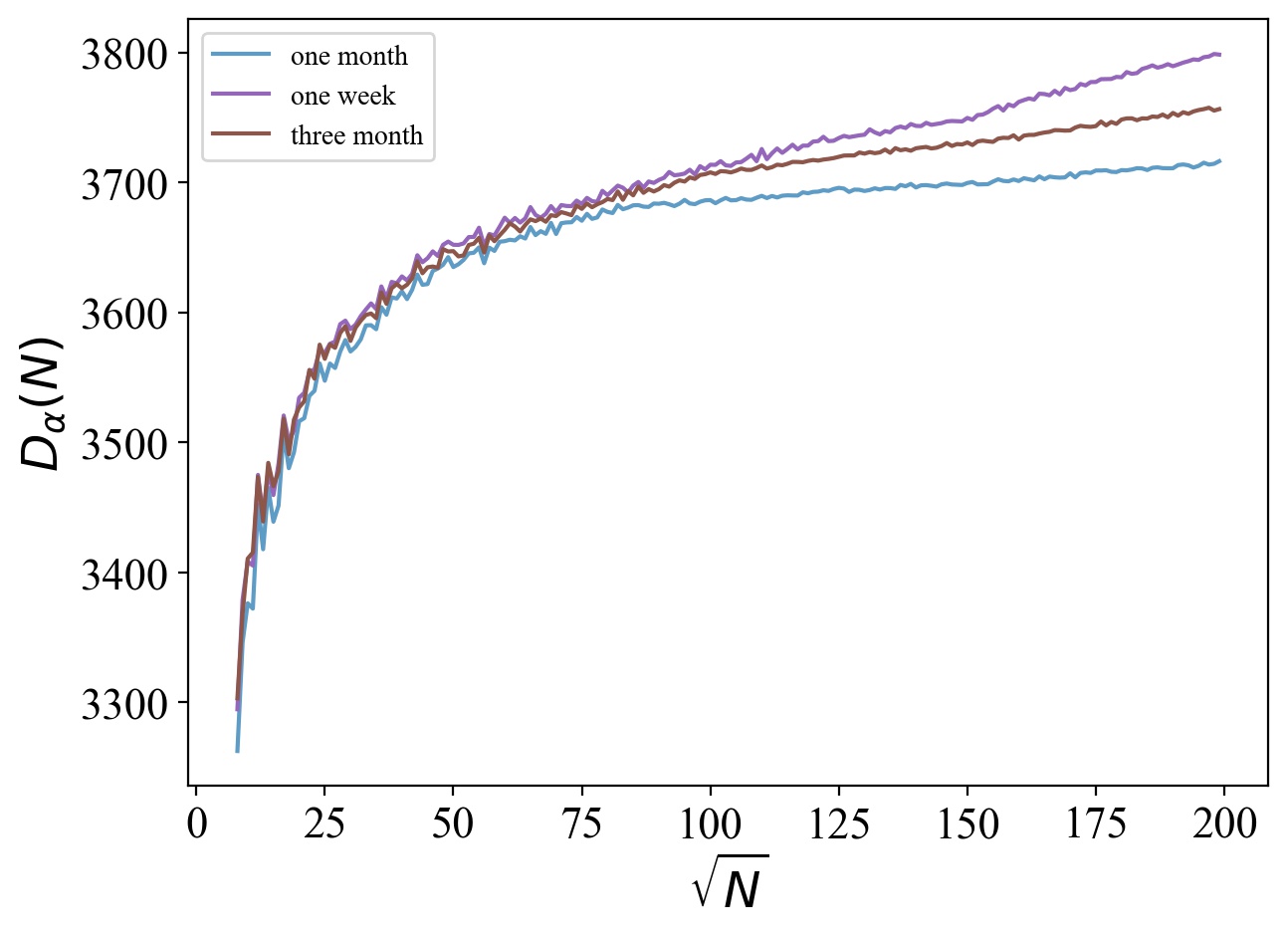}}
	\caption{\small Effect of $N$ on $D_{\alpha}\left(N\right)$}
	\label{fig:n_d_alpha}
\end{figure}
\begin{figure}[t!]
	\scalebox{0.65}[0.65]{\includegraphics{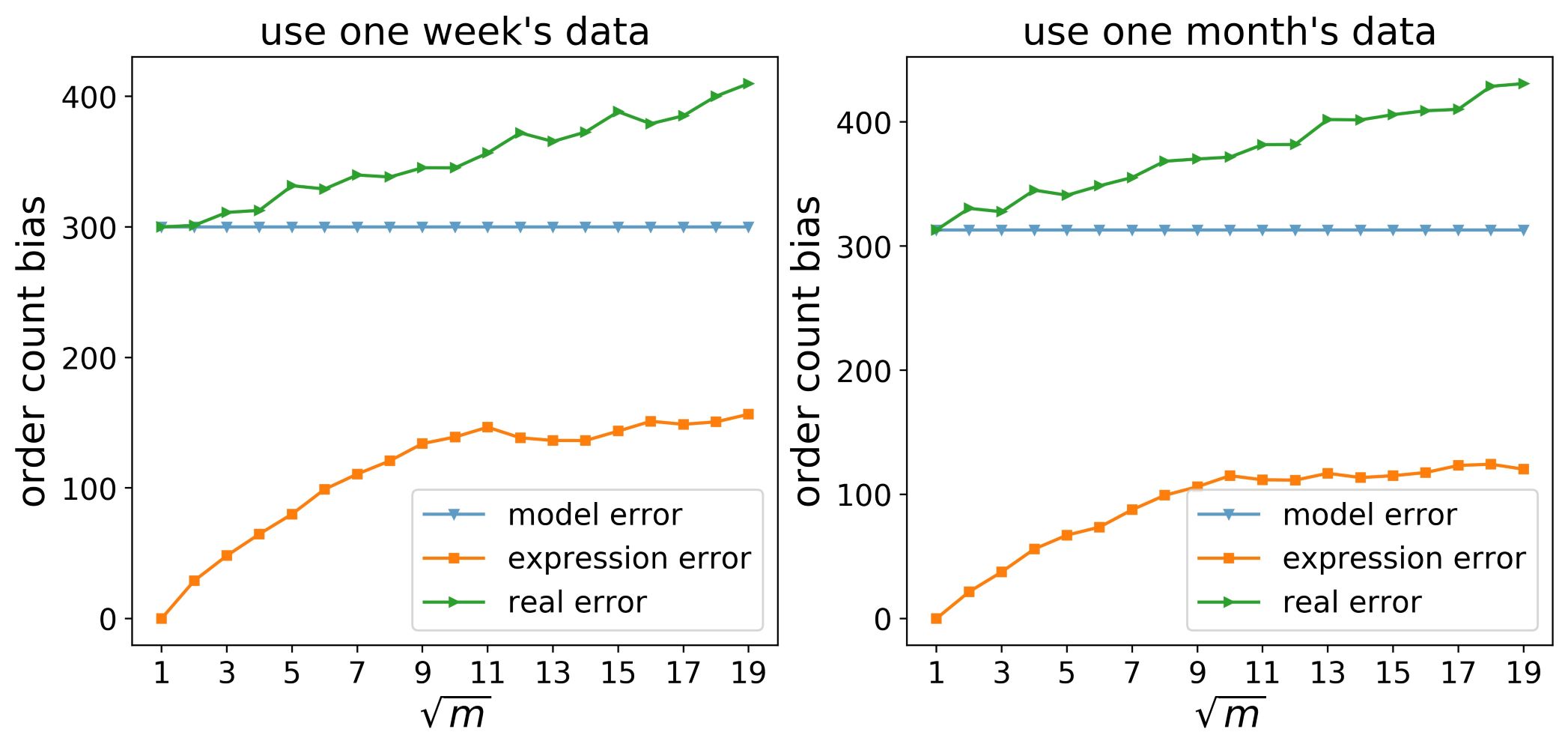}}
	\caption{\small Effect of $m$ on Expression Error, Model Error, Real Error}
	\label{fig:m_error}
\end{figure}
\subsection{Results on Calculation of Expression Error}
We  explore the performance of Algorithms \ref{algo:repre_algorithm} and \ref{algo:simple_repre_algorithm} on the calculation of expression error. We $N=128\times 128$, $n=16\times 16$ and $m=8\times8$. In this case, we calculate the expression error of a HGrid by Algorithms \ref{algo:repre_algorithm} and \ref{algo:simple_repre_algorithm}. Theorem \ref{the:arbitrary_precision} proves that the accuracy of expression error calculation increases with the increase of $K$. However, the increase in $K$ is accompanied by a rapid increase in computing costs. Figure \ref{fig:cal_for_representation} shows that the cost of the most straightforward algorithm (i.e., no optimizations are made) increases rapidly with the increase of $K$, and the calculation cost of Algorithm \ref{algo:repre_algorithm} is linearly related to $K$. However, the calculation time of Algorithm \ref{algo:simple_repre_algorithm} is always kept at a low level, which indicates the effectiveness of the algorithm.
\begin{figure}
	\scalebox{0.40}[0.40]{\includegraphics{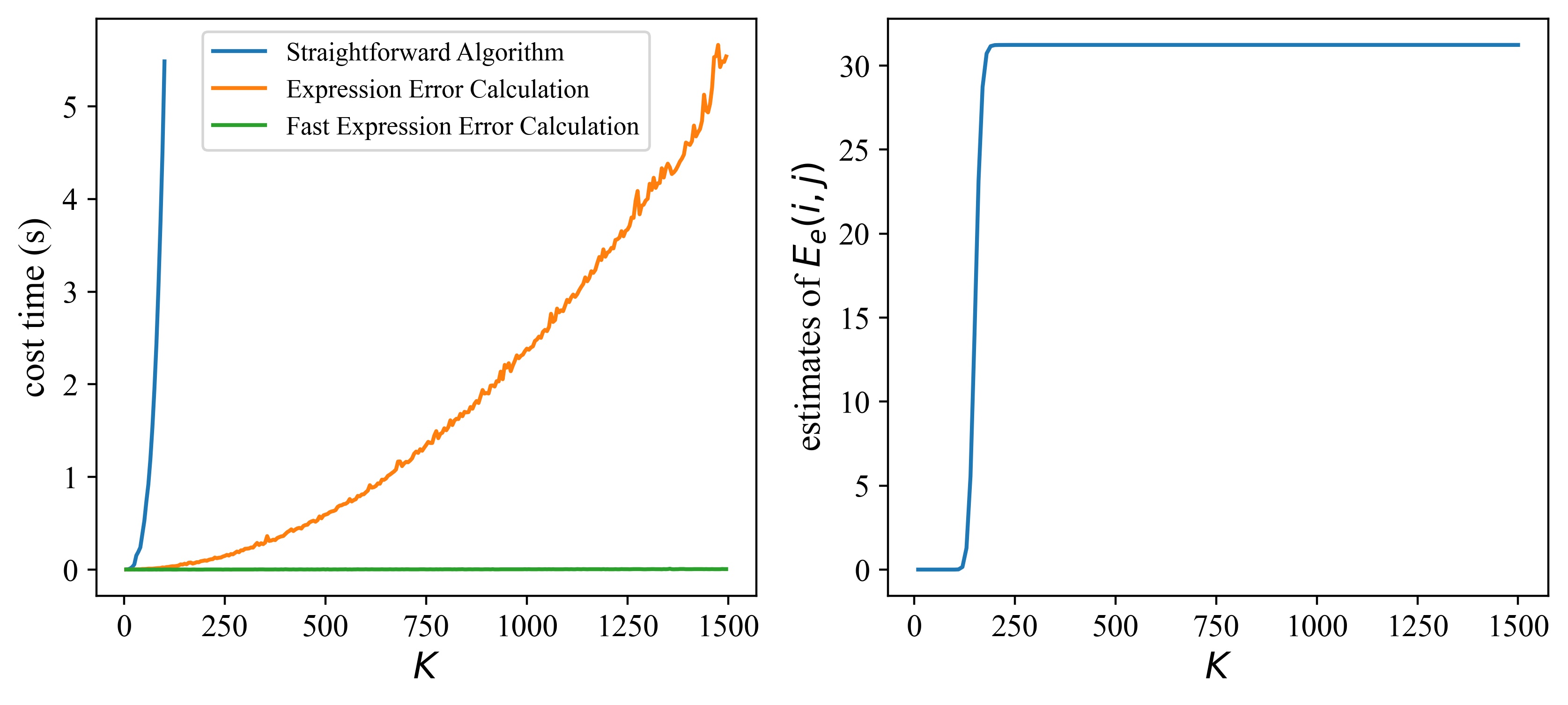}}
	\caption{\small Effect of $K$ on efficiency for computing expression and accuracy of Algorithm \ref{algo:simple_repre_algorithm}}
	\label{fig:cal_for_representation}
\end{figure}

We generally choose to set $K$ as 250 based on the result of Figure \ref{fig:cal_for_representation} to obtain a more accurate result of expression error while avoiding too much computing cost.
\subsection{Influence of $bound$ on the Performance of Iterative Method}

\begin{figure}[t!]
	\begin{minipage}{0.46\linewidth}\centering
		\scalebox{0.2}[0.2]{\includegraphics{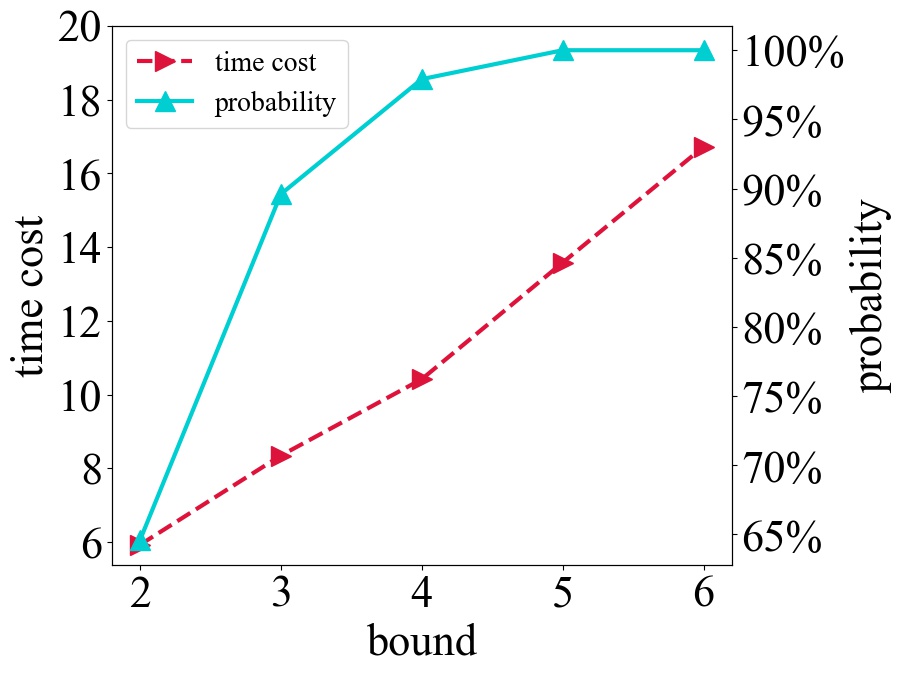}}
		\caption{\small Effect of $bound$ on the Solution and the Cost of Algorithm \ref{algo:opg_3}}
		\label{fig:bound}
	\end{minipage}
\hspace{3ex}
    \begin{minipage}{0.46\linewidth}\centering
		\scalebox{0.22}[0.22]{\includegraphics{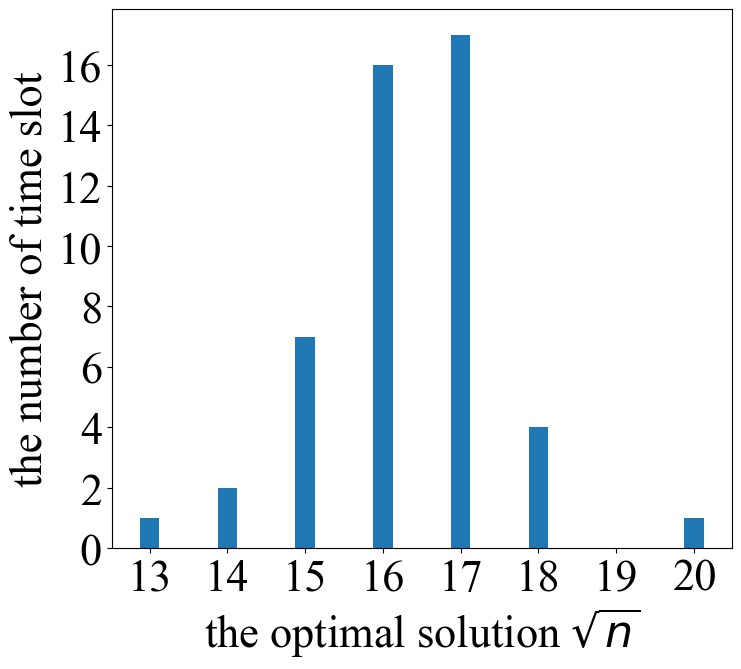}}
		\caption{\small Distribution of Optimal solutions in different time slots}
	    \label{fig:solution}
    \end{minipage}
\end{figure}

\begin{figure}[t!]
	\centering
	\scalebox{0.7}[0.7]{\includegraphics{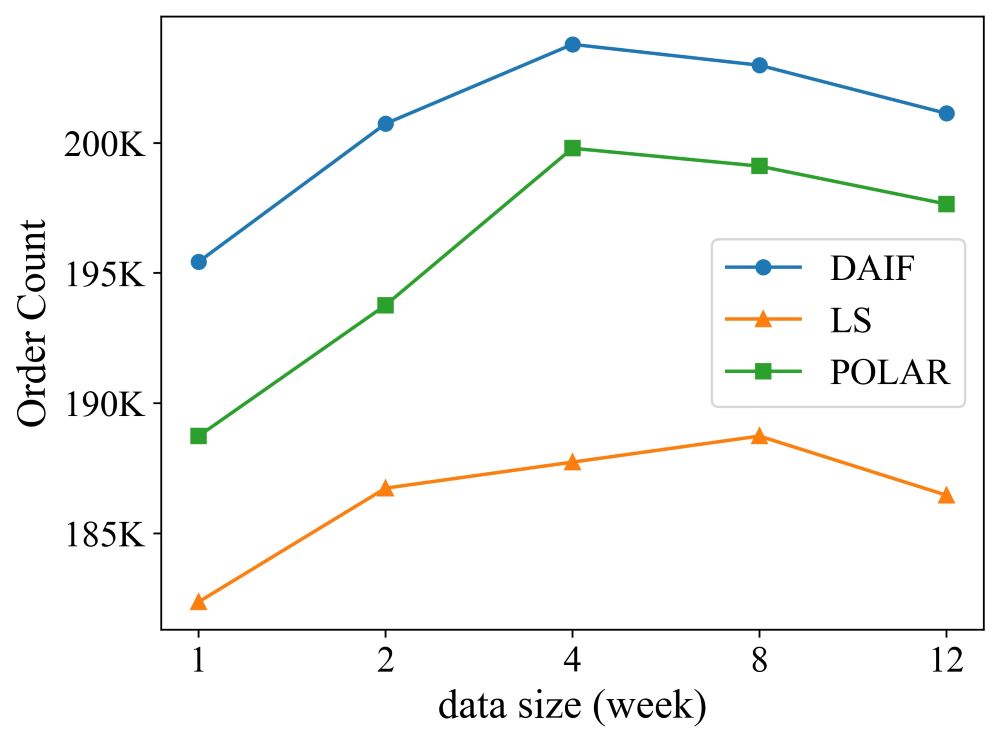}}
	\caption{\small Effect of the Size of Dataset on the Performance of Different Crowdsourcing Algorithm}
	\label{fig:vary_datasize}
\end{figure}

Figure \ref{fig:bound} shows the effect of the selection of bound on Algorithm \ref{algo:opg_3}. With the increase of $bound$, the probability of Algorithm \ref{algo:opg_3} finding the optimal solution increases gradually, and the cost of the algorithm also increases. In addition, Figure \ref{fig:solution} shows the distribution of optimal selections of $n$ in 48 periods of a day, which shows that the optimal value of $n$ is 17 for most times.

\subsection{Effect of the size of dataset}
The size of dataset should not be too large or too small. For example, 3 months data for training can harm the performance as the distribution may change, thus the estimation of $\alpha_{ij}$ can be inaccurate and cannot find the optimal $n$. As for training with too small dataset (e.g., one week), the performance also drops since the data is not sufficient to train the prediction models. As shown in Figure \ref{fig:vary_datasize}, the results are best when we use 4 weeks' data as the training set.

\end{document}